\documentclass[11pt]{article}

\usepackage[utf8]{inputenc}
\usepackage[T1]{fontenc}

\usepackage{fullpage}

\usepackage{url}            %
\usepackage{hyperref}       %
\usepackage{booktabs}       %

\usepackage{amsthm}  
\usepackage{amsmath}
\usepackage{amsfonts}
\usepackage{amssymb}
\usepackage{tikz} 
\usepackage{eqparbox}
\usepackage{mathtools}
\usepackage{cases}
\usepackage{xspace}
\usepackage{enumerate}
\usepackage{thmtools}
\usepackage{thm-restate}
\usepackage{multirow}
\usepackage{subcaption}
\usepackage{algorithm}
\usepackage{algorithmic}
\usepackage{cleveref}
\usepackage{color}
\usepackage{tcolorbox}
\usepackage{cite} %
\usepackage{xcolor}
\usepackage{xpatch}
\usepackage{bm}

\usetikzlibrary{calc,shapes.geometric, backgrounds}
\usetikzlibrary{decorations.pathreplacing,calligraphy}

\usepackage{libertine}
\usepackage[libertine]{newtxmath}
\usepackage{microtype}

\newcommand{\inputpreds}{input predictions\xspace}

\newcommand{\subr}{\textsc{Partial}\xspace}
\newcommand{\onst}{\mathrm{ON}^{\mathrm{ST}}}
\newcommand{\onsf}{\mathrm{ON}^{\mathrm{SF}}}
\newcommand{\onfl}{\mathrm{ON}^{\mathrm{FL}}}
\newcommand{\on}{\mathrm{ON}}
\newcommand{\hB}{{\bm\hat{B}}}

\newcommand{\algC}{C}
\newcommand{\optC}{C^*}
\newcommand{\optT}{T^*}
\newcommand{\predC}{{\bm\hat{C}}}
\newcommand{\appC}{{\bm\tilde{C}}}
\newcommand{\predT}{{\bm\hat{T}}}
\newcommand{\predR}{{\bm\hat{R}}}
\newcommand{\hx}{{\bm\hat{x}}}
\newcommand{\hr}{{\bm\hat{r}}}
\newcommand{\hs}{{\bm\hat{s}}}
\newcommand{\htt}{{\bm\hat{t}}}
\newcommand{\tlast}{r_{last}}
\newcommand{\tabort}{t_{abort}}
\newcommand{\Cabort}{C_{abort}}

\newcommand{\errhyp}{\ensuremath{\Lambda}\xspace}

\newcommand{\gammatsp}{\gamma^{\mathrm{TSP}}}
\newcommand{\gammadar}{\gamma^{\mathrm{DaRP}}}
\newcommand{\gammast}{\gamma^{\mathrm{ST}}}
\newcommand{\gammasf}{\gamma^{\mathrm{SF}}}
\newcommand{\gammafl}{\gamma^{\mathrm{FL}}}

\newcommand{\oldarp}{{\sc OlDARP}\xspace}
\newcommand{\oltsp}{{\sc OlTSP}\xspace}

\newcommand{\replan}{\textsc{Replan}\xspace}

\newcommand{\ignore}{\textsc{Ignore}\xspace}
\newcommand{\smart}{\textsc{SmartStart}\xspace}
\newcommand{\predreplan}{\textsc{PredictReplan}\xspace}
\newcommand{\mrin}{\textsc{Mrin}\xspace}

\newcommand{\OPT}{\ensuremath{\mathrm{OPT}}\xspace}
\newcommand{\ALGOHL}{\textsc{MrinTrust}\xspace} %
\newcommand{\ALGOGEN}{\textsc{DelayTrust}\xspace} %
\newcommand{\ALGOSMART}{\textsc{SmartTrust}\xspace}
\newcommand{\PR}{\textsc{PredReplan}\xspace} %
\newcommand{\cO}{\mathcal{O}}

\newcommand{\cA}{\mathcal{A}\xspace}

\newcommand{\cH}{\mathcal{H}}

      \theoremstyle{plain}
  \newtheorem{theorem}{Theorem}
  \newtheorem{lemma}[theorem]{Lemma}  
  \newtheorem{corollary}[theorem]{Corollary} 
  \newtheorem{proposition}[theorem]{Proposition}  
  
  \newtheorem{observation}{Observation}
  \theoremstyle{definition}

  {\bfseries}{\itshape}

\DeclarePairedDelimiter{\abs}{\lvert}{\rvert}

\setlength\abovecaptionskip{0.1\baselineskip}
\setlength\belowcaptionskip{0pt}

\captionsetup{font=small,labelfont={bf,sf}}
\captionsetup[sub]{font=small,labelfont={bf,sf}}

\usepackage[textsize=scriptsize,color=green,textwidth=2cm]{todonotes}

\definecolor{predred}{HTML}{F53349}
\definecolor{reqgreen}{HTML}{6BC199}

\definecolor{unexpblue}{HTML}{0E6792}
\definecolor{absentorange}{HTML}{FDBD09}

\tikzset{
    req/.style = {font=\footnotesize, fill=reqgreen, circle,inner sep=0pt,minimum width=0.45cm},
    pred/.style = {font=\footnotesize, draw=predred, circle,inner sep=0pt,minimum width=0.375cm,line width=0.75mm},
    corr/.style = {font=\footnotesize, draw=predred,circle,fill=reqgreen,inner sep=0pt,minimum width=0.45cm,line width=0.75mm},
    origin/.style = {black,circle,draw,minimum width=0.5cm},
    server/.style = {black,regular polygon,regular polygon sides=4,fill,inner sep=0pt,minimum width=0.5cm},
    absent/.style = {absentorange,line width=1.5pt},
    unexp/.style = {unexpblue,line width=1.5pt},
}

\newcommand{\convexpath}[2]{
  [   
  create hullcoords/.code={
    \global\edef\namelist{#1}
    \foreach [count=\counter] \nodename in \namelist {
      \global\edef\numberofnodes{\counter}
      \coordinate (hullcoord\counter) at (\nodename);
    }
    \coordinate (hullcoord0) at (hullcoord\numberofnodes);
    \pgfmathtruncatemacro\lastnumber{\numberofnodes+1}
    \coordinate (hullcoord\lastnumber) at (hullcoord1);
  },
  create hullcoords
  ]
  ($(hullcoord1)!#2!-90:(hullcoord0)$)
  \foreach [
  evaluate=\currentnode as \previousnode using \currentnode-1,
  evaluate=\currentnode as \nextnode using \currentnode+1
  ] \currentnode in {1,...,\numberofnodes} {
    let \p1 = ($(hullcoord\currentnode) - (hullcoord\previousnode)$),
    \n1 = {atan2(\y1,\x1) + 90},
    \p2 = ($(hullcoord\nextnode) - (hullcoord\currentnode)$),
    \n2 = {atan2(\y2,\x2) + 90},
    \n{delta} = {Mod(\n2-\n1,360) - 360}
    in 
    {arc [start angle=\n1, delta angle=\n{delta}, radius=#2]}
    -- ($(hullcoord\nextnode)!#2!-90:(hullcoord\currentnode)$) 
  }
}

\makeatletter
\xpatchcmd\algorithmic
  {\ALC@linenosize \arabic{ALC@line}\ALC@linenodelimiter}
  {\ALC@linenosize \theALC@line\ALC@linenodelimiter}
  {}{\fail}

\newcommand\setAlgoLinenoFormat[1]{%
  \renewcommand*{\theALC@line}{#1{ALC@line}}}
\makeatother

\title{A Universal Error Measure for Input Predictions Applied to Online~Graph Problems}

\author{Giulia Bernardini~\thanks{University of Trieste, Italy, and CWI, Amsterdam, The Netherlands. \emph{giulia.bernardini@units.it}. Partially supported by the Netherlands Organisation for Scientific Research (NWO) through project OCENW.GROOT.2019.015 ``Optimization for and with Machine Learning (OPTIMAL)''.}  
\and Alexander Lindermayr~\thanks{Faculty of Mathematics and Computer Science, University of Bremen, Germany. \emph{\{linderal,nicole.megow\}@uni-bremen.de}. Partially supported by the German Science Foundation (DFG) under contract 146371743 -- TRR 89 Invasive Computing.}
\and Alberto Marchetti-Spaccamela~\thanks{La Sapienza University of Rome, Italy, and INRIA-Erable, France. \emph{alberto@diag.uniroma1.it}.} 
\and Nicole Megow~\footnotemark[2]
\and
Leen Stougie~\thanks{CWI and Vrije Universiteit, Amsterdam, The Netherlands, and INRIA-Erable, France. \emph{Leen.Stougie@cwi.nl}. Supported by the Netherlands Organisation for Scientific Research (NWO) through Gravitation-grant NETWORKS-024.002.003 and through project OCENW.GROOT.2019.015 ``Optimization for and with Machine Learning (OPTIMAL)''.}
\and Michelle Sweering~\thanks{CWI, Amsterdam, The Netherlands. \emph{Michelle.Sweering@cwi.nl}. Supported by the Netherlands Organisation for Scientific Research (NWO) through Gravitation-grant NETWORKS-024.002.003.}
}

\date{}

\begin{document}

\maketitle

\begin{abstract}
    We introduce a novel measure for quantifying the error in input predictions. The error is based on a minimum-cost hyperedge cover in a suitably defined hypergraph and provides a general template which we apply to online graph problems. The measure captures errors due to absent predicted requests as well as unpredicted actual requests; hence, predicted and actual inputs can be of arbitrary size. We achieve refined performance guarantees for previously studied network design problems in the online-list model, such as Steiner tree and facility location. Further, we initiate the study of learning-augmented algorithms for online routing problems, such as the online traveling salesperson problem and the online dial-a-ride problem, where (transportation) requests arrive over time (online-time model). We provide a general algorithmic framework and we give error-dependent performance bounds that improve upon known worst-case barriers, when given accurate predictions, at the cost of slightly increased worst-case bounds when given predictions of arbitrary quality. 
\end{abstract}

\thispagestyle{empty}

\newpage

\section{Introduction}

We develop a novel measure for quantifying the error in input predictions %
and apply it to derive error-dependent performance guarantees of algorithms for online
metric graph problems.
Online graph problems are among the most fundamental online optimization problems, where an initially unknown input is revealed incrementally. The two main paradigms for the incremental information release are the {\em online-time} model and the {\em online-list} model. In the online-time model, initially unknown requests are revealed over time and can be served any time, whereas in the online-list model, requests are revealed one-by-one and must be served immediately before the next request appears. 
In this work, we address specifically the following online routing and network design problems.
\textbf{Online-time routing problems. \ }
In the classical \emph{Online Traveling Salesperson Problem} (\oltsp) and \emph{Online Dial-a-Ride Problem} (\oldarp), a server can move at unit speed in a given metric space. %
Transportation requests appear online over time, each defining a start and end point in the metric 
space~(in the TSP both points are equal). The task is to determine a tour %
to serve 
all requests~(in any order) %
by moving to the corresponding start and end points. The objective is 
to minimize the makespan, i.e., the time point when all requests have been served and the server is 
back in the origin. These problems are well-studied~\cite{AusielloFLST01,AscheuerKR00,FeuersteinS01,BjeldeHDHLMSSS21,BlomKPS01}, as well as other related variants~\cite{JailletW08,JailletW06,LipmannLPSS04}. %

\textbf{Online-list network design problems. \ }
In the \emph{Online Steiner Tree Problem}, requests are terminal nodes that are revealed one-by-one in a given metric space (typically represented as a complete edge-weighted graph) and must be connected to a fixed root by selecting edges via other (Steiner) nodes. In the closely related \emph{Online Steiner Forest Problem}, a request is composed of two nodes which have to be connected by the selected set of edges. In both problems, the objective is to minimize the total cost of selected edges. In the more general \emph{Online Facility Location Problem}, a facility can be opened at every vertex at a certain one-time cost at any time, and arriving client vertices are connected upon arrival to the closest open facility %
at the cost of the shortest path to it. %
The goal is to minimize the opening and connection cost. 
These problems are very well-studied~\cite{Meyerson01,Fotakis08,EisenstatMS14,ImaseW91,Umboh15,BermanC97,BamasDM22,AwerbuchAB96,WestbrookY93,Angelopoulos07,DehghaniEHLS18,AnagnostopoulosBUH04,AnNS17,BuchbinderCN14,FotakisKZ21}. 

The performance of online algorithms is typically assessed by %
worst-case analysis. %
An algorithm is called~$\rho$-{\em competitive} if it computes, for any input instance,  a solution with objective value within a multiplicative factor $\rho$ of the optimal value that can be computed when knowing the full instance upfront. 
The  {\em competitive ratio} of an algorithm is the smallest factor $\rho$ for which it is $\rho$-competitive. For the above problems (nearly) tight bounds on the competitive ratio are known. 
For \oltsp and \oldarp, there have been shown best possible $2$-competitive algorithms~\cite{AscheuerKR00,AusielloFLST01}. %
For the online network design problems, the existence of $O(1)$-competitive algorithms has been ruled out~\cite{ImaseW91,Fotakis08,Meyerson01} and algorithms with (tight) (poly-)logarithmic upper bounds have been shown~\cite{BermanC97,Umboh15,ImaseW91,Meyerson01,Fotakis08}.

The assumption in online optimization of not having any prior knowledge about future requests seems overly pessimistic. In particular, given the success of machine-learning methods and data-driven applications, one may expect to have access to predictions about future requests. However, simply trusting such predictions might lead to very poor solutions, as these predictions come with no quality guarantee. 
The recent vibrant line of research initiated in~\cite{LykourisV18,LykourisV21} %
aims at incorporating such error-prone predictions into online algorithms, to go beyond worst-case barriers. The goal are {\em learning-augmented algorithms} with a performance that is close to 
that of an optimal {offline} algorithm when given accurate predictions (called {\em consistency}) and, at the same time, 
never being (much) worse than that of a best known algorithm without access to predictions (called {\em robustness}). 
Further, the performance of an algorithm shall degrade in a controlled way with increasing prediction~error. 

In this paper, we consider an {\em \inputpreds} model, i.e., there is given a set $\predR$ of predictions for the actual online input $R$ of a problem. We do not make any assumption on the quality of the prediction or on its size. In particular, $\predR$ might be substantially larger or smaller than $R$.

Defining an appropriate error measure is a crucial task in this line of research. 
There is no common agreement (yet) in the literature on what constitutes a good error measure.
The philosophy behind our error measure is the following. Any learning-augmented algorithm needs to trust the predictions to some extent, as otherwise no improvement upon an online algorithm is possible.
The error should then be able to sensitively bound how much any (reasonable) algorithm pays for erroneous predictions.
Roughly, 
our error measure achieves this by approximating
the extra cost that an %
algorithm trusting the predictions has when it serves the true instance; 
very informally, this is $\OPT_{\text{trust} \predR}(R) - \OPT(R)$.
Several natural measures (for graph problems) have been proposed, such as the number of erroneous predictions~\cite{XuM22}, the $\ell_1$-norm (e.g., distances between predicted and real points), or more involved perfect matching-based errors~\cite{AzarPT22}.
Although we cannot expect that a single error measure is appropriate for all problems,
we propose a universal template %
based on the cost of a hyperedge cover in a bipartite hypergraph that is constructed in a problem-specific way. 

\subsection{Our contributions} 
\paragraph{Cover error for \inputpreds.}
Here we sketch the main idea of our error measure, which will be made precise in Section~\ref{sec:covererror}. We separately cover the errors incurred by \emph{unexpected actual requests},~$R \setminus \predR$, and 
\emph{absent predicted requests},~$\predR \setminus R$, as these pose a potential threat 
to an algorithm which trusts $\predR$. 
For each of the two error types we consider a suitable weighted bipartite hypergraph with 
erroneous requests on the left side and define an error measure combining the costs of minimum hyperedge covers of the left side of each hypergraph.
Let us concentrate on errors due to 
unexpected actual requests, being the nodes on the left side, with the predicted requests 
as the nodes on the right side. %
Each hyperedge links a single node on the right side with a subset of the nodes on the left side which it \emph{covers.} 
Its contribution to the overall error, %
i.e., its cost in the hyperedge cover problem to cover all left side nodes, is related to the optimal cost 
for the subinstance induced by its nodes. E.g. in \oltsp~this cost is the value of an 
optimal tour for some unexpected requests (left) when starting from some predicted request (right),
which can be seen as a minimum detour that needs to be made from the predicted request 
to serve the unexpected requests. 
Bounding the number of left side requests in the hyperedges by~$k$ %
yields a hierarchical family~$\{ \errhyp_k\} _{k=1}^\infty$ of errors, with higher values of~$k$ giving errors that reflect 
more precisely the cost due to trusting wrong predictions.

The cover error fulfills several useful properties. 
First, it provides a framework that may apply to various problems by assigning appropriate costs to the hyperedges.  
E.g. for the online-time model it allows to integrate in a very
precise way actual and predicted release dates, 
as we demonstrate in Section~\ref{sec:metric}. This is a feature which previous 
metric graph errors seem to miss~\cite{AzarPT22,XuM22}, because they rely on counting incorrect predictions or disallow asymmetric cost functions. Our error also naturally supports 
different sizes of $R$ and $\predR$, reflecting the input sequence length being unknown in almost all online optimization problems in the literature. 
Although previously studied error 
measures~\cite{AzarPT22,XuM22} do support this in theory, we will show in 
Section~\ref{sec:covererror} that they %
fail to detect good predictions 
in certain scenarios, which results in 
imprecise weak performance bounds. %
In contrast, the cover error guarantees an almost optimal performance of the same algorithms in these cases.
We therefore hope that the cover error will be useful for better analyzing existing 
learning-augmented algorithms and other problems in the future.

\paragraph{Algorithms with error-sensitive performance bounds.}
Our algorithmic results are twofold: we %
provide the first learning-augmented algorithms for online-time routing problems, %
and we give new error-dependencies for existing algorithms for online-list network design problems. 
The unifying element is that we achieve these by problem-dependent implementations %
of our new cover error.
We first introduce a general framework  for \oltsp and \oldarp, in which we delay the moment in which we start following the optimal predicted tour by a multiplicative trust factor $\alpha\in (0,1)$. 
For robustness, before starting to follow the predictions any $\rho$-competitive online algorithm is executed in a  black-box fashion. We prove an error-dependency w.r.t. the first cover error in the hierarchy $\Lambda_1$ (as properly defined in Section \ref{sec:covererror}). We denote by $\optC$ the cost of an optimal tour on the actual requests $R$.

\begin{theorem}\label{thm:algogen}
\oltsp and \oldarp admit learning-augmented algorithms that use a $\rho$-competitive algorithm as a subroutine and achieve a competitive ratio that is, for any~$\alpha > 0$%
, bounded by 
\[ 
   \min\left\{(1 + \alpha) \left(1 + \frac{3 \cdot \errhyp_1}{\optC}\right),  1 + \rho + \frac{\rho}{\alpha}  \right\}.
\]
\end{theorem}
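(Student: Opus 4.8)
The bound is a minimum of two quantities, so the natural strategy is to analyze two regimes separately, corresponding to whether the online subroutine phase or the "follow-the-prediction" phase dominates. Let me set up the framework implied by the theorem statement: the algorithm runs a $\rho$-competitive online algorithm as a black box until some threshold time, and then (at a time scaled by the trust factor $\alpha$) switches to executing an optimal tour computed on the predicted request set $\predR$. I will denote by $\optC$ the optimal makespan on the true instance $R$, by $\predT$ an optimal tour on $\predR$ (with $|\predT|$ its length), and I will need the key structural fact that $|\predT|$ can be related to $\optC$ via the cover error $\errhyp_1$ (this is the part of the construction the theorem is really exploiting).

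**Step 1: the robustness branch.** First I would argue the bound $1 + \rho + \rho/\alpha$. The idea is that the algorithm commits to following the predicted tour no earlier than some time of the form $\alpha \cdot (\text{something})$; since it was running a $\rho$-competitive algorithm up to that point and the server is at unit speed, the total time is at most the switch time, plus the length of the predicted tour, plus the return-to-origin distance. The switch time is controlled by $\rho \cdot \optC$-type quantities (the online algorithm's guarantee) scaled appropriately, and the $\rho/\alpha$ term arises because the switch is triggered relative to a $(1/\alpha)$-inflated estimate. I expect this branch to be a fairly mechanical triangle-inequality argument once the exact trigger rule from the algorithm description (Section~\ref{sec:metric}) is plugged in; the main care is in accounting for requests released after the switch, which must be handled by the $\rho$-competitive completion or by the structure of the framework.

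**Step 2: the consistency/error branch.** This is where the cover error enters and where the real work lies. I want to show the makespan is at most $(1+\alpha)(1 + 3\errhyp_1/\optC)\optC = (1+\alpha)(\optC + 3\errhyp_1)$. The plan: after the delay, the server follows an optimal tour on $\predR$, incurring cost $\approx (1+\alpha)|\predT|$ (the $(1+\alpha)$ absorbing the delayed start, since the delay is an $\alpha$-fraction of the work done so far). It then must additionally serve the unexpected actual requests $R\setminus\predR$ and cope with absent predicted requests $\predR\setminus R$. The crux is bounding $|\predT|$ plus these correction costs by $\optC + 3\errhyp_1$. For this I would use the defining property of $\errhyp_1$: a minimum hyperedge cover of $R\setminus\predR$ gives, for each uncovered-by-prediction request, a detour cost from some predicted request, and summing these detours lets one transform a tour serving $\predR$ into one serving all of $R$ at an additive cost of (roughly) $2\errhyp_1$ (each detour traversed out-and-back), while the absent predicted requests contribute another $\errhyp_1$-type term because $|\predT| \le \optC + \errhyp_1$ (covering $\predR\setminus R$ against $R$). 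Combining: $|\predT| + 2\errhyp_1 \le \optC + 3\errhyp_1$, and the $(1+\alpha)$ factor multiplies through.

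**Main obstacle.** The hard part will be Step 2, specifically the careful bookkeeping of how detours interact with release dates in the online-time model: a detour to serve an unexpected request may have to wait for that request's release time, so the "cost of a detour" as encoded in the hyperedge weight must already incorporate waiting, and I must make sure the $(1+\alpha)$ slack and the factor $3$ in front of $\errhyp_1$ genuinely absorb all of these waiting contributions without a hidden extra additive term. I would handle this by invoking the problem-specific hyperedge cost function $\gamma^{\mathrm{TSP}}$ / $\gamma^{\mathrm{DaRP}}$ (defined in Section~\ref{sec:metric}) which by construction bakes release dates into the detour cost, and then verify the arithmetic $|\predT| + (\text{detour corrections}) + (\text{return}) \le \optC + 3\errhyp_1$ holds with the claimed constant. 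A secondary subtlety is ensuring the two branches are glued correctly — the algorithm itself takes the better of the two behaviors (or its analysis does), so I need the competitive ratio to be bounded by the min, which follows as long as each branch's bound holds unconditionally for the respective strategy and the algorithm can be taken to run both in parallel / choose the better, as the framework prescribes.
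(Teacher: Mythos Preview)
Your high-level decomposition into an error-dependent branch and a robustness branch is exactly what the paper does (Lemmas~\ref{lem:competitive_general1} and~\ref{lem:competitive_general2}), and your identification of the main obstacle --- release-date interaction in the detours --- is on target. However, your arithmetic in Step~2 does not close.

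You propose to write the cost as $|\predT| + 2\errhyp_1 \le \optC + 3\errhyp_1$, with the ``$2$'' coming from out-and-back detours and the ``$1$'' from $|\predT| \le \optC + \errhyp_1$. Both of these are off. First, the hyperedge cost $\gammatsp(\{(x',r')\},(\hx,\hr))$ is already a round-trip makespan from~$\hx$, so there is no extra factor~$2$ for out-and-back. The factor~$3$ per detour has a different source: when the unexpected request arrives at time~$r'$, the server is in general no longer at~$\hx$ --- it may have drifted up to $r'-\hr$ away (if $\hx$ was already visited). The excursion is thus bounded by $2(r'-\hr) + \gammatsp(\cdot) \le 3\gammatsp(\cdot)$, using $r'-\hr \le \gammatsp(\cdot)$. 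Summing gives $3\,\Gamma_1(R,\predR)$ for Phase~(iii) detours. Second, to still land on $3\errhyp_1$ rather than $4\errhyp_1$, you need the sharper bound $\predC \le \optC + \Gamma_\infty(\predR,R)$ (using only the $\Gamma_\infty$ half of $\errhyp_1$), not $\predC \le \optC + \errhyp_1$. The final step is then $(1+\alpha)\predC + 3\Gamma_1 \le (1+\alpha)(\optC + \Gamma_\infty + 3\Gamma_1) \le (1+\alpha)(\optC + 3\errhyp_1)$. Your allocation $1+2=3$ does not survive; the paper's is $\Gamma_\infty + 3\Gamma_1 \le 3(\Gamma_\infty + \Gamma_1)$.

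Two smaller points. Your closing concern about ``running both in parallel / choosing the better'' is a misconception: \ALGOGEN is a single algorithm, and both bounds hold for the same run (the robustness proof uses that reaching Phase~(iii) forces $\predC < \tfrac{\rho}{\alpha}\optC$). And in Step~1, the robustness argument is not purely triangle inequality; it hinges on a case split according to whether the last request $r_{\mathrm{last}}$ arrives before or during Phase~(iii), together with the implication $\predC < \tfrac{\rho}{\alpha}\optC$ just mentioned.
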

Hence, sufficiently good predictions help to beat the classic lower bound of 2~\cite{AusielloFLST01} for \oltsp. %

When using the algorithm of~\cite{AscheuerKR00} as a subroutine, we can further refine our algorithm by carefully aligning the used waiting strategies and prove an improved robustness guarantee.

\begin{theorem}\label{thm:algosmart}
   \oltsp and \oldarp admit a learning-augmented algorithm that uses the 2-competitive algorithm of~\cite{AscheuerKR00} as a subroutine and achieves a competitive ratio that is, for any~$\alpha > 0$, bounded by 
    \[
        \min\left\{(1 + \alpha) \left(1 + \frac{3 \cdot \errhyp_1}{\optC}\right),  2 + \frac{2}{\alpha}  \right\}.
    \]
\end{theorem}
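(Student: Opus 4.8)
The plan is to retain the two-regime (consistency/robustness) structure of the analysis of Theorem~\ref{thm:algogen} and to rework only the robustness bound, exploiting that \smart~\cite{AscheuerKR00} already spends its initial effort \emph{waiting at the origin}, so its waiting strategy can be \emph{identified} with the delay of \ALGOGEN\ rather than composed after it. Concretely, I would run \smart\ but, at every epoch at which the server is idle at the origin, also evaluate a prediction‑trust condition: a single rule ``start once $t$ is a $1/\theta$‑fraction ahead of the relevant tour length'' governs both whether \smart\ launches its own optimal schedule on the released requests and whether \ALGOSMART\ instead commits to the predicted tour $\predT$ (augmented on the fly with detours for the unexpected released requests), with the prediction side of the threshold scaled by $\alpha$. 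In addition \ALGOSMART\ maintains a cheap running lower bound $\mathrm{lb}(t)\le\optC$ (using, e.g., $\mathrm{lb}(t)\ge\tlast$ and $\mathrm{lb}(t)\ge$ the optimal tour length on the released requests); if while following $\predT$ the incurred cost $\Cabort$ at some time $\tabort$ shows that a budget of $(2+2/\alpha)\,\mathrm{lb}$ would be exceeded, it aborts, returns to the origin, and runs \smart\ on the residual requests.

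The consistency term requires no new idea: \ALGOSMART\ follows the (augmented) predicted tour in the same way as \ALGOGEN, so $(1+\alpha)(1+3\errhyp_1/\optC)$ is obtained by the same argument as for the first term of Theorem~\ref{thm:algogen}. Its backbone is the augmentation lemma: $\predT$ can be converted, respecting release dates, into a tour serving all of $R$ of length at most $\optC+3\cdot\errhyp_1$ — the ``unexpected‑request'' part of $\errhyp_1$ pays for the inserted detours (each at most twice a covering distance, plus one further copy absorbing the idle time forced by a late release, which is where the constant $3$ appears) while the ``absent‑request'' part bounds the slack $\predC-\optC$ — combined with the fact that the shared threshold makes \ALGOSMART\ commit to this tour by time $\approx\alpha\predC\le\alpha(\optC+3\errhyp_1)$, and that the abort does not fire because the augmented tour itself stays inside the budget when $\errhyp_1$ is small.

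The new content is the robustness bound $2+2/\alpha$, which I would prove by a three‑way case distinction on what \ALGOSMART\ does. If the trust condition never fires, or fires and is immediately aborted, \ALGOSMART\ coincides with \smart\ and finishes within $2\optC\le(2+2/\alpha)\optC$. If \ALGOSMART\ commits to $\predT$ and completes it without aborting, then by construction its total cost never exceeded the budget $(2+2/\alpha)\,\mathrm{lb}(t)\le(2+2/\alpha)\optC$. If it commits and later aborts at $\tabort$, I would bound the makespan by $\tabort$ (the wasted trusting time) plus the return to the origin plus the makespan of \smart\ on the residual requests; the last term is at most $2\optC$ since the residual instance is a subinstance of $R$, and the calibration of the trust threshold to $\alpha$ together with $\mathrm{lb}\le\optC$ bounds $\tabort$ plus the return by $(2/\alpha)\optC$. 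The gain of $1$ over the generic bound $1+\rho+\rho/\alpha$ is precisely the redirection cost that \ALGOGEN\ pays to bring a black‑box algorithm back to the origin before the prediction phase: since \smart\ sits at the origin during its waiting phases this term disappears, improving $1+\rho+\rho/\alpha$ to $\rho+\rho/\alpha=2+2/\alpha$.

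The main obstacle I anticipate is the third robustness case: making $\tabort+(\text{return to origin})+2\optC\le(2+2/\alpha)\optC$ go through forces the abort threshold and the trust trigger to be chosen so that (i) the \smart‑style invariant ``commit only once $t$ is a $\theta$‑fraction ahead of the current optimal tour length'' survives the abort, so that \smart\ on the residual instance can still be charged against $\optC$ rather than against a residual optimum restarted at $\tabort$; (ii) the abort never triggers in the consistency regime; and (iii) the release dates of the unexpected requests, which may force the augmented predicted tour to idle, are charged to $\errhyp_1$ and not to $\optC$. Verifying (iii) cleanly — that the detours bolted onto $\predT$ can always be scheduled after the corresponding release dates with only a constant blow‑up — is the most delicate bookkeeping, and is exactly why the consistency constant is $3$ rather than $2$.
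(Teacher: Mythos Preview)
Your algorithm and analysis differ substantially from the paper's, and the robustness argument has a real gap. The paper's \ALGOSMART\ has no abort-and-restart mechanism and no running lower bound: it runs \smart\ in Phase~(i) only until \smart\ is about to launch a subtour that would not finish by time $\alpha\predC$ (or \smart\ is idle at time $\alpha\predC$); it then \emph{waits} at the origin in a new Phase~(ii) until time $\alpha\predC/2$, and thereafter runs \PR\ unconditionally to the end. The robustness proof hinges on two facts you do not use: (a) if \smart\ is intercepted at time $\tabort$ while trying to launch a subtour of length $\Cabort$, then $\tabort+\Cabort\le 2\optC$, because this subtour is one of \smart's own subtours and \smart\ is $2$-competitive; and (b) Phase~(ii) forces Phase~(iii) to start no earlier than $\alpha\predC/2$. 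With these, the bottleneck case $\tlast<t_{(iii)}$ is bounded by $\tabort+\Cabort+\predC\le 2\optC+\predC\le(2+2/\alpha)\optC$ (or by $\tfrac{\alpha}{2}\predC+\optC+\predC$ when $\tabort<\tfrac{\alpha}{2}\predC$), and the case $\tlast\ge t_{(iii)}$ by $2\tlast-\tfrac{\alpha}{2}\predC+\predC+\optC\le(2+2/\alpha)\optC$.

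The concrete gap in your case~3 (abort while following $\predT$) is that neither summand is controlled. Your abort fires when the incurred cost reaches $(2+2/\alpha)\,\mathrm{lb}$, so $\tabort$ itself may already be close to $(2+2/\alpha)\optC$; the return-to-origin cost is governed by $\predC$, not $\optC$, since the server may be anywhere on $\predT$; and restarting \smart\ at some time $t_0$ does not add ``at most $2\optC$'' --- \smart's $2$-competitiveness bounds its makespan measured from time~$0$, not the incremental time after a late restart with the server re-placed at the origin. The trust threshold controls only the \emph{commit} time, not $\tabort$, so it cannot rescue the inequality $\tabort+(\text{return})\le (2/\alpha)\optC$. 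Relatedly, your diagnosis of the ``gain of~$1$'' is off: \ALGOGEN\ also begins Phase~(iii) with the server at the origin (its Phase~(ii) is precisely that return, and Phase~(i) is cut off so that Phase~(i)$+$return costs at most $\alpha\predC$), so there is no saved redirection; the improvement from $3+2/\alpha$ to $2+2/\alpha$ comes entirely from facts~(a) and~(b) above.
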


In general, %
online algorithms for \oltsp and \oldarp\ aim for tackling the 
uncertainty of the input rather than efficient running times, which is also the case for the 
above discussed results. 
Yet, we show in \Cref{sec:oltsp_extensions} that we can trade efficiency with slightly larger constant factors in the guarantees.

Further, we remark that simpler and tightened results are possible for restricted metric spaces: in \Cref{sec:oltsp_extensions} we provide improved bounds for \oltsp on the positive half of the real line. Here, a minimalistic prediction, a single value predicting the optimal makespan~$\optC$, suffices to obtain an almost tight consistency-robustness tradeoff.

We complement these theoretical bounds with empirical results (see \Cref{sec:experiments}) on simulated real-world taxi instances in the city road network of Manhattan, which indicate the superior performance of %
our %
new algorithms
compared to classic methods in both general and relevant restricted scenarios.

Further, we consider online-list graph problems and analyze the algorithmic framework provided by Azar, Panigrahi and Touitou~\cite{AzarPT22} w.r.t.~our new cover error. 
For each problem, we specify hyperedge cost functions %
which follow the same %
paradigm and prove new error-dependent bounds.

\begin{theorem}\label{thm:steiner-tree} 
The algorithms in~\cite{AzarPT22} for the online Steiner tree or online (capacitated) facility location problem %
incur%
, for any parameter $k \geq 1$, %
a cost of at most
\(
    \cO(1) \cdot \OPT + \cO(\log k) \cdot \errhyp_k.    
\)
\end{theorem}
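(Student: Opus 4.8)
The plan is to reopen the analysis of the learning-augmented framework of~\cite{AzarPT22} rather than to compare error measures directly: since the cover error can be much smaller than the matching-type error used in~\cite{AzarPT22}, there is no hope of an inequality ``their error $\le O(1)\cdot\errhyp_k$'', so the improvement must come from re-charging the algorithm's cost. Importantly, the algorithm itself is \emph{oblivious to $k$}; only the charging argument is parameterized, which is exactly what yields the whole family of bounds from a single algorithm. Recall that, at a high level, the algorithm (i) commits to an $O(1)$-approximate solution for the predicted instance $\predR$ (a Steiner tree on the predicted terminals, resp.\ a set of open facilities together with the induced assignment), and (ii) serves the online requests with an online subroutine, in which a request already covered by the committed structure is free and any other request is connected to the \emph{current} global solution in a (ball-growing) greedy fashion. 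Accordingly I bound $\ALG \le O(1)\cdot(\text{cost of committed structure}) + (\text{online connection cost of } R)$ and treat the two terms separately, charging the first to the part of $\errhyp_k$ coming from the absent predicted requests $\predR\setminus R$ and the second to the part coming from the unexpected actual requests $R\setminus\predR$.

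For the committed structure I use monotonicity of the optimum under adding requests together with a triangle-inequality decomposition: $\OPT(\predR)\le\OPT(\predR\cup R)\le \OPT(R)+\covercost$, where $\covercost$ is the cheapest cost of a sub-solution connecting every vertex of $\predR\setminus R$ to $R$. Fixing a minimum hyperedge cover of $\predR\setminus R$ and observing that the union of the problem-specific, $O(1)$-approximately-optimal sub-solutions attached to its hyperedges is exactly such a connecting structure, I obtain $\covercost = O(1)\cdot\errhyp_k$ (the absent part) for every $k\ge 1$. Hence the committed structure costs $O(1)\cdot\OPT + O(1)\cdot\errhyp_k$.

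For the online connection cost, the requests in $\predR\cap R$ are free, so only the unexpected requests $R\setminus\predR$ remain, and this is where $O(\log k)$ replaces the generic $O(\log n)$. I fix a minimum hyperedge cover of $R\setminus\predR$ and assign each unexpected request to one hyperedge $e$ containing it, anchored at a predicted vertex $\hr_e$. Since the committed structure already contains (an $O(1)$-approximation near) $\hr_e$ and the subroutine always connects to the current global solution, the connection cost charged to the $\le k$ requests of $e$ is at most the cost of the online subroutine run on the \emph{isolated} instance consisting of $\hr_e$ plus those $\le k$ requests. By the classical online Steiner-tree argument --- the $j$-th largest greedy connection cost is $O(1/j)$ times the optimal cost of connecting the requests of $e$ to $\hr_e$, because the corresponding requests are pairwise at least that far apart --- this telescopes to $O(\log k)\cdot\covercostfun(e)$, where $\covercostfun(e)$ is the hyperedge cost. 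Summing over the cover yields online cost $O(\log k)\cdot\errhyp_k$ (the unexpected part), and adding the committed-structure bound gives $\ALG\le O(1)\cdot\OPT + O(\log k)\cdot\errhyp_k$. For (capacitated) facility location the argument is identical, replacing the online Steiner-tree subroutine and its $O(\log k)$-type local guarantee by the (capacitated) online facility-location subroutine, with the capacity bookkeeping carried through the per-hyperedge instances.

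The main obstacle is to make the localisation step rigorous for the \emph{actual} subroutine of~\cite{AzarPT22}, which is not plain greedy but uses waiting/ball-growing, and whose robustification may prune parts of the committed prediction: I need that its connection cost on the requests of a hyperedge is monotone under enlarging the already-built solution --- hence upper bounded by its cost on the isolated $(k+1)$-terminal instance --- and that the committed structure near each anchor $\hr_e$ survives until the requests of $e$ arrive. I expect both to follow from the structural invariants of their algorithm, and similarly the correct separation of ``online cost'' from ``committed-structure cost'' (to avoid double counting) must be read off their potential-function analysis; by contrast, the cover-error bookkeeping ($\covercost\le O(1)\errhyp_k$ and the per-hyperedge $O(\log k)\covercostfun(e)$ bound) is routine.
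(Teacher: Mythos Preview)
Your high-level strategy---charge the absent-prediction part of $\errhyp_k$ to a bound on $\OPT(\predR)$, and the unexpected-request part via a per-hyperedge $O(\log k)$ greedy argument---matches the paper's. But your description of the algorithm is materially wrong, and this matters. The framework in~\cite{AzarPT22} does \emph{not} commit to an $O(1)$-approximate solution of $\predR$ upfront. It runs the online subroutine throughout, and whenever the accumulated online cost has doubled since the last such event (a \emph{major iteration}), it spends a budget proportional to the current online cost on a partial prize-collecting solution for the still-unsatisfied predicted requests. The predicted structure is therefore built up gradually and may be complete only late (or never). Your split ``committed-structure cost $+$ online connection cost'' does not correspond to any actual decomposition of the algorithm's cost, and your worry that ``the robustification may prune parts of the committed prediction'' understates the issue: there is no committed prediction to prune.

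The paper instead invokes a structural lemma of~\cite{AzarPT22} that splits the total cost at the major iteration $i$ in which all predicted requests first become satisfied: the cost up to iteration $i$ is $O(\OPT)+O(\hB_{i-1})$, and one shows $\hB_{i-1}\le\OPT(\predR)$ via the prize-collecting budget argument, then bounds $\OPT(\predR)\le\OPT+\Gamma_\infty(\predR,R)$ exactly as you propose. The cost after iteration $i$ is $O(1)\cdot\max\{\on_{m-1},\on_m\}$, so only the online cost in the \emph{last two phases} needs the per-hyperedge bound---and precisely because we are past iteration $i$, the anchor $\hx$ is guaranteed to lie in the solution already, which resolves your concern about whether the anchor ``survives until the requests of $e$ arrive''. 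Your per-hyperedge $O(\log k)$ argument is then correct in spirit; the paper simply cites a sub-result of~\cite{AzarPT22} giving $\on_j(H)\le O(\log\abs{H})\cdot\OPT_j(H)$ and observes $\OPT_j(H)\le\gammast(H,\hx)$.

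Finally, for facility location the argument is \emph{not} ``identical''. The paper refines the hyperedge cost (subtracting opening costs for anchors covering a single client), needs a separate per-hyperedge analysis via Fotakis' potential-share corollary to get the $O(\log k)$ factor, and must control an extra additive term $\sum_{\hx\in\predR}d(\hx,F_0)$ arising from distances to the initially open facilities, which is charged back through $\OPT(\predR)$. Your one-line ``replace subroutine, carry capacity bookkeeping'' elides genuine work.
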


\begin{theorem}\label{thm:steiner-forest}
    The algorithm in~\cite{AzarPT22} for the online Steiner forest problem incurs, for any parameter $k \geq 1$, %
    a cost of at most
    \(
        \cO(1) \cdot \OPT + \cO(k) \cdot \errhyp_k.    
    \)
\end{theorem}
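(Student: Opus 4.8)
The plan is to keep the algorithm of~\cite{AzarPT22} for online Steiner forest exactly as it is and only supply a new analysis against the cover error. Recall that this algorithm first buys a near-optimal offline Steiner forest~$\hat F$ for the predicted pairs~$\predR$ and then serves the actual requests~$R$ online with a classical online Steiner forest subroutine that may reuse, at zero marginal cost, the edges of~$\hat F$ and of everything bought so far. I would therefore split its cost into $\mathrm{cost}(\hat F)$ and the marginal cost of the online phase, and bound each in terms of~$\OPT$ (the optimum on the actual requests~$R$) and of the two hypergraphs underlying~$\errhyp_k$.

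For the first term I would bound $\OPT(\predR)$ by exhibiting a feasible forest for~$\predR$: an optimal forest for $\predR\cap R$, which costs at most $\OPT$ since $\predR\cap R\subseteq R$, together with, for every hyperedge of a minimum-cost cover of the left side $\predR\setminus R$ of the absent-requests hypergraph, the optimal forest of the subinstance induced by that hyperedge (its anchor, an actual request, and its at most~$k$ absent predicted pairs). These subinstance forests realize the respective hyperedge costs and together connect every pair in $\predR\setminus R$, so their union with the forest for $\predR\cap R$ is feasible for~$\predR$; hence $\OPT(\predR)\le\OPT+\errhyp_k$, and the $\cO(1)$-approximation used in~\cite{AzarPT22} gives $\mathrm{cost}(\hat F)=\cO(1)\cdot(\OPT+\errhyp_k)$.

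For the online phase I would split~$R$ into $R\cap\predR$ and $R\setminus\predR$. Each pair in $R\cap\predR$ is already connected by~$\hat F$ and contributes no marginal cost. For $R\setminus\predR$ I would fix a minimum-cost cover of the left side of the unexpected-requests hypergraph (total cost at most~$\errhyp_k$), assign every unexpected pair to one hyperedge that covers it, and charge the marginal cost the subroutine pays for that pair to the chosen hyperedge; since each pair is charged once, the charges sum exactly to the online cost on $R\setminus\predR$. It then suffices to show that the charge collected by a single hyperedge --- which covers at most~$k$ unexpected pairs and is anchored at a predicted pair that is connected in~$\hat F$ --- is $\cO(k)$ times the hyperedge's cost, i.e.\ $\cO(k)\cdot\OPT(\text{subinstance})$. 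Summing over all hyperedges would give $\cO(k)\cdot\errhyp_k$, and together with the bound on $\mathrm{cost}(\hat F)$ this proves the claimed bound $\cO(1)\cdot\OPT+\cO(k)\cdot\errhyp_k$; any additional worst-case robustness is inherited from~\cite{AzarPT22}.

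The hard part will be this last charging step. For online Steiner tree (Theorem~\ref{thm:steiner-tree}) a global root lets one bound the subroutine's cost on the at most~$k$ terminals of a hyperedge by the standard greedy argument against that hyperedge's subinstance optimum, losing only an $\cO(\log k)$ factor. For Steiner forest there is no such root: the anchor pair only certifies that its own two endpoints are connected in~$\hat F$, so the subroutine's competitive guarantee cannot be localized to the subinstance in the same way, and the most one can robustly say is that each of the at most~$k$ covered pairs costs $\cO(1)$ times the subinstance optimum when attached to the current solution --- this is exactly where the factor~$k$ rather than $\log k$ enters, and making it precise (in particular, identifying which distances the subroutine actually pays and bounding them by the subinstance optimum that includes the anchor) is the technical core. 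A secondary point to check is that, when the chosen hyperedges overlap, the per-hyperedge charges remain consistent, which is immediate from assigning each unexpected pair to exactly one hyperedge.
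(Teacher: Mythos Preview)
Your high-level charging idea for the unexpected requests---bound the online cost on each of the at most~$k$ pairs of a hyperedge by a constant times the hyperedge cost, because the predicted anchor pair is already connected---is exactly what the paper does, and it is the source of the~$\cO(k)$ factor. So the technical core you flag as ``hard'' is actually the easy part.

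The genuine gap is in your description of the algorithm and hence in your cost decomposition. The algorithm of~\cite{AzarPT22} does \emph{not} first buy a Steiner forest~$\hat F$ for~$\predR$ and then run an online subroutine; it runs the online subroutine from the start and, only at \emph{major iterations} (when the online cost so far has doubled), spends a budget proportional to that cost on a prize-collecting solution satisfying \emph{part} of~$\predR$. Consequently your split into $\mathrm{cost}(\hat F)$ plus marginal online cost does not correspond to anything the algorithm actually pays, and your claim that ``each pair in $R\cap\predR$ is already connected by~$\hat F$ and contributes no marginal cost'' is false: such a pair may well arrive before the doubling scheme has bought enough of the predicted solution to connect it.

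The paper handles this via the phase machinery of~\cite{AzarPT22}: their Lemma~2.2 shows that the total cost splits into (i) the cost up to the major iteration~$i$ at which \emph{all} predicted requests become satisfied, which is $\cO(\OPT)+\cO(\hB_{i-1})$, and (ii) $\cO(1)\cdot\max\{\on_{m-1},\on_m\}$ for the last two phases. The paper's auxiliary lemma (\Cref{lemma:azar-bounds}) then reduces the theorem to two conditions: bounding $\OPT(\predR)\le\OPT+\Gamma_\infty(\predR,R)$ (your first step, essentially), and bounding $\on_j(H)\le\cO(k)\cdot\gammasf(H,\hx)$ for $j\in\{m-1,m\}$, $|H|\le k$, and any $\hx\in\predR$. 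The second bound uses precisely that in phases $m-1,m$ the anchor is already connected---which is guaranteed by the definition of~$i$, not by any upfront purchase. To fix your argument you need to route it through this phase decomposition rather than the fictitious $\hat F$.
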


These bounds hold \emph{simultaneously} for any $k$. %
The algorithm is still robust due to the robustness bound of $\cO(\log \abs{R})$ provided by Azar et al. in~\cite{AzarPT22}.
On the technical side, we can exploit a technical lemma by~\cite{AzarPT22} that allows us to split the analysis in two parts. The actual proofs for bounding the algorithm's cost by the cost of an optimal solution and the cover error are completely different.

For certain input scenarios we substantially strengthen the bounds on the competitive ratio provided by Azar et al.~\cite{AzarPT22}. %
Indeed, their bound never improves over $\Omega(\log(\max\{\abs{R}, \abs{\predR}\} - \min\{\abs{R}, \abs{\predR}\}))$, which is not better than the best possible competitive ratio $\cO(\log \abs{R})$ for classic online algorithms, if $\abs{\predR}$ and $\abs{R}$ differ significantly. %
	We show that there are input scenarios for which their error measure overestimates the actual error substantially and, thus, gives poor performance bounds while the algorithm performs actually well; more precisely, 
	we prove a constant competitive ratio for the algorithm in~\cite{AzarPT22}  w.r.t. our error measure whereas the bound w.r.t. the previous measure is $\cO(\log \abs{R})$. %

\subsection{Further related work}

While untrusted predictions have been successfully integrated into online models for many 
different problems, %
none of the previous approaches and models seem to 
capture the complexity of combined  routing and scheduling decisions. %
Related research includes 
work on scheduling %
\cite{LattanziLMV20,AzarLT21,AzarLT22,BamasMRS20,Mitzenmacher20,
ScullyGM22,PurohitSK18,Im0QP21,LindermayrM22}, routing %
in metric spaces such 
as the k-server problem and more generally metrical task 
systems~\cite{AntoniadisCE0S20,LindermayrMS22}, graph exploration~\cite{EberleLMNS22} 
and %
online network design~\cite{AzarPT22,XuM22,AlmanzaCLPR21}. %

Further, there is hardly any work on integrating untrusted predictions into online problems 
in the online-time model. The only exception seems to be the work by 
Antoniadis et al.~\cite{AntoniadisGS21speed-scaling} on online speed scaling with a prediction 
model that includes release dates and deadlines. The nature of the speed-scaling problem, 
however, is very different from the routing problems we consider. Other works on non-clairvoyant 
scheduling with jobs arriving over time (minimizing flow time~\cite{AzarLT21,AzarLT22}, 
total completion time~\cite{LindermayrM22} and energy~\cite{BamasMRS20}) 
assume predictions on the job sizes or priorities; release dates are known in advance 
in~\cite{BamasMRS20}, while \cite{LindermayrM22,AzarLT21,AzarLT22}
consider purely online problems w.r.t.~job~arrivals.

Very recently and independently of our work, %
two papers~\cite{HuWLCL22online,GouleakisLS22learning-augmented} were announced that also study \oltsp in the learning-augmented setting.
Hu et al.~\cite{HuWLCL22online} consider \oltsp with different prediction models in general metric spaces. For arbitrary input predictions, their result has no error-dependency and a weaker consistency-robustness tradeoff compared to~\Cref{thm:algosmart}. Gouleakis et al.~\cite{GouleakisLS22learning-augmented} exclusively study \oltsp on the real line. Assuming that the correct number of requests is known in advance, they study the power of  predictions on the locations; their results are incomparable to ours.

\subsection{Organization}

We first introduce the cover error in \Cref{sec:covererror}, and then give our results for online-time routing problems and online-list network design problems in \Cref{sec:metric} resp. \Cref{sec:online-list}. Finally, we present empirical experiments in~\Cref{sec:experiments}.

\section{The cover error}\label{sec:covererror}

Given an input prediction $\predR$ and the actual input $R$, we design an error measure that \emph{covers} every erroneously predicted item, i.e., all unexpected requests $R \setminus \predR$ and all absent predicted requests~$\predR \setminus R$. 
As a concrete example think of \oltsp where a learning-augmented algorithm trusts (at least to a certain degree) the predicted requests in $\predR$, and thus follows an optimal tour on $\predR$. 
After serving a predicted request $(\hx,\hr)$, it may serve some unexpected actual requests $R' \subseteq R \setminus \predR$ that have already been released and are \emph{relatively close} to $(\hx,\hr)$ (both time- and location-wise). In our terminology, think of~$(\hx,\hr)$ covering $R'$, and observe that the cost for this cover shall naturally be equal to the optimal cost for serving $R'$ when starting in $\hx$ at time $\hr$.
Conversely, the predicted requests that have not shown up and are nevertheless visited, $\predR \setminus R$, should be covered by actual requests, to make up for the extra cost incurred by these superfluous visits. 

We can arguably expect that any well-performing algorithm should be as least as good as serving all unexpected requests $R \setminus \predR$ and all absent predicted requests $\predR \setminus R$ %
in such partitions which can be covered by, respectively, predicted and actual requests in the cheapest possible way. 

We now embed this intuition into a precise definition.
Let $A$ and $B$ be two sets of (possibly different) size %
and let $k \geq 1$. 
We define a bipartite hypergraph~$G_k = (A \cup B, \cH)$ where $\cH$ is the set of all hyperedges which have exactly one endpoint in~$B$ and at most~$k$ endpoints in~$A$.
A $k$-\emph{hyperedge cover of $A$ by $B$} is a set of hyperedges $\cH' \subseteq \cH$ in $G_k$ such that every vertex in $A$ is incident to at least one hyperedge in~$\cH'$.
If every hyperedge $h \in \cH$ of $G_k$ has an associated cost $\gamma(h)$, a \emph{minimum-cost $k$-hyperedge cover~$\cH'$} is a $k$-hyperedge cover which minimizes the total hyperedge cost $\sum_{h' \in \cH'} \gamma(h')$. We denote the value of a min-cost $k$-hyperedge cover of $A$ by $B$ by $\Gamma_k(A, B)$.
Finally, the cover error, denoted by $\errhyp_k(R, \predR)$, is given by
\[
    \errhyp_k(R, \predR) = \Gamma_\infty(\predR, R) + \Gamma_k(R, \predR).    
\]

Notice that we allow arbitrary large hyperedges ($k = \infty$) to cover predicted requests $\predR$. We emphasize that all results also hold for a symmetric error definition $\Gamma_k(\predR, R) + \Gamma_k(R, \predR)$, because $\Gamma_i(A, B) \geq \Gamma_{i+1}(A, B)$, for any $i$. Nevertheless we use this asymmetric definition to obtain a stronger bound when covering $\predR$. Intuitively, this is possible because all predicted requests are known in advance (as opposed to the actual requests, which arrive online).

We simply write $\errhyp_k$ if $\predR$ and $R$ are clear from the context. Since our error measure shall give value zero %
if $\predR = R$, 
we require that the cost of every hyperedge $\{a\} \cup \{b\}$ for some $a \in A$ and~$b \in B$ is equal to zero if~$a = b$. Then, all vertices in~$A \cap B$ can be covered trivially by $B$,
and we conclude that $\Gamma_k(A \setminus B, B) = \Gamma_k(A, B)$. \Cref{fig:hyperedge-cover} depicts an example of a $k$-hyperedge cover.

\begin{figure}
    \begin{subfigure}[b]{0.35\textwidth}
        \begin{center}           
        \begin{tikzpicture}[scale=0.75]
            \node[corr] (p1) at (0.5,0.5) {1};
            \node[pred]      at (2.5,1) {2};
            \node[pred]      at (3.5,0.5) {3};
            \node[pred] (p2) at (4,4) {5};
            \node[pred]      at (0,3) {4};
            \node[pred]      at (0,4) {6};
    
            \node[req] (r1) at (0.5,1.5) {7};
            \node[req] (r2) at (3,4.5) {8};
            \node[req] (r3) at (2,4) {9};
            \node[req] (r4) at (3,3) {10};
            \node[req] (r5) at (5,3) {11};
            \node[req] (r6) at (4,2.5) {12};

        \begin{pgfonlayer}{background}
            \draw[unexp, fill=unexpblue!10!,fill opacity=0.5] \convexpath{p2,r5,r6,r4}{12pt};
            \draw[unexp, fill=unexpblue!10!,fill opacity=0.5] \convexpath{p2,r3,r2}{12pt};
            \draw[unexp, fill=unexpblue!10!,fill opacity=0.5] \convexpath{p1,r1}{12pt};
        \end{pgfonlayer}
        \end{tikzpicture}
        \end{center}
        \caption{Instance in metric space}
    \end{subfigure}
    \hfill
    \begin{subfigure}[b]{0.3\textwidth}
        \begin{center}     
        \begin{tikzpicture}[scale=0.75]
            \node[pred]  (p1)    at (4,1 * 0.75) {1};
            \node[pred]      at (4,2 * 0.75) {2};
            \node[pred]  at (4,3 * 0.75) {3};
            \node[pred] (p2) at (4,5 * 0.75) {5};
            \node[pred]      at (4,4 * 0.75) {4};
            \node[pred]      at (4,6 * 0.75) {6};
    
            \node[req] (r1) at (2,1 * 0.75) {7};
            \node[req] (r2) at (2,2 * 0.75) {8};
            \node[req] (r3) at (2,3 * 0.75) {9};
            \node[req] (r4) at (2,4 * 0.75) {10};
            \node[req] (r5) at (2,5 * 0.75) {11};
            \node[req] (r6) at (2,6 * 0.75) {12};
    
            \node at (2, 0) {$R \setminus \predR$};
            \node at (4, 0) {$\predR$};
    
        \begin{pgfonlayer}{background}
            \draw[unexp, fill=unexpblue!10!,fill opacity=0.5] \convexpath{p2,r4,r6}{11pt};
            \draw[unexp, fill=unexpblue!10!,fill opacity=0.5] \convexpath{p2,r2,r3}{11pt};
            \draw[unexp, fill=unexpblue!10!,fill opacity=0.5] \convexpath{p1,r1}{11pt};
        \end{pgfonlayer}
    
        \draw [line width=1pt,black,decorate, decoration = {calligraphic brace,raise=15pt,amplitude=4pt}] (r4.south) -- (r6.north) node[pos=0.5, left=0.7cm,black]{$\leq k$};
    
        \end{tikzpicture}
        \end{center}
        \caption{Min-cost $k$-hyperedge cover}
    \end{subfigure}
    \hfill
    \begin{subfigure}[b]{0.3\textwidth}
        \begin{center}     
        \begin{tikzpicture}[scale=0.75]
            \node[pred] (p1) at (4,1 * 0.75) {1};
            \node[pred]      at (4,2 * 0.75) {2};
            \node[pred]  at (4,3 * 0.75) {3};
            \node[pred] (p2) at (4,5 * 0.75) {5};
            \node[pred]      at (4,4 * 0.75) {4};
            \node[pred]      at (4,6 * 0.75) {6};
    
            \node[req] (r0) at (2,1 * 0.75) {1};
            \node[req] (r1) at (2,2 * 0.75) {7};
            \node[req] (r2) at (2,3 * 0.75) {8};
            \node[req] (r3) at (2,4 * 0.75) {9};
            \node[req] (r4) at (2,5 * 0.75) {10};
            \node[req] (r5) at (2,6 * 0.75) {11};
            \node[req] (r6) at (2,7 * 0.75) {12};
    
            \node at (2, 0) {$R$};
            \node at (4, 0) {$\predR$};
    
        \begin{pgfonlayer}{background}
            \draw[unexp, fill=unexpblue!10!,fill opacity=0.5] \convexpath{p2,r4,r6}{11pt};
            \draw[unexp, fill=unexpblue!10!,fill opacity=0.5] \convexpath{p2,r2,r3}{11pt};
            \draw[unexp, fill=unexpblue!10!,fill opacity=0.5] \convexpath{p1,r1}{11pt};
        \end{pgfonlayer}
    
        \draw [line width=1pt,black,decorate, decoration = {calligraphic brace,raise=15pt,amplitude=4pt}] (r4.south) -- (r6.north) node[pos=0.5, left=0.7cm,black]{$\leq k$};
    
        \draw[unexp, dotted] (r0) -- (p1);
        \end{tikzpicture}
        \end{center}

        \caption{$\Gamma_k(R, \predR) = \Gamma_k(R \setminus \predR, \predR)$}
    \end{subfigure}
    \caption{Example for a metric instance and input prediction with a min-cost $k$-hyperedge cover of the set of unexpected requests $R \setminus \predR$. The actual requests are filled green and the predicted requests are encircled red. The labels show which points in the metric space correspond to which nodes in the bipartite graphs.}
    \label{fig:hyperedge-cover}
\end{figure}
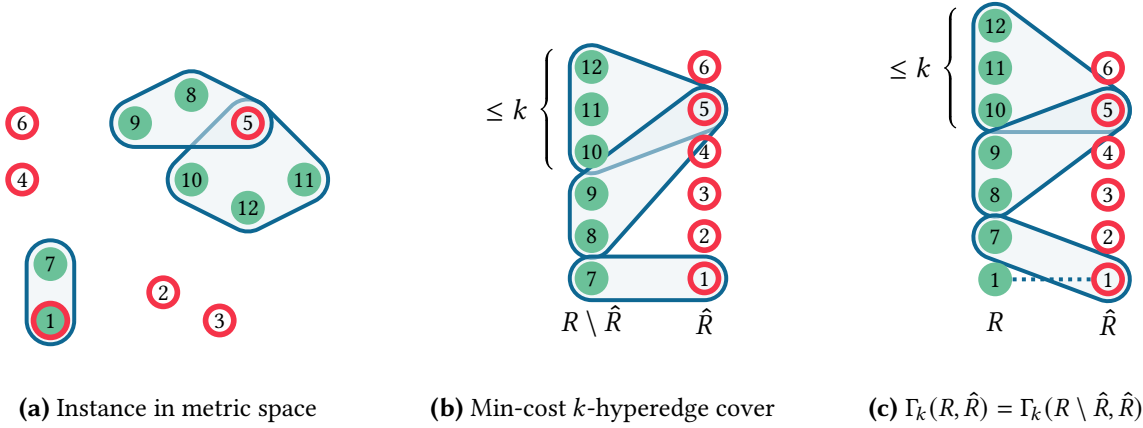

It remains to specify the cost $\gamma(A', b)$ of a hyperedge $A' \cup \{b\}$. 
Although we will give precise definitions separately for every concrete problem, all definitions follow a certain paradigm.
That is, the cost $\gamma(A', b)$ shall be equal to \emph{the value of an optimal solution for the subinstance induced by~$A'$ with respect to $b$}. This anchoring requirement is the \emph{single} detail which has to be specified for a concrete problem.
Note that this matches our intuition discussed above for \oltsp.

\paragraph{Comparison to other error measures.}
We compare the cover error to previously proposed error measures %
for the (undirected) online Steiner tree problem.
Xu and Moseley~\cite{XuM22} %
define a prediction error $\eta = \max\{\abs{\predR}, \abs{R}\} - \abs{\predR \cap R}$, the number of erroneous requests,  %
and prove that their algorithm is~$\cO(\log(\min\{\abs{R}, \eta \}))$-competitive. 
Azar et al.~\cite{AzarPT22} introduce the 
\emph{metric error with outliers}~$\lambda=(\Delta, D)$, where~$D$ is the value of a 
min-cost perfect matching between two equally sized subsets of~$R$ and~$\predR$, 
and~$\Delta$ is the total number of unmatched points in~$R$ and~$\predR$. They prove for their algorithms a multiplicative 
error dependency w.r.t.~$\log( \min\{\abs{R}, \Delta \} )$ and an additive error dependency w.r.t.~$D$.

We give a family of instances %
with~$n$ actual requests and an input prediction for which the algorithms of Azar et al.~\cite{AzarPT22} and Xu and Moseley~\cite{XuM22} perform arguably well, but their error measures and analyses yield a bound of $\cO(\log n) \cdot \OPT + \cO(\epsilon)$, which could be achieved even without predictions. For some $\epsilon > 0$, the instance is composed of one terminal request at $x_1$ and $n-1$ requests in an $\epsilon$-ball around the Steiner point $x_2$, but no request is exactly on~$x_2$. Both $x_1$ and $x_2$ are predicted. We can immediately observe that the number of erroneous requests~\cite{XuM22} is $\eta = n - 1$. For the metric error with outliers~\cite{AzarPT22}, note that any perfect matching is composed of at most two matches, therefore~$\Delta \geq n - 2$ %
and $D = \cO(\epsilon)$.
On the other hand, our cover error is bounded by $\errhyp_k \leq \errhyp_1 = \cO(n \cdot \epsilon)$ for any~$k$, because %
$x_2$ covers all requests in the $\epsilon$-ball around it. %
Then, \Cref{thm:steiner-tree} concludes that the algorithm of Azar et al.~\cite{AzarPT22} is indeed constant competitive for this instance when $\epsilon \to 0$.

\section{Online metric TSP with predictions}\label{sec:metric}
Let $M=(X,d)$ be a metric space, consisting of a set of points~$X$, with %
origin $o\in X$ and a metric~$d$. %
In the \emph{Online Metric Traveling Salesperson Problem} (\oltsp), a set of unknown %
requests $R$ is released online over time. 
A request~$(x,r)$ is composed of a point $x \in X$ and a release date $r\in \mathbb{R}_{\geq 0}$, i.e., the time at which the request becomes known and %
can be served. %
The task of  an algorithm %
is to route a server, which is initially in the origin and moves at unit speed, through all requests back to the origin.  %
The objective is to minimize the makespan, i.e., the total time required for this task.

The \oltsp \emph{with predictions}, is an \oltsp in which we are given additionally an a priori prediction~$\predR$ on the set of requests. %
We assume that the server receives a signal %
when it is back at the origin after serving all the actual requests. Unlike in the classic \oltsp, this is important, %
as otherwise an algorithm might continue considering predicted requests, thus, ruling out any robustness.

To specify our cover error, we define the cost of a hyperedge $R' \cup \{(x',r')\}$ as the extra cost of serving %
erroneous (unexpected or predicted absent) requests $R'$ w.r.t.~a request $x'$ (actual or predicted):
\begin{list}{}{}
    \item[$\gammatsp(R',(x',r'))=$]  \emph{optimal makespan for serving instance $R'$ from origin $x'$ and initial time $r'$.} %
\end{list}

To get some intuition, consider $X = \mathbb{R}_{\geq 0}$ and an algorithm that does not move before time $t$ if there are no requests. It will receive an adversarial request $(t,t)$ and hence encounters a ratio of $\frac{3}{2}$ \cite{BlomKPS01}. To overcome this when having (almost) accurate predictions, the server has to move towards (predicted) requests before they actually arrive. However, this pre-moving technique brings several challenges.
If an algorithm moves the server without interruption to a predicted request at the beginning of the instance, an adversary would immediately spawn the single actual request at the origin, giving an unbounded robustness. If the server would directly move back, one can similarly argue that the consistency is at least $\frac{3}{2}$.
Therefore, the key is to define a proper waiting strategy before moving towards predicted requests. 
We show that we can execute an arbitrary online algorithm while  %
delaying pre-moving to gain information about the instance.
This is very delicate, since too much delay clearly weakens the consistency, but too little delay gives weak robustness. 
In~\Cref{app:lower_bound} we prove the following result.

\begin{restatable}{theorem}{ThmTradeoffLB}
    \label{thm:tradeoff-lb}
    Let~$\alpha \in (0,1/2)$ and let $\cA$ be a $(1 + \alpha)$-consistent deterministic learning-augmented algorithm for \oltsp. Then, $\cA$ can be $\beta$-robust only for $\beta\geq \frac{1}{\alpha} - 1$. This holds even on the half-line. 
\end{restatable}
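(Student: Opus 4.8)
The plan is a two-instance adversary argument on the half-line $\mathbb{R}_{\ge 0}$ that crucially exploits the termination-signal hypothesis stated in this section. I will build a prediction whose optimal schedule commits the server to a long outbound trip that $\cA$ cannot safely abort — i.e.\ cannot abort by first returning to the origin to see whether it is already done — without violating $(1+\alpha)$-consistency, and then make that trip pointless in the real instance, stranding the server far away at a large multiple of $\OPT$.

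Fix $\alpha\in(0,1/2)$ and a parameter $\gamma$ slightly below $1/\alpha$, set $Q=\gamma P$ for a small $P>0$, and let $\predR=\{(P,0),(Q,Q)\}$: a cheap request at distance $P$ released at time $0$, and a far request at distance $Q$ released exactly at time $Q$. The real instance is $R=\{(P,0)\}$, so $\OPT(R)=2P$, whereas $\OPT(\predR)=2Q$ (go straight to $Q$, serving $(P,0)$ en route, arrive at $Q$ exactly when $(Q,Q)$ is released, and return). Two observations drive the argument. \emph{(i) No safe detour:} since $\gamma<1/\alpha$, any $(1+\alpha)$-consistent algorithm run on $\predR$ cannot be at the origin with $(P,0)$ already served at any time before it visits $Q$ — such a configuration at time $t_0\ge 2P$ would force the makespan on $\predR$ to be at least $t_0+2Q\ge 2P+2Q>(1+\alpha)\cdot 2Q$. \emph{(ii) Position bound:} to serve $(Q,Q)$ and return within $(1+\alpha)\cdot 2Q$, the server must reach $Q$ at some time $t_Q\le Q(1+2\alpha)$, so at time $Q$ its position is at least $Q-(t_Q-Q)\ge Q(1-2\alpha)>0$. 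Now, $\predR$ and $R$ present $\cA$ with \emph{identical information} throughout $[0,Q)$: the same single revealed request $(P,0)$, and — precisely by~(i) — no termination signal, since $\cA$ does not revisit the origin with $(P,0)$ served in that window. Hence $\cA$'s state at time $Q$ (position and served set) is the same on both instances, and by~(ii) the server sits at distance $\ge Q(1-2\alpha)$ at time $Q$.

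This springs the trap: on $R$ the request $(Q,Q)$ never appears, so at time $Q$ the server of $\cA$ is stranded at distance $\ge Q(1-2\alpha)$ with only the return to the origin remaining (having already served $(P,0)$), giving makespan at least $Q+Q(1-2\alpha)=\gamma(2-2\alpha)P=\gamma(1-\alpha)\cdot\OPT(R)$. Letting $\gamma\uparrow 1/\alpha$ yields $\beta\ge\tfrac1\alpha-1$. The identical instances, reading each request as a ride from its point to itself, give the bound for \oldarp as well. The delicate step — and exactly where the signal hypothesis is essential — is justifying that $\cA$ gets no signal on $R$ before time $Q$, which is bought by~(i); without the signal the argument would be easier but would not match the model. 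A secondary technical point is to confirm $(P,0)$ has actually been served by time $Q$ along this prefix, so that only the return trip remains: this holds because reaching $Q$ forces the server through $P$ and consistency leaves too little slack to postpone that passage past time $Q$ once $\gamma$ is close enough to $1/\alpha$; in the residual regime $\alpha\ge 1/3$ one instead uses the cruder estimate that, even if $(P,0)$ is unserved at time $Q$, the server still needs at least $Q+P$ further time, so the ratio is at least $\tfrac{\gamma+1}{2}\to\tfrac{1+\alpha}{2\alpha}\ge\tfrac1\alpha-1$ in that range.
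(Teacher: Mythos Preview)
Your argument is correct and follows the same two-instance adversary template as the paper, but your construction is noticeably more laborious. The paper places the ``cheap'' request \emph{at the origin} with a positive release time, namely $\sigma_1=(0,2\alpha+\epsilon)$, together with the far request $\sigma_2=(1,1)$; the robust instance is $\{\sigma_1\}$ alone. This buys two simplifications at once: (a) serving $\sigma_1$ is \emph{the same event} as being at the origin after time $2\alpha+\epsilon$, so the ``no termination signal before the far request'' property follows directly from the single observation that $\cA$ must serve $\sigma_2$ before $\sigma_1$; and (b) on the robust instance $\sigma_1$ is automatically unserved at time $1$, so no case split is needed --- the ratio $\frac{2-2\alpha}{2\alpha+\epsilon}\to\frac{1}{\alpha}-1$ drops out immediately. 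Your choice of $(P,0)$ at positive distance forces the awkward dichotomy at $\alpha=1/3$ and the separate bookkeeping for the ``$(P,0)$ unserved'' case.

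One small slip in your write-up: in the residual regime you say ``the server still needs at least $Q+P$ further time''; what you mean (and what you actually use) is that the makespan is at least $Q+P$, i.e.\ at least $P$ further time beyond time $Q$. The ensuing ratio $\frac{\gamma+1}{2}\to\frac{1+\alpha}{2\alpha}\ge\frac{1}{\alpha}-1$ for $\alpha\ge 1/3$ is correct. Also, your handling of the $\alpha\ge 1/3$ regime should explicitly note that the ``served'' sub-case is already covered by your main bound $\gamma(1-\alpha)$, so the minimum of the two sub-case bounds still tends to $\frac{1}{\alpha}-1$.
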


Our final algorithm uses a hyperparameter $\alpha > 0$ to configure the waiting duration and thereby achieves a tight asymptotic consistency-robustness tradeoff.
Intuitively, we can express our confidence in the prediction using~$\alpha$ and get customized guarantees.

\subsection{A general framework for \oltsp with predictions}\label{sec:blackbox}

Our strategy involves an initial %
delay phase in which we follow %
an arbitrary online algorithm, up to some predetermined time %
depending on the cost $\predC$ of an optimal tour $\predT$ of the predicted requests $\predR$.
After that, 
we start following $\predT$, %
adjusting it whenever the actual requests deviate from the predictions. We call this greedy strategy \predreplan (\PR for short), due to the analogy with the classic \replan heuristic~\cite{AscheuerKR00}. Let $p(t)$ be the server's location at time $t$.

\begin{algorithm}[H]
\caption{\PR}
\begin{algorithmic}
\STATE Follow $\predT$. Whenever an unexpected request $(x,r)$ is released, recompute and follow a fastest tour from $p(r)$ %
to the origin serving all unserved predicted requests as well as all the unserved unexpected requests. If the server receives an end signal in the origin, terminate.
\end{algorithmic}
\end{algorithm}

While this algorithm might move towards predicted requests which are known to be absent to make the analysis clearer, a practical implementation ignores these and thereby only improve its performance.
We formally define the class of algorithms \ALGOGEN, %
that is parameterized by our trust parameter $\alpha >0$, which scales the delay. Let $\cA$ be any $\rho$-competitive online algorithm for \oltsp. %
\begin{algorithm}[H]
    \caption{\ALGOGEN}
    \begin{algorithmic}[1]
        \setAlgoLinenoFormat{\roman}
        \STATE Follow $\cA$ as long as for time~$t$ it holds $t \leq \alpha \predC - d(p(t), o)$ 
        \STATE Move the server to the origin %
        \STATE Follow the \PR strategy until the end
    \end{algorithmic}
\end{algorithm}

We refer to the execution of each line as a \emph{phase}. %
We now prove the main \Cref{thm:algogen} for \oltsp, by showing for \ALGOGEN separately an error-dependent bound in Lemma~\ref{lem:competitive_general1} and a robustness  bound in Lemma~\ref{lem:competitive_general2}. 
Given an \oltsp instance, we denote by $\optC$ the makespan of an optimal tour~$\optT$ serving all actual requests. 

\begin{lemma}\label{lem:competitive_general1}
    \ALGOGEN has a competitive ratio of at most $(1 + \alpha) \left(1 + 3 \cdot \frac{ \errhyp_1}{\optC}\right)$, for any~$\alpha \geq 0$. %
\end{lemma}

\begin{proof}
We first bound~$\predC$. Fix a min-cost $\infty$-hyperedge cover of~$\predR$ by $R$ and an optimal tour~$\optT$ for~$R$. %
For every hyperedge $\predR'\cup \{(x,r)\}$ in the cover, we %
extend~$\optT$ by adding the optimal offline \oltsp tour for~$\predR'$ which starts at~$x$ at the time~$t$ at which~$\optT$ serves~$x$. %
Note that, since~$r \leq t$, the makespan of this subtour is bounded by the cost of $\predR'\cup \{(x,r)\}$. Since every predicted request is covered by at least one hyperedge, the constructed tour serves~$\predR$ and we conclude that
\(
    \predC \leq \optC + \Gamma_\infty(\predR, R).
\)

We now bound the makespan of the tour of the algorithm.
If the algorithm terminates in Phases~(i) or~(ii), its makespan is at most
\(
    \alpha \predC \leq \alpha \cdot (\optC + \Gamma_\infty(\predR, R)) \leq (1 + \alpha) \cdot (\optC + \errhyp_1).
\)

Otherwise, the algorithm reaches Phase~(iii). There it first 
computes an optimal tour~$\predT$ of length at most~$\predC$ serving all unserved predicted requests. The makespan only increases when unexpected requests arrive.
To this end, fix a min-cost $1$-hyperedge cover of~$R$ by~$\predR$ 
and a hyperedge~$\{(x',r')\} \cup \{(\hx,\hr)\}$ of this cover. 
We upper bound the additional cost due to $(x',r')$ by the cost of an excursion from the algorithm's current tour serving~$(x',r')$ . The algorithm might find a faster tour to serve all unserved requests and henceforth uses that. We distinguish two cases depending on the algorithm's 
remaining tour before request~$(x',r')$ arrived. 
If %
$\hx$ is not part of this tour, we consider an excursion which immediately deviates from $p(r')$ to serve $(x',r')$ and then returns to $p(r')$. By the triangle inequality, the length of this excursion is bounded
by twice the distance between $p(r')$ and $\hx$, plus the cost for optimally serving~$(x',r')$ from $\hx$ when starting at time $\hr$. Due to our assumption, $(\hx,\hr)$ must have already been served at some time $t$ with $r' \geq t \geq \hr$. Thus, the algorithm's server is at most~$r' - \hr$ units away from~$\hx$ at time $r'$, and the total time incurred for this excursion is bounded by
\[
    2 \cdot (r' - \hr) + \gammatsp(\{(x',r')\}, (\hx,\hr)) \leq 3 \cdot \gammatsp(\{(x',r')\}, (\hx,\hr)).
\]
Note that the inequality is due to the fact that $(x',r')$ can only be served after its %
release date. 

In the other case, the algorithm's server will visit $\hx$ at some later point in time, especially at least once after time $\hr$. We thus wait %
until the algorithm reaches
$\hx$ at some time $t \geq \hr$, 
and then serve $(x',r')$ using at 
most~$\gammatsp(\{(x',r')\}, (\hx,\hr))$ additional time. See \Cref{fig:oltsp-serve-unexpected-requests} for an illustration of both cases.

\begin{figure}
    \begin{subfigure}[b]{0.61\textwidth}
        \begin{center}           
        \begin{tikzpicture}[scale=0.85]
            \node[origin] (o) at (3,3) {o};

            \node[pred] (1) at (-2,0.5) {1}; 
            \node[pred] (2) at (0,0.5) {2}; 
            \node[pred] (3) at (2,2) {3}; 
            \node[pred] (4) at (-2,3) {4}; 
            \node[pred] (5) at (1,3) {5}; 

            \node[server] (s) at (1,1.25) {};

            \node[req] (6) at (-1.25,1.25) {6};
            \node[req] (7) at (3,1) {7};

        \draw[predred, line width = 1.5pt, solid] (o) -- (3) -- (s);
        \draw[predred, line width = 1.5pt, dashed] (s) -- (2) -- (1) -- (4) -- (5) -- (o);
        \draw[unexp] 
            (3) -- (7) %
            (1) -- (6);

        \draw [line width=1pt,unexpblue, decorate, decoration = {calligraphic brace,raise=3pt}] (s.150) -- (3.160) node[pos=0.5, above left=4pt,black]{$\leq r' - \hr_3$};

        \node at (4.5,1.6) {$\gamma \left(\{7\}, (\hx_3,\hr_3) \right)$};

        \end{tikzpicture}
        \end{center}

        \caption{The server (square) plans (dashed) and follows %
        (solid) the predicted tour. Two hyperedges are completely released: \{7\} is covered by 3, which is already served, and $\{6\}$ is covered by 1, which will be served in the future. Note that neither 1, 8 nor 9 have been released yet. Also notice that \PR might find a faster tour.}
    \end{subfigure}
    \hfill
    \begin{subfigure}[b]{0.43\textwidth}
        \begin{center}     
        \begin{tikzpicture}[scale=0.85]
            \node[pred] (p1) at (2,1 * 0.75) {1}; tour through predicted request
            \node[pred] (2) at (2,2 * 0.75) {2}; 
            \node[pred] (3) at (2,3 * 0.75) {3}; 
            \node[pred] (4) at (2,4 * 0.75) {4}; 
            \node[pred] (5) at (2,5 * 0.75) {5}; 

            \node[req] (r1) at (0,1 * 0.75) {1};
            \node[req] (6) at (0,2 * 0.75) {6};
            \node[req] (7) at (0,3 * 0.75) {7};
            \node[req] (8) at (0,4 * 0.75) {8};
            \node[req] (9) at (0,5 * 0.75) {9};
    
            \node at (0, 0) {$R$};
            \node at (2, 0) {$\predR$};
    
        \begin{pgfonlayer}{background}
            \draw[unexp, fill=unexpblue!10!,fill opacity=0.5] \convexpath{3,7}{9.5pt};
            \draw[unexp, fill=unexpblue!10!,fill opacity=0.5] \convexpath{5,8}{9.5pt};
            \draw[unexp, fill=unexpblue!10!,fill opacity=0.5] \convexpath{5,9}{9.5pt};
            \draw[unexp, fill=unexpblue!10!,fill opacity=0.5] \convexpath{p1,6}{9.5pt};
        \end{pgfonlayer}
          
        \draw[unexp, dotted, line width = 1.5pt] (r1) -- (p1);        
        \end{tikzpicture}
        \end{center}
        \caption{Min-cost $1$-hyperedge cover of $R$}
    \end{subfigure}
    \caption{}
    \label{fig:oltsp-serve-unexpected-requests}
\end{figure}

Since every actual request is covered by one hyperedge, we conclude that 
Phase~(iii) takes time at most
\(
    \predC + 3 \cdot \Gamma_1(R, \predR).
\)
Adding the time for Phases~(i) and~(ii) gives a makespan of at most
\[
    (1 + \alpha) \predC + 3 \cdot \Gamma_1(R, \predR) 
    \leq (1 + \alpha) \left( \optC + \Gamma_\infty(\predR, R) + 3 \cdot \Gamma_1(R, \predR) \right)
    \leq (1 + \alpha) \left( \optC + 3 \cdot \errhyp_1 \right). \qedhere 
\]
\end{proof}

\begin{lemma}\label{lem:competitive_general2}
\ALGOGEN has a competitive ratio of at most $1 + \rho + \frac{\rho}{\alpha}$, for any~$\alpha > 0$ and any~$\rho$-competitive algorithm used in Phase~(i).
\end{lemma}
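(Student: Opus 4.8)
The plan is to split the analysis by the phase in which \ALGOGEN terminates, and to use the $\rho$\nobreakdash-competitiveness of $\cA$ to control $\predC$ as soon as the algorithm leaves Phase~(i). Let $m_\cA$ denote the makespan of $\cA$ on the actual instance, so $m_\cA \le \rho\optC$. First I would dispose of the case where \ALGOGEN terminates during Phase~(i). Since the server following $\cA$ ends at the origin at time $m_\cA$, at every time $t \le m_\cA$ it is within distance $m_\cA - t$ of the origin, i.e.\ $t + d(p(t),o) \le m_\cA$. Hence, if $m_\cA \le \alpha\predC$, the stopping condition $t \le \alpha\predC - d(p(t),o)$ of Phase~(i) holds throughout $[0,m_\cA]$, so \ALGOGEN stays in Phase~(i), receives the end signal exactly when $\cA$ finishes, and its makespan is $m_\cA \le \rho\optC \le (1+\rho+\tfrac{\rho}{\alpha})\optC$. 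Conversely, at $t=m_\cA$ one has $m_\cA + d(p(m_\cA),o) = m_\cA$, so if $m_\cA > \alpha\predC$ the condition already failed before time $m_\cA$; thus \ALGOGEN terminates in Phase~(i) \emph{iff} $m_\cA \le \alpha\predC$.

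Now assume $m_\cA > \alpha\predC$, so that \ALGOGEN proceeds past Phase~(i). This immediately yields the key inequality
\[
    \predC < \frac{m_\cA}{\alpha} \le \frac{\rho}{\alpha}\,\optC .
\]
Because the server's trajectory is continuous and the Phase~(i) condition just fails at the end of Phase~(i), that phase ends at a time $t_1$ with $t_1 + d(p(t_1),o) = \alpha\predC$; the move to the origin in Phase~(ii) then ends at time $t_2 = \alpha\predC$. If \ALGOGEN already terminates in Phase~(ii), its makespan equals $t_2 = \alpha\predC < \rho\optC$, and we are done.

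In the remaining case \ALGOGEN runs \PR starting from the origin at time $t_2 = \alpha\predC$. Let $\tau^\star$ be the larger of $t_2$ and the last release date of an actual request; the last recomputation of \PR happens at $\tau^\star$, after which \PR follows, without further interruption, a fastest tour from its current position $p(\tau^\star)$ through all still unserved requests back to the origin. That tour is no longer than the tour which first returns to the origin, then serves all predicted requests along $\predT$, then serves all actual requests along an optimal tour for $R$; hence its length is at most $d(p(\tau^\star),o) + \predC + \optC \le (\tau^\star - t_2) + \predC + \optC$, using that \PR's server left the origin at time $t_2$. Therefore the makespan of \ALGOGEN is at most $2\tau^\star - t_2 + \predC + \optC$. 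If $\tau^\star = t_2$, this equals $(1+\alpha)\predC + \optC < (1+\rho+\tfrac{\rho}{\alpha})\optC$ by the key inequality. If $\tau^\star > t_2$, then $\tau^\star$ is an actual release date, so $\tau^\star \le \optC$, and the bound becomes $3\optC + (1-\alpha)\predC$, which is again at most $(1+\rho+\tfrac{\rho}{\alpha})\optC$ after substituting the key inequality (for $\alpha<1$ using only $\rho\ge1$; for $\alpha\ge1$ using $\rho\ge2$, which holds for \oltsp and \oldarp).

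The main obstacle is the bookkeeping around the phase-transition times: arguing that the Phase~(i) condition persists \emph{exactly} up to $t_2 = \alpha\predC$ (which needs the continuity of the trajectory together with the observation that an early termination of $\cA$ can occur only when $m_\cA \le \alpha\predC$), and then extracting the correct constant from Phase~(iii) via the ``return home, serve $\predR$, serve $R$'' bound combined with the case distinction on $\tau^\star$. One should also verify the degenerate inputs ($\predR=\emptyset$ or $R=\emptyset$, actual release dates larger than $t_2$) and note that $\predC$ may be read as the length of the route $\predT$, so that following $\predT$ starting from time $t_2$ still costs at most $\predC$.
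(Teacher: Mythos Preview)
Your proof is correct and follows essentially the same approach as the paper's: split by the phase in which \ALGOGEN terminates, extract the key inequality $\predC < \tfrac{\rho}{\alpha}\optC$ from the fact that Phase~(i) was left, and bound the Phase~(iii) tail via the ``return to origin, follow $\predT$, follow an optimal tour for $R$'' upper bound, case-splitting on whether the last request arrives before or after Phase~(iii) begins. Your extra bookkeeping (the iff characterisation of terminating in Phase~(i), and the explicit treatment of $\alpha\ge 1$ using $\rho\ge 2$) is more careful than the paper's write-up but does not constitute a different route.
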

\begin{proof}
If the algorithm terminates during Phase~(i) or~(ii), the competitive ratio is $\rho$. We are guaranteed to finish in one of these two phases if $\rho \optC \leq \alpha\predC$.

If we terminate within Phase~(iii), then $\predC < \frac{\rho}{\alpha} \optC$. Once the last request has arrived at some time~$\tlast \leq \optC$, our tour stays fixed.  We distinguish two cases. %
If the last request arrives before the end of Phase~(ii), then the cost of our tour comprises of the cost for finishing Phase~(ii), which is at most  $\alpha \predC$, and the cost of \PR for following the predicted tour, including all unexpected yet unserved requests, which is at most $\predC + \optC$. The total cost is thus bounded from above by 
\[ 
    \alpha \predC + \predC + \optC \leq \left( 1+ (1+\alpha)\cdot \frac{\rho}{\alpha} \right) \cdot  \optC = 
\left( 1+\rho + \frac{\rho}{\alpha} \right) \cdot \optC. 
\]

In the second case, the last request arrives in Phase~(iii). In this case the 
cost after $\tlast$ is the cost of following the predicted tour, adapted for incorporating the unexpected, yet unserved, requests. This is bounded above by the cost of returning to the origin, following the predicted tour $\predT$, and finally following the optimal tour, $T^*$. Note that the cost of returning to the origin is at most~$\tlast-\alpha \optC$. 
Hence, we %
complete the proof by upper bounding the makespan, for any $\rho\geq 1$, by 
\[
\tlast + \tlast- \alpha \predC + \predC + \optC \leq \left( 3 + (1-\alpha)\cdot \frac{\rho}{\alpha} \right) \optC = \left( 3-\rho + \frac{\rho}{\alpha}\right) \optC \leq \left( 1+\rho + \frac{\rho}{\alpha} \right) \optC. \qedhere
\]
\end{proof}

\subsection{Extensions and improvements}
\label{sec:oltsp_extensions}
\paragraph{An improved algorithm for \oltsp with predictions.} 
\label{sec:improved_robustness}
A best possible online algorithm for \oltsp is \smart, which is 2-competitive~\cite{AscheuerKR00}. Using this in Phase~(i) of \ALGOGEN, \Cref{thm:algogen} yields a robustness factor of at most $3 + \frac{2}{\alpha}$. We exploit \smart's waiting strategy to serve yet unserved requests and expedite Phase~(iii) avoiding unnecessary waiting time, obtaining an algorithm, \ALGOSMART, with improved robustness factor $2 + \frac{2}{\alpha}$. See \Cref{thm:algosmart} for \oltsp in~\Cref{app:smarttrust}.

\paragraph{Algorithms with polynomial running time.}
Algorithms \ALGOGEN and \ALGOSMART\  
require the computation of optimal TSP tours on subinstances. NP-hardness of TSP prohibits polynomial running time, unless P$=$NP. We provide performance guarantees for our learning-augmented algorithm framework when using polynomial-time $\nu$-approximation algorithms for solving TSP, which guarantee to find a TSP tour within a factor $\nu$ of the optimum.
We use a modified efficient \PR strategy which uses a $\nu$-approximate solution instead of an optimal solution and further ensures that errors due to such approximations do not add up too much compared to our error budget $\errhyp_1$. Adjusting the proof of \Cref{thm:algogen} yields the following result, whose proof is in~\Cref{app:polyalgs}. %
\begin{restatable}{theorem}{thmPolyRunningTime}\label{thm:competitive_general_polyn}
Given a $\nu$-approximation algorithm for metric TSP, the competitive ratio of the polynomial time \ALGOGEN using a polynomial time $\rho$-competitive online algorithm in Phase~(i) is, for any~$\alpha > 0$%
, bounded by 
\[
    \min\left\{(1 + \nu) (1 + \alpha) \left(1 + \frac{3 \cdot \errhyp_1}{2 \cdot \optC}\right),  \rho + (1 + \nu) \left(1+ \frac{\rho}{\alpha} \right)  \right\}.
\]
\end{restatable}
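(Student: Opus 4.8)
The plan is to establish, for the polynomial-time variant of \ALGOGEN, a two-sided guarantee exactly as for \ALGOGEN: an error-dependent bound in the spirit of \Cref{lem:competitive_general1} and a robustness bound in the spirit of \Cref{lem:competitive_general2}, and then take their minimum. The single change throughout is that every optimal \oltsp tour on a subinstance consisting of \emph{predicted} requests is replaced by a $\nu$-approximate one, while the detours serving a single unexpected request stay exact — a one-request \oltsp instance is polynomial-time solvable, which matters because those single-request detour costs $\gammatsp(\{\cdot\},\cdot)$ are precisely the quantities that constitute the budget $\errhyp_1$, so the budget is not inflated by approximation. Accordingly, efficient \ALGOGEN uses in its Phase-(i) stopping rule a value $\appC$ with $\predC\le\appC\le\nu\predC$, and efficient \PR follows $\nu$-approximate tours of the predicted part and exact single-request detours for unexpected requests.

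For the error-dependent bound I would keep the purely existential estimate $\predC\le\optC+\Gamma_\infty(\predR,R)$ from \Cref{lem:competitive_general1}; it is unaffected. Phases~(i) and~(ii) then take time at most $\alpha\appC\le\alpha\nu\predC$, and in Phase~(iii) efficient \PR follows a $\nu$-approximate tour of the unserved predicted requests (of length $\le\nu\predC$) and serves each unexpected request $(x',r')$ via the exact detour anchored at the predicted request $(\hx,\hr)$ covering it in a fixed min-cost $1$-hyperedge cover, which by the same two-case argument as in \Cref{lem:competitive_general1} costs at most $3\,\gammatsp(\{(x',r')\},(\hx,\hr))$; summing over the cover adds at most $3\,\Gamma_1(R,\predR)$. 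So the makespan is at most $(1+\alpha)\nu\predC+3\,\Gamma_1(R,\predR)\le(1+\alpha)\nu\bigl(\optC+\Gamma_\infty(\predR,R)\bigr)+3\,\Gamma_1(R,\predR)$, which, using $\nu\ge1$ (so that $\nu\le\tfrac32(1+\nu)$ and $(1+\nu)(1+\alpha)\ge2$), collects into $(1+\nu)(1+\alpha)\bigl(\optC+\tfrac32\errhyp_1\bigr)$, the first term of the minimum.

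For the robustness bound I would mirror the case split of \Cref{lem:competitive_general2}. If $\rho\optC\le\alpha\appC$ the algorithm never leaves Phases~(i)/(ii) and inherits the subroutine's ratio $\rho$; otherwise $\appC<\tfrac\rho\alpha\optC$ (hence also $\predC<\tfrac\rho\alpha\optC$) and Phase~(iii) is reached. There the key observation is that every tour efficient \PR commits to in Phase~(iii) may be taken to consist of ``return to the origin, then a $\nu$-approximate tour of the unserved predicted requests, then a $\nu$-approximate tour of the unserved unexpected requests'', so from the current position $p$ it has length at most $d(p,o)+\nu\predC+\nu\optC$ (the two subsets lie in $\predR$ resp.\ $R$, with optimal tours from the origin of length at most $\predC$ resp.\ $\optC$). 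Since no further replanning occurs after the last release time $\tlast\le\optC$: if no replanning happened in Phase~(iii) the makespan is at most $\alpha\appC+\nu(\predC+\optC)\le\rho\optC+\nu(\predC+\optC)$; otherwise the last replanning is at some $\tlast'\le\tlast\le\optC$ with $d(p(\tlast'),o)\le\tlast'-\alpha\appC$, giving makespan at most $\tlast'+d(p(\tlast'),o)+\nu(\predC+\optC)\le2\optC+\nu(\predC+\optC)$. Using $\predC<\tfrac\rho\alpha\optC$ and $\rho\ge1$, both are at most $\bigl(\rho+(1+\nu)(1+\tfrac\rho\alpha)\bigr)\optC$.

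The main obstacle is designing efficient \PR so that approximation error does not compound across the (potentially many) replannings triggered by successive unexpected requests — a scheme that $\nu$-approximates ``the current committed tour with a detour inserted'' would suffer a $\nu^{t}$ blow-up after $t$ replannings. The fix on which both bounds rely is that every recomputation $\nu$-approximates the \emph{true} optimum of a fixed subinstance (the unserved predicted, resp.\ the unserved unexpected, requests), never a previously computed approximate tour, and that efficient \PR always follows, from its current position, whichever of the ``separately $\nu$-approximated'' tour (which drives the robustness bound) and the ``committed tour plus exact single-request detour'' tour (which drives the error-dependent bound) is shorter. Making this non-compounding argument precise, and checking that efficient \PR runs in polynomial time, is the real work; assembling the two bounds into the stated minimum is routine.
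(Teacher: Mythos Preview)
Your approach is essentially the paper's: separate error-dependent and robustness lemmas mirroring \Cref{lem:competitive_general1} and \Cref{lem:competitive_general2}, both resting on the same non-compounding fix---polynomial-time \PR always follows the shorter of ``committed tour plus exact single-request excursion'' (driving the error bound) and a freshly recomputed approximate tour through the unserved requests (driving the robustness bound). One small imprecision: you write $\predC\le\appC\le\nu\predC$ and bound the Phase-(iii) predicted tour by $\nu\predC$, but the hypothesis only supplies a $\nu$-approximation for metric TSP \emph{without} release dates; following such a route on $\predR$ and waiting at early arrivals yields only a $(1+\nu)$-approximation of $\predC$ (this is exactly the paper's \Cref{lem:crchristofides}). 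With $(1+\nu)\predC$ in place of your $\nu\predC$, your inequality chains still reach the stated bounds---you already leave the needed slack when you invoke $\nu\le\tfrac32(1+\nu)$ and $\rho\ge1$---so this is bookkeeping rather than a genuine gap.
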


\paragraph{Online metric Dial-a-Ride with predictions.}
The \emph{Online Metric Dial-a-Ride Problem} (\oldarp) is a generalization of \oltsp where each request $(x^s,x^d,r)$ %
has a starting location $x^s$ and a destination~$x^d$. %
To serve a request, %
the server must first visit $x^s$ at some time not earlier than $r$, and then $x^d$. We assume that the server can carry at most one request at the time and 
cannot store it after pickup.

We show in \Cref{app:oldarp} that slight modifications of \ALGOGEN and \ALGOSMART yield \Cref{thm:algogen,thm:algosmart} for this  generalized setting. We define the cost function $\gammadar$ for the cover error: %
\begin{list}{}{}
    \item[$\gammadar(R',(x^s,x^d,r))=$] $\min \{\gammatsp(R', (x^s,r)), \gammatsp(R', (x^d, r + d(x^s, x^d)))\} + D$,
\end{list}
where $D$ is the maximum transportation distance in $R \cap \predR$.
Intuitively, an excursion %
can start %
from~$x^s$ after time $r$ or from $x^d$ after time $r + d(x^s,x^d)$ to serve %
$R'$, 
whatever is the shorter of the two. 
\paragraph{An improved algorithm for \oltsp on the half-line metric.}
When restricting the metric space to~$X = \mathbb{R}_{\geq 0}$, the best possible online algorithm %
is $\frac{3}{2}$-competitive~\cite{BlomKPS01}.
We design a learning-augmented algorithm tailored to this metric space and a minimalistic prediction, namely, a single value 
$\predC$ predicting the optimal makespan~$\optC$. We prove (\Cref{app:half_line}) the following error-dependent performance bound, which gives an almost tight consistency-robustness tradeoff w.r.t \Cref{thm:tradeoff-lb}.
\begin{restatable}{theorem}{thmHalfLine}
    There is a learning-augmented algorithm for the half-line metric that has for every $\alpha \in (0,1/2]$ a competitive ratio of at most
    \[
       \min\left\{ (1 + \alpha) \left( 1 + \frac{\errhyp_1}{\optC} \right), \frac{3}{2 \alpha}  \right\}.
    \]
\end{restatable}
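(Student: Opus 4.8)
The plan is to design an algorithm for \oltsp on the half-line $X = \mathbb{R}_{\geq 0}$ that uses the single predicted makespan value $\predC$ to decide how far out to travel preemptively, and then to analyze it by the usual case split into a consistency bound and a robustness bound, taking the minimum of the two. The key structural fact on the half-line is that the optimal makespan for any instance is exactly $2 \cdot \max\{x : (x,r) \in R \text{ with } r \leq \text{(time)}\}$ adjusted for release dates; more precisely $\optC = \max_{(x,r)\in R} \max\{2x,\, r + x\}$. So an offline optimum simply goes out to the farthest point (waiting if necessary for late releases) and returns. The algorithm I would use: wait at the origin until time $\alpha \predC$ doing nothing (this is the delay phase playing the role of Phase~(i)–(ii) of \ALGOGEN, but since on the half-line a good online strategy also just waits, there is no need for a black-box subroutine); then walk outward at unit speed. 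Whenever a request at position $x$ is released that lies beyond the current planned turning point, extend the outward trip to include it. Return to the origin once all released requests have been served and no farther request can appear (using the end signal).

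\textbf{The consistency bound} $(1+\alpha)(1 + \errhyp_1/\optC)$: First bound $\predC$ against $\optC$ and $\Gamma_\infty(\predR, R)$ — but here $\predR$ is just the number $\predC$, so the natural reading is that $\Gamma_\infty$ measures how much the predicted makespan underestimates what is actually needed, i.e., the ``absent predicted requests'' part degenerates and we only pay for $\predC$ being too small. If $\predC \geq \optC$ the algorithm finishes by time $\max\{\alpha\predC + \text{(farthest point)} , \ldots\}$; one shows the makespan is at most $(1+\alpha)\predC$ when $\predC$ is a slight overestimate, and more generally at most $(1+\alpha)(\optC + \errhyp_1)$ by charging the shortfall $\optC - \predC$ (when $\predC$ underestimates) and any detours forced by unexpected requests $R \setminus \predR$ to the $\Gamma_1(R,\predR)$ term via the hyperedge covering $\{(x',r')\}$ by the relevant predicted request; since on the line an ``excursion'' to a farther point costs only $2(x' - \hx)$ beyond what was already planned, the factor is $1$ rather than $3$, which is exactly why the half-line bound has $\errhyp_1$ without the constant $3$.

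\textbf{The robustness bound} $\frac{3}{2\alpha}$: If the algorithm never finishes in the delay phase, then $\frac{3}{2}\optC > \alpha \predC$, i.e., $\predC < \frac{3}{2\alpha}\optC$ — wait, the cleaner statement is: the algorithm's makespan is at most $\alpha\predC + (\text{cost of serving everything starting from the origin at time }\alpha\predC)$. Since on the half-line serving all requests from the origin costs at most the online-competitive $\frac{3}{2}\optC$ worth of overhead over $\optC$... the precise accounting: once we are at the origin at time $\alpha\predC$, the residual instance has optimal makespan at most $\optC$, and walking out-and-back while respecting release dates costs at most $\frac{3}{2}\optC$ by the $\frac{3}{2}$-competitiveness structure of the half-line greedy strategy \cite{BlomKPS01}. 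But we also spent $\alpha\predC$ waiting; if $\predC$ is huge this is the binding term, giving $\approx \alpha\predC$, which is unbounded — so the robustness argument must instead abort the outward travel at a point determined by elapsed time, not by $\predC$. I would have the algorithm, once past time $\alpha\predC$, switch to the standard $\frac{3}{2}$-competitive online rule, and observe that the total is at most $\frac{3}{2\alpha}\optC$ because the waiting until $\alpha\predC$ only hurts if $\predC \leq \frac{1}{\alpha}\optC$-ish; the clean bookkeeping gives $\frac{3}{2\alpha}$ directly by comparing $\alpha\predC$ against the threshold at which we'd have already finished.

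\textbf{The main obstacle} I anticipate is getting the robustness constant to be exactly $\frac{3}{2\alpha}$ rather than something like $\frac{3}{2} + \frac{1}{2\alpha}$: this requires that the waiting phase and the subsequent half-line online strategy be \emph{aligned} so that the waiting is not simply additive on top of a $\frac{3}{2}$-competitive tail, but rather the waiting time itself is ``absorbed'' — intuitively, waiting until $\alpha\predC$ at the origin is harmless precisely when the real instance also forces waiting (late release dates), and when it does not, the instance must be small and we finish quickly. Nailing this down is the delicate part, exactly analogous to the alignment of waiting strategies that yields the improved $2 + \frac{2}{\alpha}$ bound of \Cref{thm:algosmart} over the generic $3 + \frac{2}{\alpha}$; I would mirror that argument, using the specific structure of \smart on the half-line. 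The consistency side is comparatively routine once the line structure $\optC = \max_{(x,r)}\max\{2x, r+x\}$ and the degenerate form of $\Gamma_\infty$ for a scalar prediction are in hand.
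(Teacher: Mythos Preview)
Your proposal has a genuine gap in the algorithm design that breaks the robustness bound. You propose to \emph{wait idly at the origin until time $\alpha\predC$} (``there is no need for a black-box subroutine'') and only then start moving. But consider a single request $(1,0)$ with $\predC$ arbitrarily large: your server sits at the origin until time $\alpha\predC$, so the makespan is at least $\alpha\predC + 2$ while $\optC = 2$. The ratio is unbounded, not $\tfrac{3}{2\alpha}$. You notice the problem (``if $\predC$ is huge this is the binding term\ldots unbounded'') but your fix---``once past time $\alpha\predC$, switch to the standard $\tfrac{3}{2}$-competitive online rule''---does nothing, because the $\alpha\predC$ delay has already been incurred. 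The claim that on the half-line ``a good online strategy also just waits'' is false: \mrin moves rightward to serve released requests and returns, it does not sit idle.

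The paper's algorithm \ALGOHL avoids this by running \mrin during Phase~(i) (so a small instance is fully served well before $\alpha\predC$), and then in Phase~(ii) it moves the server to the carefully chosen point $p_{(iii)} = \tfrac{1}{2}\bigl((1-\alpha)\predC + p_{(ii)}\bigr)$, where $p_{(ii)}$ is the server's position at time $\alpha\predC$. This choice is the crux: in the robustness proof one compares \ALGOHL to a full run of \mrin on the same instance and shows $\alpha\predC + C_{(iii)} \leq \tfrac{3}{2}\optC + (p_{(iii)}-p_{(ii)})$, giving $\algC \leq (1-\alpha)\predC + \tfrac{3}{2}\optC$; since reaching Phase~(ii) forces $\alpha\predC \leq \tfrac{3}{2}\optC$, the $\tfrac{3}{2\alpha}$ bound follows. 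Simultaneously, the same $p_{(iii)}$ makes the consistency case work via $t_{(iii)} + p_{(iii)} = \predC$. Your sketch has no analogue of this balanced target point, and without it neither bound goes through with the stated constants. The consistency side of your proposal is closer in spirit (and the identity $\errhyp_1 = |\predC - \optC|$ you implicitly use is correct), but until the algorithm is fixed to actively serve requests in Phase~(i) and to pre-move to the specific point $p_{(iii)}$, the argument cannot be completed.
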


\section{Online network design problems with predictions}\label{sec:online-list}

This section %
sketches the applicability of our new error measure %
for the online-list problems Steiner Tree, Steiner Forest and (capacitated) facility location. %
To prove new error-dependent performance bounds as stated in \Cref{thm:steiner-tree} and \Cref{thm:steiner-forest}, we revisit the algorithms proposed by Azar et al.~\cite{AzarPT22} and analyze it w.r.t.~our cover error measure. 
The key is an appropriate, problem-specific definition of the cost of a hyperedge and the corresponding analysis.
Recall that the cost of a hyperedge $R' \cup \{x'\}%
$ should (intuitively) be equal to the value of an optimal solution %
for serving a set of unexpected (predicted absent) requests %
$R'$ %
w.r.t.~a predicted (actual) request %
$x'$. %
We define the cost functions:

\begin{list}{}{}
    \item[\textbf{Steiner tree:}] $\gammast(R',x')=\ $\emph{cost of an optimal Steiner tree for terminals $R'$ with root $x'$.} %

    \item[\textbf{Steiner forest:}] $\gammasf(R',x')=\ $\emph{cost of an optimal Steiner forest
    for terminal pairs $R'$ when connecting via %
    $x'=(s',t')$ is free.}

    \item[\textbf{Facility location:}] $\gammafl(R',x')=\ $\emph{
    cost for opening facility $x'$ and assigning clients $R'$ to it.
    }
\end{list}

Our main technical contribution for online network design problems are the proofs of \Cref{thm:steiner-tree} and \Cref{thm:steiner-forest}.
We now give some intuition on how these proofs work by considering the online Steiner tree problem, and defer details and results for this and the other problems to \Cref{app:networkdesign}.

On a high level, the algorithm by Azar et al.~\cite{AzarPT22} for the online Steiner tree problem does the following. Each new request (terminal) is connected to the current solution greedily by buying edges on a shortest path to a vertex of the current tree. When the Greedy cost increased sufficiently (with thresholds following a doubling-strategy), the algorithm spends a certain budget (depending on the spent Greedy cost) on connecting as many future predicted requests as possible to the current solution.

The proof of \Cref{thm:steiner-tree} splits the execution of this algorithm into two parts, where the first part considers the time until all predicted requests are satisfied, and the second part the remaining execution. %
We then use a sub-result provided by Azar et al.~\cite{AzarPT22} which roughly bounds the total cost of the first part by the optimal solutions of $R$ and $\predR$, and the total cost of the second part by the cost of the algorithm for serving specific subsequences of the request sequence.
To further bound the first part, we use the structure of a min-cost $\infty$-hyperedge cover of $\predR$ to prove an upper bound of at most $\cO(1) \cdot \OPT + \cO(1) \cdot \Gamma_\infty(\predR, R)$. For the second part, we consider the total cost the Greedy %
algorithm incurs for %
a hyperedge of a min-cost $k$-hyperedge cover of $R$,
and conclude by the bounded hyperedge size and Greedy properties %
that this %
is at most $\cO(\log k)$ times the hyperedge cost, yielding a total bound of $\cO(1) \cdot \OPT + \cO(\log k) \cdot \Gamma_k(R, \predR)$. Here we especially use the fact that in our chosen partition of the algorithm's execution, any predicted terminal $\hx$ which covers actual requests must have already been served in the first part.

\section{Experiments}\label{sec:experiments}

We performed various empirical experiments
on real-world \oltsp instances that demonstrate the benefits of using our algorithms over classic online algorithms. %
We consider the road network of Manhattan~\cite{osm,boeing17osmnx} and compose $100$ instances of $10$ requests each based on taxi pickup requests from a dataset offered by the NYC Taxi \& Limousine Commission.\footnote{\url{https://www1.nyc.gov/site/tlc/about/tlc-trip-record-data.page}, downloaded 02/05/22} We compare \ALGOSMART with the classic online algorithms \replan~\cite{AscheuerKR00}, \ignore~\cite{AscheuerKR00,FeuersteinS01,ShmoysWW95} and \smart~\cite{AscheuerKR00}; all algorithms use efficient TSP heuristics.
We report for every experiment and instance the empirical competitive ratio, i.e. the average ratio between the algorithms performance and the approximated value of the optimal makespan, as well as error bars that denote the~95\% confidence interval over all instances.

We sketch here two relevant experiments and defer further details %
to \Cref{app:experiments}. The first experiment considers synthetic predictions with Gaussian noise $\sigma$ only in the request locations, i.e., the release dates are predicted correctly. %
The results (\Cref{fig:noise-in-locations-main}) show that \ALGOSMART with $\alpha = 0.1$ dominates classic algorithms even for arbitrarily bad predictions. In the second experiment only a part of the actual instance is predicted, which is an interesting and practice-relevant variant. %
Again, the results (\Cref{fig:partial-instance-main}) show that for small values of $\alpha$, \ALGOSMART outperforms all classic~algorithms.

\begin{figure}    
    \begin{subfigure}{0.5\textwidth}
        \includegraphics[width=\textwidth]{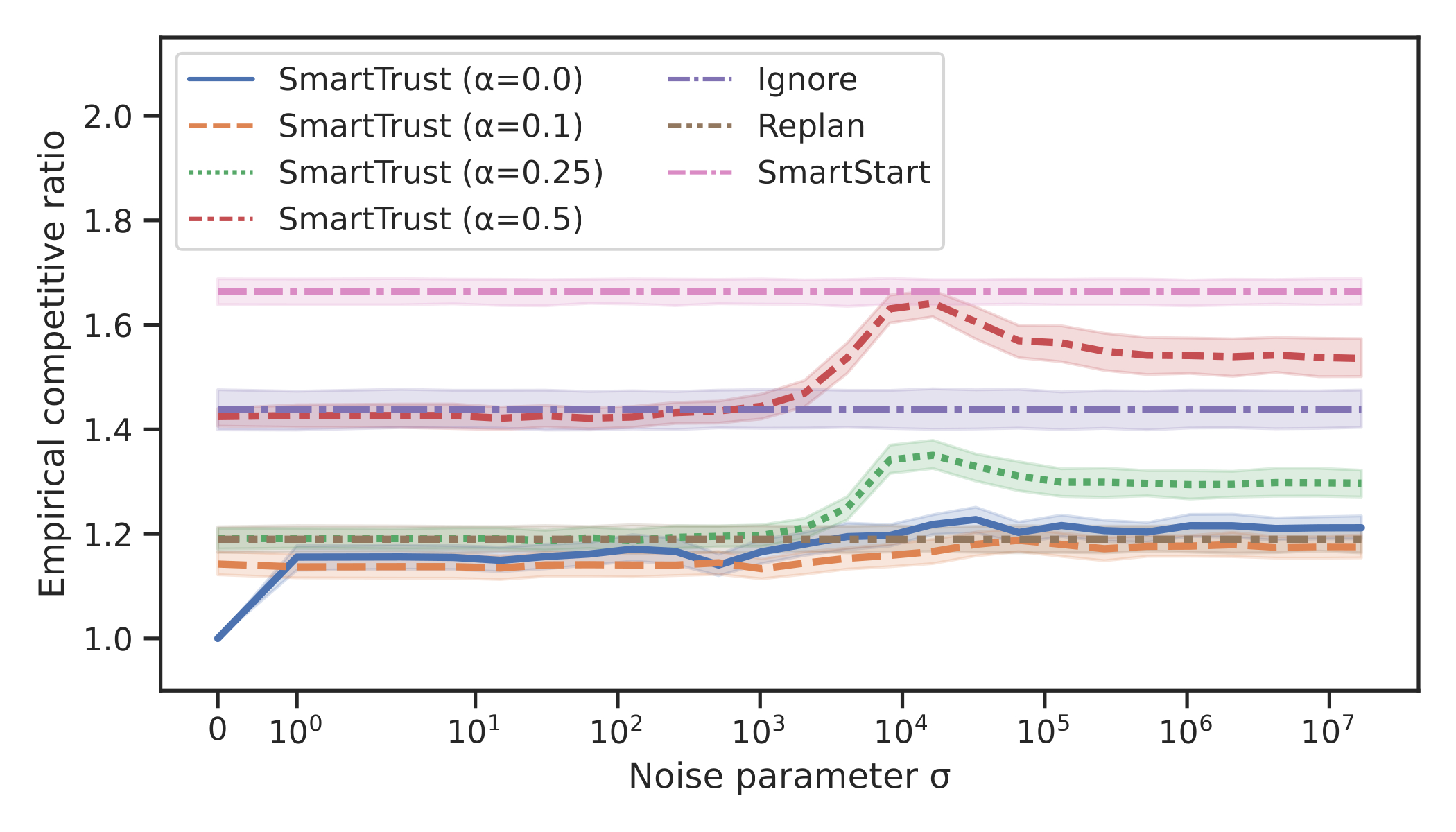}
        \caption{Noise only in request locations}
        \label{fig:noise-in-locations-main}
    \end{subfigure}
    \begin{subfigure}{0.5\textwidth}
        \includegraphics[width=\textwidth]{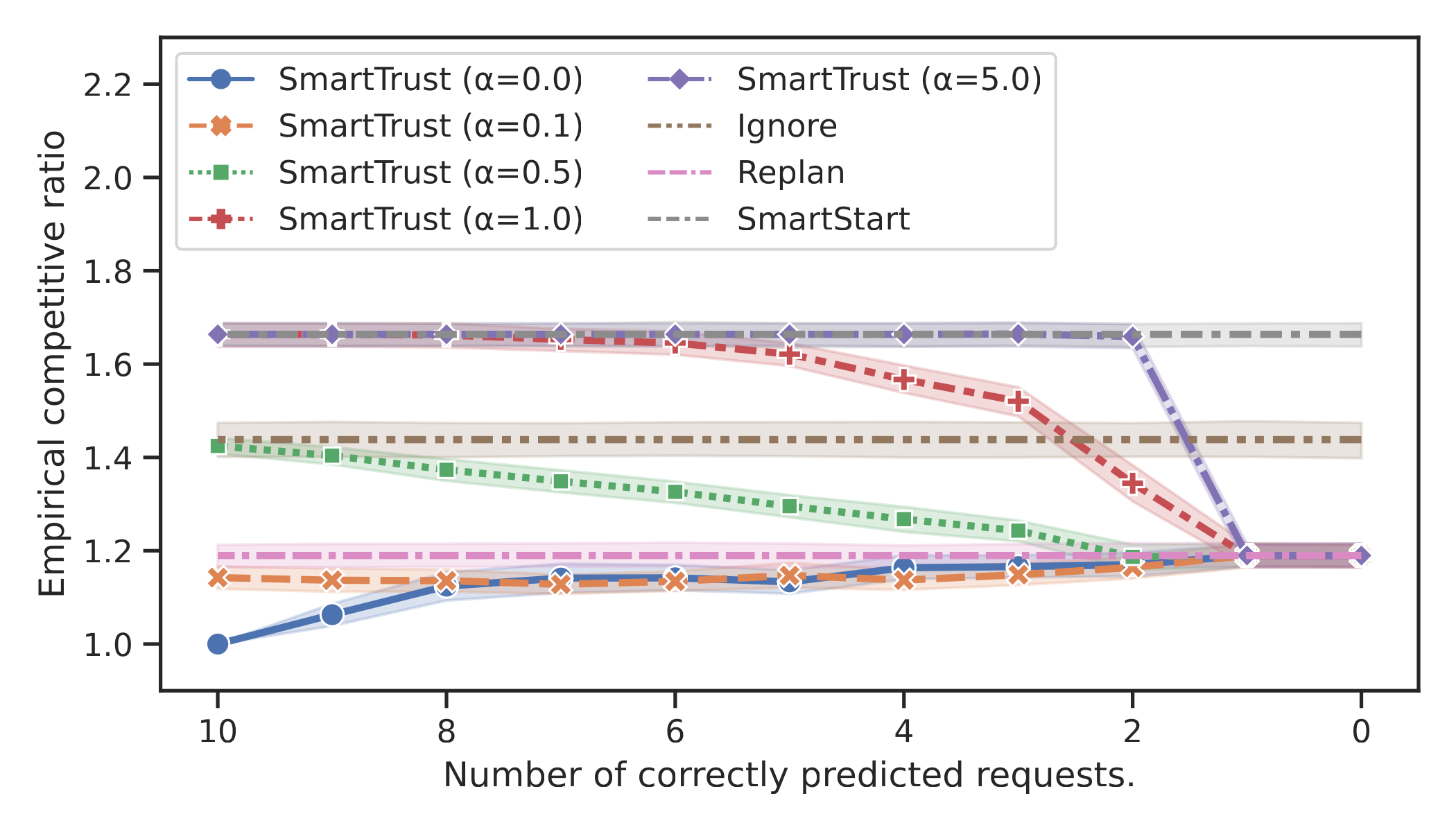}
        \caption{Partial instance predicted correctly}
        \label{fig:partial-instance-main}
    \end{subfigure}
\caption{Experimental results for two different prediction settings  (100 instances with 10 requests each)}
\end{figure}

\section*{Concluding remarks}

The universal cover error can be applied to arbitrary problems with uncertain inputs. %
As it seems to be the first error measure that captures arrival times, it seems very natural to investigate, in particular, other online-time problems such as, e.g., scheduling problems. Further, it would be interesting to identify more compact or smaller predictions. While we predict a full input instance much less information might be sufficient to gain high-quality solutions. This can be only partial information about the input instance or predictions on algorithmic actions, such as an optimal tour instead of request sequences. In the latter case, we can directly apply our framework after approximating the predicted tour by some time-stamped discrete points and using those as input prediction.

\bibliographystyle{plain}
\bibliography{references}

\begin{thebibliography}{10}

\bibitem{AlmanzaCLPR21}
Matteo Almanza, Flavio Chierichetti, Silvio Lattanzi, Alessandro Panconesi, and
  Giuseppe Re.
\newblock Online facility location with multiple advice.
\newblock In {\em NeurIPS}, pages 4661--4673, 2021.

\bibitem{AnNS17}
Hyung{-}Chan An, Ashkan Norouzi{-}Fard, and Ola Svensson.
\newblock Dynamic facility location via exponential clocks.
\newblock {\em {ACM} Trans. Algorithms}, 13(2):21:1--21:20, 2017.

\bibitem{AnagnostopoulosBUH04}
Aris Anagnostopoulos, Russell Bent, Eli Upfal, and Pascal~Van Hentenryck.
\newblock A simple and deterministic competitive algorithm for online facility
  location.
\newblock {\em Inf. Comput.}, 194(2):175--202, 2004.

\bibitem{Angelopoulos07}
Spyros Angelopoulos.
\newblock Improved bounds for the online steiner tree problem in graphs of
  bounded edge-asymmetry.
\newblock In {\em {SODA}}, pages 248--257. {SIAM}, 2007.

\bibitem{AntoniadisCE0S20}
Antonios Antoniadis, Christian Coester, Marek Eli{\'{a}}s, Adam Polak, and
  Bertrand Simon.
\newblock Online metric algorithms with untrusted predictions.
\newblock In {\em {ICML}}, volume 119 of {\em Proceedings of Machine Learning
  Research}, pages 345--355. {PMLR}, 2020.

\bibitem{AntoniadisGS21speed-scaling}
Antonios Antoniadis, Peyman~Jabbarzade Ganje, and Golnoosh Shahkarami.
\newblock A novel prediction setup for online speed-scaling.
\newblock In {\em {SWAT}}, volume 227 of {\em LIPIcs}, pages 9:1--9:20. Schloss
  Dagstuhl - Leibniz-Zentrum f{\"{u}}r Informatik, 2022.

\bibitem{AscheuerKR00}
Norbert Ascheuer, Sven~Oliver Krumke, and J{\"{o}}rg Rambau.
\newblock Online dial-a-ride problems: Minimizing the completion time.
\newblock In {\em {STACS}}, volume 1770 of {\em Lecture Notes in Computer
  Science}, pages 639--650. Springer, 2000.

\bibitem{AusielloFLST01}
Giorgio Ausiello, Esteban Feuerstein, Stefano Leonardi, Leen Stougie, and
  Maurizio Talamo.
\newblock Algorithms for the on-line travelling salesman.
\newblock {\em Algorithmica}, 29(4):560--581, 2001.

\bibitem{AwerbuchAB96}
Baruch Awerbuch, Yossi Azar, and Yair Bartal.
\newblock On-line generalized steiner problem.
\newblock In {\em {SODA}}, pages 68--74. {ACM/SIAM}, 1996.

\bibitem{AzarLT21}
Yossi Azar, Stefano Leonardi, and Noam Touitou.
\newblock Flow time scheduling with uncertain processing time.
\newblock In {\em {STOC}}, pages 1070--1080. {ACM}, 2021.

\bibitem{AzarLT22}
Yossi Azar, Stefano Leonardi, and Noam Touitou.
\newblock Distortion-oblivious algorithms for minimizing flow time.
\newblock In {\em {SODA}}, pages 252--274. {SIAM}, 2022.

\bibitem{AzarPT22}
Yossi Azar, Debmalya Panigrahi, and Noam Touitou.
\newblock Online graph algorithms with predictions.
\newblock In {\em {SODA}}, pages 35--66. {SIAM}, 2022.

\bibitem{BamasDM22}
{\'{E}}tienne Bamas, Marina Drygala, and Andreas Maggiori.
\newblock An improved analysis of greedy for online steiner forest.
\newblock In {\em {SODA}}, pages 3202--3229. {SIAM}, 2022.

\bibitem{BamasMRS20}
{\'{E}}tienne Bamas, Andreas Maggiori, Lars Rohwedder, and Ola Svensson.
\newblock Learning augmented energy minimization via speed scaling.
\newblock In {\em {NeurIPS}}, 2020.

\bibitem{BermanC97}
Piotr Berman and Chris Coulston.
\newblock On-line algorithms for steiner tree problems (extended abstract).
\newblock In {\em {STOC}}, pages 344--353. {ACM}, 1997.

\bibitem{BjeldeHDHLMSSS21}
Antje Bjelde, Jan Hackfeld, Yann Disser, Christoph Hansknecht, Maarten Lipmann,
  Julie Mei{\ss}ner, Miriam Schl{\"{o}}ter, Kevin Schewior, and Leen Stougie.
\newblock Tight bounds for online {TSP} on the line.
\newblock {\em {ACM} Trans. Algorithms}, 17(1):3:1--3:58, 2021.

\bibitem{BlomKPS01}
Michiel Blom, Sven~Oliver Krumke, Willem de~Paepe, and Leen Stougie.
\newblock The online {TSP} against fair adversaries.
\newblock {\em {INFORMS} J. Comput.}, 13(2):138--148, 2001.

\bibitem{boeing17osmnx}
Geoff Boeing.
\newblock Osmnx: New methods for acquiring, constructing, analyzing, and
  visualizing complex street networks.
\newblock {\em Comput. Environ. Urban Syst.}, 65:126--139, 2017.

\bibitem{BuchbinderCN14}
Niv Buchbinder, Shahar Chen, and Joseph Naor.
\newblock Competitive analysis via regularization.
\newblock In {\em {SODA}}, pages 436--444. {SIAM}, 2014.

\bibitem{Chr76}
Nicos Christofides.
\newblock Worst-case analysis of a new heuristic for the travelling salesman
  problem.
\newblock Technical Report 388, Graduate School of Industrial Administration,
  Carnegie Mellon University, 1976.

\bibitem{DehghaniEHLS18}
Sina Dehghani, Soheil Ehsani, MohammadTaghi Hajiaghayi, Vahid Liaghat, and
  Saeed Seddighin.
\newblock Greedy algorithms for online survivable network design.
\newblock In {\em {ICALP}}, volume 107 of {\em LIPIcs}, pages 152:1--152:14.
  Schloss Dagstuhl - Leibniz-Zentrum f{\"{u}}r Informatik, 2018.

\bibitem{EberleLMNS22}
Franziska Eberle, Alexander Lindermayr, Nicole Megow, Lukas N{\"{o}}lke, and
  Jens Schl{\"{o}}ter.
\newblock Robustification of online graph exploration methods.
\newblock In {\em {AAAI}}, pages 9732--9740. {AAAI} Press, 2022.

\bibitem{EisenstatMS14}
David Eisenstat, Claire Mathieu, and Nicolas Schabanel.
\newblock Facility location in evolving metrics.
\newblock In {\em {ICALP} {(2)}}, volume 8573 of {\em Lecture Notes in Computer
  Science}, pages 459--470. Springer, 2014.

\bibitem{FeuersteinS01}
Esteban Feuerstein and Leen Stougie.
\newblock On-line single-server dial-a-ride problems.
\newblock {\em Theor. Comput. Sci.}, 268(1):91--105, 2001.

\bibitem{Fotakis07}
Dimitris Fotakis.
\newblock A primal-dual algorithm for online non-uniform facility location.
\newblock {\em J. Discrete Algorithms}, 5(1):141--148, 2007.

\bibitem{Fotakis08}
Dimitris Fotakis.
\newblock On the competitive ratio for online facility location.
\newblock {\em Algorithmica}, 50(1):1--57, 2008.

\bibitem{FotakisKZ21}
Dimitris Fotakis, Loukas Kavouras, and Lydia Zakynthinou.
\newblock Online facility location in evolving metrics.
\newblock {\em Algorithms}, 14(3):73, 2021.

\bibitem{GouleakisLS22learning-augmented}
Themis Gouleakis, Konstantinos Lakis, and Golnoosh Shahkarami.
\newblock Learning-augmented algorithms for online {TSP} on the line.
\newblock {\em CoRR}, abs/2206.00655, 2022.

\bibitem{HuWLCL22online}
Hsiao{-}Yu Hu, Hao{-}Ting Wei, Meng{-}Hsi Li, Kai{-}Min Chung, and Chung{-}Shou
  Liao.
\newblock Online {TSP} with predictions.
\newblock {\em CoRR}, abs/2206.15364, 2022.

\bibitem{Im0QP21}
Sungjin Im, Ravi Kumar, Mahshid~Montazer Qaem, and Manish Purohit.
\newblock Non-clairvoyant scheduling with predictions.
\newblock In {\em {SPAA}}, pages 285--294. {ACM}, 2021.

\bibitem{ImaseW91}
Makoto Imase and Bernard~M. Waxman.
\newblock Dynamic steiner tree problem.
\newblock {\em {SIAM} J. Discret. Math.}, 4(3):369--384, 1991.

\bibitem{JailletW06}
Patrick Jaillet and Michael~R. Wagner.
\newblock Online routing problems: Value of advanced information as improved
  competitive ratios.
\newblock {\em Transp. Sci.}, 40(2):200--210, 2006.

\bibitem{JailletW08}
Patrick Jaillet and Michael~R. Wagner.
\newblock Generalized online routing: New competitive ratios, resource
  augmentation, and asymptotic analyses.
\newblock {\em Oper. Res.}, 56(3):745--757, 2008.

\bibitem{kolmogorov09}
Vladimir Kolmogorov.
\newblock Blossom {V:} a new implementation of a minimum cost perfect matching
  algorithm.
\newblock {\em Math. Program. Comput.}, 1(1):43--67, 2009.

\bibitem{LattanziLMV20}
Silvio Lattanzi, Thomas Lavastida, Benjamin Moseley, and Sergei Vassilvitskii.
\newblock Online scheduling via learned weights.
\newblock In {\em {SODA}}, pages 1859--1877. {SIAM}, 2020.

\bibitem{LawlerLRS1985-TSPbook}
E.~L. Lawler, Jan~Karel Lenstra, Alexander H.~G. {Rinnooy Kan}, and David~B.
  Shmoys, editors.
\newblock {\em The Traveling Salesman Problem -- A Guided Tour of Combinatorial
  Optimization}.
\newblock John Wiley \& Sons Ltd., Chichester, 1985.

\bibitem{LindermayrM22}
Alexander Lindermayr and Nicole Megow.
\newblock Permutation predictions for non-clairvoyant scheduling.
\newblock In {\em {SPAA}}, pages 357--368. {ACM}, 2022.

\bibitem{LindermayrMS22}
Alexander Lindermayr, Nicole Megow, and Bertrand Simon.
\newblock Double coverage with machine-learned advice.
\newblock In {\em {ITCS}}, volume 215 of {\em LIPIcs}, pages 99:1--99:18.
  Schloss Dagstuhl - Leibniz-Zentrum f{\"{u}}r Informatik, 2022.

\bibitem{LipmannLPSS04}
Maarten Lipmann, Xiwen Lu, Willem de~Paepe, Ren{\'{e}} Sitters, and Leen
  Stougie.
\newblock On-line dial-a-ride problems under a restricted information model.
\newblock {\em Algorithmica}, 40(4):319--329, 2004.

\bibitem{LykourisV18}
Thodoris Lykouris and Sergei Vassilvitskii.
\newblock Competitive caching with machine learned advice.
\newblock In {\em {ICML}}, volume~80 of {\em Proceedings of Machine Learning
  Research}, pages 3302--3311. {PMLR}, 2018.

\bibitem{LykourisV21}
Thodoris Lykouris and Sergei Vassilvitskii.
\newblock Competitive caching with machine learned advice.
\newblock {\em J. {ACM}}, 68(4):24:1--24:25, 2021.

\bibitem{Meyerson01}
Adam Meyerson.
\newblock Online facility location.
\newblock In {\em {FOCS}}, pages 426--431. {IEEE} Computer Society, 2001.

\bibitem{Mitzenmacher20}
Michael Mitzenmacher.
\newblock Scheduling with predictions and the price of misprediction.
\newblock In {\em 11th Innovations in Theoretical Computer Science Conference
  ({ITCS})}, volume 151 of {\em LIPIcs}, pages 14:1--14:18. Schloss Dagstuhl -
  Leibniz-Zentrum f{\"{u}}r Informatik, 2020.

\bibitem{osm}
{OpenStreetMap contributors}.
\newblock \url{ https://www.openstreetmap.org }, 2017.

\bibitem{PurohitSK18}
Manish Purohit, Zoya Svitkina, and Ravi Kumar.
\newblock Improving online algorithms via {ML} predictions.
\newblock In {\em {NeurIPS}}, pages 9684--9693, 2018.

\bibitem{ScullyGM22}
Ziv Scully, Isaac Grosof, and Michael Mitzenmacher.
\newblock Uniform bounds for scheduling with job size estimates.
\newblock In {\em {ITCS}}, volume 215 of {\em LIPIcs}, pages 114:1--114:30.
  Schloss Dagstuhl - Leibniz-Zentrum f{\"{u}}r Informatik, 2022.

\bibitem{serdyukov78}
Anatoliy~I. Serdyukov.
\newblock O nekotorykh ekstremalnykh obkhodakh v grafakh (on some extremal
  walks in graphs).
\newblock {\em Upravlyaemye Sistemy (in Russian)}, 17:76--69, 1978.

\bibitem{ShmoysWW95}
David~B. Shmoys, Joel Wein, and David~P. Williamson.
\newblock Scheduling parallel machines on-line.
\newblock {\em {SIAM} J. Comput.}, 24(6):1313--1331, 1995.

\bibitem{Umboh15}
Seeun Umboh.
\newblock Online network design algorithms via hierarchical decompositions.
\newblock In {\em {SODA}}, pages 1373--1387. {SIAM}, 2015.

\bibitem{WestbrookY93}
Jeffery~R. Westbrook and Dicky C.~K. Yan.
\newblock Greedy algorithms for the on-line steiner tree and generalized
  steiner problems.
\newblock In {\em {WADS}}, volume 709 of {\em Lecture Notes in Computer
  Science}, pages 622--633. Springer, 1993.

\bibitem{XuM22}
Chenyang Xu and Benjamin Moseley.
\newblock Learning-augmented algorithms for online steiner tree.
\newblock In {\em {AAAI}}, pages 8744--8752. {AAAI} Press, 2022.

\end{thebibliography}

\newpage

\appendix

\section{A lower bound on the consistency-robustness tradeoff}\label{app:lower_bound}

\ThmTradeoffLB*

\begin{proof}
Let~$\epsilon > 0$ be a small constant such that $\epsilon \leq 1 - 2\alpha$. In the following we consider two instances.
The first instance consists of the two requests~$\sigma_1 = (0, 2\alpha + \epsilon)$ and~$\sigma_2 = (1,1)$. 
Since~$\alpha \leq (1 - \epsilon)/2$, 
in the optimal solution the server immediately moves to~1, 
serving $\sigma_2$ at time~1, and is back at the origin at time $2$, serving~$\sigma_1$. 
Suppose that algorithm $\cA$ has access to a perfect prediction. Thus, it has to finish the instance within 
time~$2(1 + \alpha)$ due to its consistency. We can make two observations on the behavior of algorithm~$\cA$.
Firstly, $\cA$ must serve~$\sigma_2$ before~$\sigma_1$, as otherwise, its server must be at the origin at 
time $2\alpha + \epsilon$ and can finish the instance at the earliest at time $2(1 + \alpha) + \epsilon$, 
a contradiction. Secondly, at time~1,~$\cA$'s server cannot be strictly to the left of the point~$1 - 2\alpha$, 
otherwise again its consistency would be contradicted.

Now consider a second instance, consisting of the single request $\sigma = (0, 2\alpha + \epsilon)$. Clearly, 
an optimal solution finishes at time~$2\alpha + \epsilon$. Suppose that algorithm~$\cA$ gets the same prediction 
as in the first instance. Since~$\cA$ is deterministic and the two instances are the same until time~1, it will behave the 
same as in the first instance until time~1. By our observations from the first instance we conclude that at 
time~1,~$\cA$'s server is at least at distance~$1 - 2\alpha$ from the origin, and it has not yet served~$\sigma_1$. 
Thus,~$\cA$ can finish the second instance at the earliest at time~$1 + 1 - 2\alpha$, yielding a robustness factor 
of at least~$1/\alpha - 1$ for an arbitrarily small~$\epsilon$.
\end{proof}

\section{Online routing problems with predictions}

\subsection{An improved algorithm for \oltsp with predictions}
\label{app:smarttrust}

We investigate the particular algorithm \smart by Ascheuer et al.~\cite{AscheuerKR00} to be applied in Phase~(i) of \ALGOGEN. We show that the general framework \ALGOGEN can carefully be adjusted to better exploit the properties of \smart and obtain an improved robustness guarantee. 

\begin{algorithm}
    \caption[]{\smart~\cite{AscheuerKR00}}
    \begin{algorithmic}
        \STATE Whenever the server is %
        at the origin at some time~$t$, compute
        an optimal tour~$S$ of length~$\ell(S)$ serving all the released requests currently unserved.
        If $\ell(S) \leq t$, follow~$S$ while ignoring all the requests that are released in the meanwhile. Otherwise, restart the algorithm at time~$\ell(S)$.
    \end{algorithmic}
\end{algorithm}

Ascheuer et al.~\cite{AscheuerKR00} showed that this algorithm has a competitive ratio of $2$ for \oldarp and, thus, \oltsp. \Cref{thm:algogen} directly implies the following result.
  
  \begin{corollary}\label{cor:easy-algogen-smartstart}
      \ALGOGEN using \smart in Phase~(i) has a competitive ratio bounded from above by 
  \[
      \min\left\{(1 + \alpha) \left(1 + \frac{3 \cdot \errhyp_1}{\optC}\right), 3 + \frac{2}{\alpha}\right\}, 
      \]
      for any~$\alpha > 0$. %
  \end{corollary}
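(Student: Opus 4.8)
The plan is to obtain the statement as a direct instantiation of the general framework analyzed in \Cref{thm:algogen}, so the proof is essentially a one-line reduction. First I would recall that Ascheuer et al.~\cite{AscheuerKR00} proved \smart to be $2$-competitive for \oldarp, hence also for its special case \oltsp. Thus \smart qualifies as a $\rho$-competitive online subroutine with $\rho = 2$ to be plugged into Phase~(i) of \ALGOGEN, and all hypotheses required by our analysis are met.

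Next I would invoke the two lemmas that together establish \Cref{thm:algogen}. Lemma~\ref{lem:competitive_general1} does not depend on which online algorithm is run in Phase~(i), so it already yields the bound $(1+\alpha)\left(1 + \frac{3\cdot\errhyp_1}{\optC}\right)$ verbatim. Lemma~\ref{lem:competitive_general2} gives the robustness bound $1 + \rho + \frac{\rho}{\alpha}$ for any $\rho$-competitive Phase~(i) algorithm; substituting $\rho = 2$ turns this into $1 + 2 + \frac{2}{\alpha} = 3 + \frac{2}{\alpha}$. Since the makespan of \ALGOGEN is simultaneously upper bounded by both quantities, taking their minimum completes the argument.

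I do not anticipate a genuine obstacle here: the corollary is obtained purely by substituting a known competitive ratio into an already-proven parametric guarantee. The only point worth an explicit sentence is to check that the model assumptions underlying the $2$-competitiveness of \cite{AscheuerKR00} — unit-speed server, origin start, and the end-of-instance signal — coincide with those fixed in \Cref{sec:metric}, so that the competitiveness of \smart transfers without change; this is immediate from the problem definitions given there.
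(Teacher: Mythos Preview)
Your proposal is correct and matches the paper's own argument: the corollary is stated there as an immediate consequence of \Cref{thm:algogen} after noting that \smart is $2$-competitive~\cite{AscheuerKR00}, i.e., exactly the substitution $\rho=2$ you describe. Your explicit appeal to Lemmas~\ref{lem:competitive_general1} and~\ref{lem:competitive_general2} just unpacks that theorem, so nothing is missing or different.
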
 
  
When \smart is used in Phase~(i), we can carefully adjust our \ALGOGEN strategy. We exploit the criteria by which the server waits at the origin and expedite, in this case, immediately to Phase~(ii), to avoid unnecessary waiting time. 
We define the following class of algorithms, 
parameterized by $\alpha > 0$, that we name \ALGOSMART. %
  
\begin{algorithm}
    \caption[]{\ALGOSMART}
    \begin{algorithmic}[1]
        \setAlgoLinenoFormat{\roman}
        \STATE Execute \smart with the following stopping criteria. If \smart decides to follow a tour $S$ of length $\ell(S)$ at time $t$ such that~$t + \ell(S) > \alpha \predC$, go to Phase~(ii). If \smart sleeps or idles at time $\alpha \predC$, go to Phase~(iii).
      \STATE Wait until time at least $\alpha \cdot \predC / 2$, then go to Phase~(iii).
      \STATE Follow the \PR strategy until the end. 
    \end{algorithmic}
\end{algorithm}

  It is easy to verify that Lemma~\ref{lem:competitive_general1} holds for \ALGOSMART. %
  Further, in Lemma~\ref{lemma:metric-oltsp-better-robustness} we show that we can improve upon the robustness factor given in Lemma~\ref{lem:competitive_general2}, which then implies \Cref{thm:algosmart} for \oltsp.

  \begin{restatable}{lemma}{lemmaSmartTrust}\label{lemma:metric-oltsp-better-robustness}
  \ALGOSMART has a %
  competitive ratio of at most $2 + 2 / \alpha$, for any~$\alpha > 0$.
  \end{restatable}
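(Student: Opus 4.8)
The plan is to adapt the robustness proof of Lemma~\ref{lem:competitive_general2} to exploit the specific structure of \smart. Recall that \smart only ever starts following a tour $S$ at a time $t$ when $t \geq \ell(S)$, and otherwise it waits at the origin. The key improvement will come from the observation that, in \ALGOSMART, whenever the server leaves Phase~(i) it does so either because \smart committed to a tour $S$ with $t+\ell(S) > \alpha\predC$, or because \smart is idle/sleeping at the origin at time $\alpha\predC$. In the second case the server is already at the origin and no Phase~(ii) detour is wasted; in the first case the server must return to the origin, but $t \geq \ell(S) \geq (\alpha\predC - t)$ gives $t \geq \alpha\predC/2$, and the tour $S$ has length at most $t$, so the extra cost of returning is controlled.

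\textbf{Step 1: the easy case.} If \ALGOSMART terminates during Phase~(i) or~(ii), then as in Lemma~\ref{lem:competitive_general2} the makespan is essentially that of the $2$-competitive algorithm \smart (the Phase~(ii) wait to time $\alpha\predC/2$ is dominated because we only reach Phase~(iii) when $\predC$ is large relative to $\optC$, analogously to the argument ``we are guaranteed to finish if $\rho\optC \leq \alpha\predC$''), so the ratio is at most~$2$, which is below $2 + 2/\alpha$.

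\textbf{Step 2: termination in Phase~(iii).} Here $\predC$ is large, specifically $\predC < \frac{2}{\alpha}\optC$ (otherwise Step~1 applies). Let $\tlast \leq \optC$ be the arrival time of the last request; after $\tlast$ the \PR tour stays fixed. As in Lemma~\ref{lem:competitive_general2} I split into subcases according to whether $\tlast$ falls before or during Phase~(iii). The crucial refinement is bounding the time $\tau$ at which Phase~(iii) begins together with the server's position at that moment. If Phase~(iii) was entered because \smart was idle at time $\alpha\predC$, the server is at the origin and Phase~(ii) adds nothing, so the total is at most $\alpha\predC + (\predC + \optC)$ as before but \emph{without} an extra return trip; if it was entered because \smart committed to $S$, then $\tau \leq t + \ell(S) \leq 2t$ and the server is within $\ell(S) \leq t \leq \alpha\predC$ of the origin, so returning costs at most $\alpha\predC$ and we again land at a bound of the form $\alpha\predC + \predC + \optC$ rather than the $3 + (1-\alpha)\rho/\alpha$ type bound, saving one additive $\optC$-worth of travel. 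Plugging $\predC < \frac{2}{\alpha}\optC$ and $\alpha \leq 1$ into $\alpha\predC + \predC + \optC \leq (1+\alpha)\predC + \optC$ yields $\leq (1+\alpha)\cdot\frac{2}{\alpha}\optC + \optC = (2 + \frac{2}{\alpha} + 1)\optC$; a slightly more careful accounting of when $\tlast$ occurs (using that no return is needed, or that the return is absorbed into the $\alpha\predC$ already counted) removes the stray $+1$ and gives exactly $2 + \frac{2}{\alpha}$.

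\textbf{The main obstacle} will be the careful bookkeeping in the subcase where the last request arrives \emph{during} Phase~(iii): one must argue that following the recomputed \PR tour from the server's current position, through the remaining predicted requests, and incorporating the late unexpected requests, costs at most roughly $\predC + \optC$ beyond what has already been spent — and in particular that the ``return to origin, follow $\predT$, follow $\optT$'' bound used in Lemma~\ref{lem:competitive_general2} can here be tightened because \smart's waiting discipline guarantees the server is never more than $\alpha\predC$ from the origin when Phase~(iii) starts, so the $\tlast - \alpha\optC$ return-cost term is replaced by something already accounted for. I expect this to require precisely tracking which portion of the $\alpha\predC$ budget in Phase~(i)/(ii) has actually been ``used up'' as travel versus waiting, and showing the overlap lets us avoid double-counting a full $\optC$.
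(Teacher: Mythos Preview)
Your proposal has a genuine gap: the ``stray $+1$'' is real, and the mechanism you outline cannot remove it. There are two intertwined problems.

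First, you misread the algorithm. When \smart would follow a tour $S$ with $t+\ell(S) > \alpha\predC$, \ALGOSMART aborts \emph{before} executing $S$; the server is still at the origin, and Phase~(ii) is a pure wait (at the origin) until $\alpha\predC/2$. There is no ``returning costs at most $\alpha\predC$'' step. Consequently your whole diagnosis that the improvement comes from tightening the return trip in the $\tlast \geq t_{(iii)}$ case is wrong. In fact, for that case the paper's bound is \emph{looser} than the corresponding bound for \ALGOGEN (since $t_{(iii)}$ may be only $\alpha\predC/2$ rather than $\alpha\predC$), landing exactly on $(2+2/\alpha)\optC$.

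Second, and crucially, the saving is in the case $\tlast < t_{(iii)}$, and it comes from an inequality you never invoke. Let $\tabort$ be the abort time and $\Cabort$ the length of an optimal tour through the requests unserved at $\tabort$. Your bound for this case is $\alpha\predC + \predC + \optC$, which with $\predC < \tfrac{2}{\alpha}\optC$ yields only $(3+2/\alpha)\optC$. The missing observation is that \smart, run on the full instance, is at the origin at time $\tabort$ with exactly those requests unserved, so its makespan satisfies $C_{\smart} \geq \tabort + \Cabort$; hence by $2$-competitiveness $\tabort + \Cabort \leq 2\optC$. Because Phase~(iii) then costs at most $\predC + \Cabort$, the makespan is at most $\tabort + \Cabort + \predC \leq 2\optC + \predC \leq (2+2/\alpha)\optC$. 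It is this \emph{joint} bound on $\tabort+\Cabort$, not any bookkeeping about return trips or waiting versus travel, that shaves the additive $\optC$.
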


  \begin{proof}
    Let~$\algC$ denote the makespan of~\ALGOSMART's tour.
    If the algorithm terminates during Phase~(i), then the competitive ratio is $2$ by Ascheuer et al.~\cite{AscheuerKR00}. 
    Suppose the algorithm enters Phases~(ii) and~(iii). Let $C_{S}$ be the cost of \smart when serving the whole actual online instance: because \smart is 2-competitive, it holds $C_{S}\leq 2\optC$. Since \ALGOSMART reaches Phase~(iii), is must be~$\alpha \predC < C_{S}$, hence it holds $\alpha \predC <  2 \optC$. 

    Let~$t_{(iii)}$ be the time at which Phase~(iii) starts,~$\tabort$ be the time at which Phase~(ii) begins~(i.e., \smart is aborted at time $\tabort$), and $\tlast$ be the last actual release date. We distinguish two cases.
    \begin{description}
    \item[Case $\boldsymbol{r_{last}} < \boldsymbol{t}_{\textbf{(\textit{iii}})}.$] Since \ALGOSMART reaches Phase~(iii), it also entered Phase~(ii) at time~$\tabort$. Let $\Cabort$ be the length of an optimal tour serving all actual requests that are unserved at time $\tabort$. Note that~$\tabort + \Cabort > \alpha \predC$, as otherwise this tour would have been started in Phase~(i), because of the workings of \smart. We further distinguish two subcases:
        \begin{description}     
            \item[ $\tabort \geq \alpha \cdot \predC / 2.$] In this case, \ALGOSMART does not have to wait in Phase~(ii), and thus we have $\tabort = t_{(iii)}$.
            Observe that $\tabort + \Cabort \leq C_{S} \leq 2 \cdot \optC$.
            Since $\tlast < t_{(iii)} = \tabort$, the length of Phase~(iii) is at most $\predC + \Cabort$, and we conclude
            \[
                \algC \leq \tabort + \Cabort + \predC \leq 2 \cdot \optC + \predC \leq \left( 2 + \frac{2}{\alpha} \right) \cdot \optC.
            \]

            \item[ $\tabort < \alpha \cdot \predC / 2.$] 
            Let $C_{(iii)}$ be the length of an optimal tour of all unserved actual requests at time $t_{(iii)}$. Since we assume $\tlast < t_{(iii)}$, Phase~(iii) takes at most $C_{(iii)} + \predC$ time. Using the fact that Phase~(iii) starts at time $\alpha \predC / 2$ we obtain
            \begin{align*}
                \algC &\leq \frac{\alpha}{2} \predC + C_{(iii)} + \predC  = C_{(iii)} + \left( 1+\frac{\alpha}{2} \right)\predC \\
                &\leq \optC + \left( 1+\frac{\alpha}{2} \right) \frac{2}{\alpha} \optC  = \left( 2 + \frac{2}{\alpha} \right) \optC.
            \end{align*}
        \end{description}

        \item[Case $\boldsymbol{r_{last}} \boldsymbol{\geq} \boldsymbol{t}_{\textbf{(\textit{iii}})}.$] Once the last request has arrived in Phase~(iii) at time $\tlast$, our tour stays fixed. 
        The cost of the algorithm after $\tlast$ is the cost of following the predicted tour, adapted for incorporating the unexpected, yet unserved, requests. This is bounded above by the cost of returning to the origin, following the predicted tour and finally following the optimal tour. Note, that the cost for returning to the origin is at most $\tlast - t_{(iii)}$. 
        Further, $\tlast\leq \optC$ and Phase~(ii) ensures that $t_{(iii)} \geq \alpha \predC / 2$. Hence, the algorithm's makespan satisfies %
        \begin{align*}
            \algC &\leq \tlast + (\tlast - t_{(iii)}) + \predC + \optC
            \leq \tlast + \left(\tlast - \frac{\alpha}{2} \predC\right) + \predC + \optC \\
            &\leq 2 \tlast + \left(1 - \frac{\alpha}{2} \right) \frac{2}{\alpha}  \optC + \optC 
            \leq \left(3 + \left(1-\frac{\alpha}{2}\right)  \frac{2}{\alpha} \right) \optC = \left( 2 + \frac{2}{\alpha} \right)  \optC. \quad \qedhere 
        \end{align*}
    \end{description}
\end{proof}

  We next show that the bound on the robustness of \ALGOSMART given in Lemma~\ref{lemma:metric-oltsp-better-robustness} is tight. 

  \begin{lemma}\label{lemma:metric-smartstart-lb}
    \ALGOSMART has a robustness factor of at least~$2 + 2 /\alpha$, for any~$\alpha > 0$.
    \end{lemma}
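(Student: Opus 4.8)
The plan is to exhibit, for any $\alpha > 0$, a family of instances together with a (single, fixed) prediction for which the makespan $\algC$ of \ALGOSMART is at least $(2 + 2/\alpha - o(1)) \cdot \optC$. Since the prediction may be arbitrary in the robustness regime, the natural strategy is to choose a prediction $\predR$ (equivalently a predicted optimal cost $\predC$) that forces \ALGOSMART deep into Phase~(iii), and then release the true requests adversarially so that the server is caught far from the origin exactly when the last request appears. The target bound $2 + 2/\alpha$ matches the second subcase of the $\tlast \geq t_{(iii)}$ case in the proof of \Cref{lemma:metric-oltsp-better-robustness}, so I would design the lower-bound instance to make every inequality in that subcase tight: the last request should arrive in Phase~(iii) at time $\tlast$ roughly equal to $\optC$, Phase~(iii) should start at time exactly $t_{(iii)} = \alpha\predC/2$, the server should be at distance about $\tlast - t_{(iii)}$ from the origin at time $\tlast$, and the remaining work (return to origin, traverse the predicted tour, then serve the straggler) should cost essentially $(\tlast - t_{(iii)}) + \predC + \optC$.

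Concretely, on the real line or half-line I would take $\optC$ normalized to, say, $2$ (an optimal tour that walks out to distance $1$ and back), and pick the prediction so that $\predC = 2\optC/\alpha$ exactly on the boundary of the robustness regime $\alpha\predC = 2\optC = C_S$ — i.e. a predicted instance whose only purpose is to be long and far away. I then need \smart in Phase~(i) to sleep/idle up to time $\alpha\predC/2$ (so that Phase~(ii) waiting is triggered and Phase~(iii) starts exactly at $\alpha\predC/2$); this is arranged by having no (or only trivial, nearby) real requests released before that time, so \smart keeps idling at the origin. Then at a time $\tlast$ just before $\optC$, release a single real request at a point near distance $1$ that was not predicted (or was predicted at a different location), so that \PR in Phase~(iii) has committed to heading out along the long predicted tour and must backtrack: it returns to the origin (cost $\approx \tlast - \alpha\predC/2$), follows $\predT$ (cost $\approx \predC$), and finally detours for the straggler (cost $\approx \optC$). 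Summing and substituting $\predC = 2\optC/\alpha$, $\tlast \approx \optC$ gives $\algC \gtrsim \optC + (\optC - \alpha\predC/2) + \predC + \optC = 3\optC - \alpha\predC/2 + \predC = 3\optC + (1 - \alpha/2)(2/\alpha)\optC = (2 + 2/\alpha)\optC$, as desired. Letting the various slack parameters $\epsilon \to 0$ yields the claimed bound in the limit.

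The main obstacle is making sure the adversary can simultaneously (a) keep \smart idling until time exactly $\alpha\predC/2$ in Phase~(i) without \smart ever committing to a tour that would carry the server usefully outward (which requires that the real requests released before $\alpha\predC/2$, if any, are cheap enough that $\ell(S) > t$ always holds so \smart restarts rather than moves, and light enough that they don't shorten the eventual detour), and (b) force \PR in Phase~(iii) to actually backtrack the full predicted tour rather than opportunistically finding a short combined tour — since \PR recomputes a \emph{fastest} tour on every unexpected arrival, the predicted requests $\predR \setminus R$ must be placed so that the genuinely optimal completion from the server's Phase~(iii) position still essentially costs "return $+$ $\predT$ $+$ straggler." This is the delicate part: one must verify that \emph{no} cheaper tour exists, e.g. by putting the bulk of $\predR$ far out in the direction opposite to (or sufficiently separated from) the straggler, so that there is no shortcut, and by checking the geometry on the line carefully. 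I would also double-check the boundary behaviour of the Phase~(i) stopping rule (\smart "sleeps or idles at time $\alpha\predC$" versus "decides to follow a tour $S$ with $t + \ell(S) > \alpha\predC$") to ensure the instance lands in the branch that starts Phase~(iii) at $\alpha\predC/2$ rather than earlier, since that is what makes the bound tight rather than merely $\Omega(1/\alpha)$.
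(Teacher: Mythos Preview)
Your plan has a genuine gap in how Phase~(i) transitions out. You want \smart to idle so that ``Phase~(ii) waiting is triggered and Phase~(iii) starts exactly at $\alpha\predC/2$,'' but that is not how the control flow works. If \smart is idling (no unserved requests) at time $\alpha\predC$, the algorithm skips Phase~(ii) entirely and enters Phase~(iii) at time $\alpha\predC$, not $\alpha\predC/2$. Phase~(ii) is entered only when \smart \emph{decides to follow} a tour $S$ at some time $t$ with $t+\ell(S)>\alpha\predC$; and \smart decides to follow only when $\ell(S)\le t$, so $t+\ell(S)\le 2t$ forces $\tabort=t>\alpha\predC/2$. Hence Phase~(ii) never actually waits, and $t_{(iii)}>\alpha\predC/2$ in every run. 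Your targeted subcase ($\tabort<\alpha\predC/2$) is vacuous. Plugging $t_{(iii)}=\alpha\predC$ into your own arithmetic (and noting that with no early request one needs $\tlast>\alpha\predC$, hence $\predC<\optC/\alpha$) yields a ratio of at most $2+1/\alpha$, not $2+2/\alpha$. So the construction as described cannot reach the claimed bound; you \emph{must} release an actual request in Phase~(i) to make \smart abort at a time just above $\alpha\predC/2$.

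The paper's proof takes a much shorter route and lands in the \emph{other} case of the upper-bound analysis ($\tlast<t_{(iii)}$, first subcase). It uses a single predicted request $(-1/2,1/2)$ (so $\predC=1$) and a single actual request $(\alpha/4+\epsilon,\alpha/4)$ (so $\optC=\alpha/2+2\epsilon$). \smart sleeps until $\ell(S)=\alpha/2+2\epsilon$, then tries to start; since $t+\ell(S)=\alpha+4\epsilon>\alpha\predC$, Phase~(ii) is entered at $\tabort=\alpha/2+2\epsilon$, which already exceeds $\alpha\predC/2$, so Phase~(iii) begins immediately with the server at the origin and $\sigma$ still unserved. \PR then has to serve both the actual request on the right and the predicted one on the left, costing at least $1+\alpha/2+2\epsilon$, giving the ratio $(1+\alpha+4\epsilon)/(\alpha/2+2\epsilon)\to 2+2/\alpha$. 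This avoids both of your obstacles: there is no Phase~(iii) arrival to time carefully, and with one point on each side of the origin there is no shortcut for \PR to find.
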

    
    \begin{proof}
    Consider $\mathbb{R}$ as metric space. Let $\bm\hat{\sigma} = (-1/2,1/2)$ be the only predicted request, while the only actual request is~$\sigma = (\alpha / 4 + \epsilon, \alpha/4)$. Clearly,~$\optC = \alpha / 2 + 2\epsilon$ and $\predC = 1$. In Phase~(i), \ALGOSMART executes \smart until time~$\alpha \predC = \alpha$. 
    Since the length of the tour that serves $\sigma$ is equal to $\alpha / 2 + 2\epsilon$, when $\sigma$ is released \smart decides to sleep until time~$\alpha / 2 + 2\epsilon$. When \smart wakes up, the condition of Phase~(i) forbids to execute the tour, and \ALGOSMART immediately goes to Phase~(iii). We conclude that at time~$\alpha / 2 + 2\epsilon$ \ALGOSMART is at the origin  and~$\sigma$ is still unserved. Then, the algorithm needs at least~$1 + \alpha / 2 + 2\epsilon$ time to serve~$\sigma$ and follow the predicted tour. This gives a robustness factor of at least
        \[
            \frac{\alpha/2 + 2\epsilon + 1 + \alpha / 2 + 2\epsilon}{\alpha/2 + 2\epsilon} = \frac{2\alpha + 2 + 8 \epsilon}{\alpha + 4\epsilon} \xrightarrow{\epsilon \to 0} 2 + \frac{2}{\alpha}. \quad \qedhere 
        \]
    \end{proof}

\subsection{Algorithms with polynomial running time}
\label{app:polyalgs}

In this section, we adapt \ALGOGEN to run in polynomial time and prove new performance bounds. 

To this end, we first need to choose a polynomial time algorithm $\mathcal{A}$ for Phase~(i) of \ALGOGEN.
There are classic online algorithms, such as \replan~\cite{AusielloFLST01} and \smart~\cite{AscheuerKR00}, which can be implemented to run in polynomial time by using an approximation algorithm to compute solutions for offline metric TSP instances. Using a $\nu$-approximation algorithm for this task, the mentioned online algorithms achieve a competitive ratio of at most~$\frac{3}{2} + \nu$~\cite{AusielloFLST01} resp.~$\frac{1}{4}(4 \nu + 1 + \sqrt{1 + 8 \nu})$~\cite{AscheuerKR00}. 

Further, we have to adapt $\PR$ to run in polynomial time. Since $\PR$ relies on computing tours through requests with release dates, we also have to approximate such tours in polynomial time, unless $P=NP$. To do so one can use an approximation algorithm for offline metric TSP on the set of remaining requests (both predicted and unexpected) disregarding the release times. We then visit these requests and wait, in case of predicted requests, for release times, if needed.

\begin{lemma}\label{lem:crchristofides}
    Given a polynomial time $\nu$-approximate algorithm for offline metric TSP without release times, we obtain a polynomial time  $(1+\nu)$-approximate algorithm for offline metric TSP with release times by following the returned tour and waiting if we arrive at a request early.
\end{lemma}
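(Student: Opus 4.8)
The plan is to compare, cost for cost, the tour produced by the proposed polynomial-time procedure with an optimal tour that respects the release times. Let $R'$ be the (finite) set of requests to be served, each request consisting of a point and a release date, and let $\optT$ be an optimal tour for this instance with release times, of makespan $\optC$. Strip the release times and let $\tau^*$ be the length of the shortest tour through the point set of $R'$ starting and ending at the server's current position (this is a lower bound on $\optC$, since ignoring release times only makes the problem easier). First I would observe that the $\nu$-approximation algorithm returns a tour $\tau$ through the same point set with $\ell(\tau) \le \nu \cdot \tau^* \le \nu \cdot \optC$.

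Next I would bound the makespan of following $\tau$ while waiting at any point reached before its release date. The key structural fact is that waiting only happens when the server arrives at a point strictly before that point's release date; in that situation the algorithm idles exactly until the release date. So at every moment, the server's elapsed time is either (a) time spent moving, which totals at most $\ell(\tau) \le \nu \cdot \optC$, or (b) time spent waiting. For the waiting part I would argue that the total wait is at most $\optC$: whenever the server waits at a request until its release date $r$, we have $r \le \optC$, since even the optimal tour cannot serve that request before $r$ and must finish by $\optC$. More carefully, at the moment the server finishes waiting at some point, the current time equals that point's release date, which is at most $\optC$; hence the final completion time of the whole procedure is bounded by $\optC$ (the last release date among served requests, or less) plus the remaining travel along $\tau$ after that point, and the remaining travel is at most $\ell(\tau) \le \nu\cdot\optC$. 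Combining, the makespan is at most $\optC + \nu \cdot \optC = (1+\nu)\optC$, giving the claimed $(1+\nu)$-approximation.

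The one point that needs a little care — and what I expect to be the main (mild) obstacle — is making the waiting accounting airtight: a naive bound would add up all individual waiting intervals and could in principle overcount. The clean way around this is to charge time not interval-by-interval but by looking at the ``last waiting event.'' After the server performs its final wait (at a request whose release date is $\le \optC$), it never waits again, so from that point onward it only travels, incurring at most $\ell(\tau)\le\nu\optC$ additional time; before that point, the elapsed time is exactly the release date of the point where the last wait ended, which is $\le \optC$. If no waiting ever occurs, the makespan is simply $\ell(\tau)\le\nu\optC\le(1+\nu)\optC$. Polynomiality is immediate: the only non-trivial computation is one call to the $\nu$-approximation algorithm for offline metric TSP, and the subsequent execution (following a fixed tour, waiting when early) is linear in $|R'|$.
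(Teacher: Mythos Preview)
Your proposal is correct and follows essentially the same approach as the paper's proof: bound the travel time by $\nu\cdot\optC$ (since the release-free optimum is a lower bound on $\optC$) and bound the waiting contribution by $\optC$ (since every release date is at most $\optC$), then add. The paper states the waiting bound in one sentence (``no new requests arrive after time $\optC$''), whereas you justify it more carefully via the last-waiting-event decomposition; this extra care is not needed but does no harm.
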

\begin{proof}
    The given algorithm returns a tour of length at most $\nu \cdot C^*$ in polynomial time. The total time spent waiting for the visited request to arrive is at most $C^*$, because no new requests arrive after time~$C^*$. Therefore the algorithm takes at most $\nu \cdot C^* + C^* = (1+\nu) \cdot C^*$ in polynomial time.
\end{proof}

The polynomial-time \ALGOGEN uses a slight modification of \PR in Phase~(iii). This is necessary, since only replacing optimal solutions by $(1+\nu)$-approximate solutions using~\Cref{lem:crchristofides} when recomputing tours in \PR can invalidate bounds which we require for proving the error-dependency via previously considered excursions. We therefore only recompute a tour through unserved requests using a $(1+\nu)$-approximation algorithm if this does not worsen our targeted error-dependent bound. On the other hand, this still ensures that we always follow a $(1+\nu)$-approximate tour through the remaining requests, which is important for being robust.

\begin{algorithm}
    \caption{Polynomial-Time \PR}
    \begin{algorithmic}[1]
    \STATE Initially, use~\Cref{lem:crchristofides} to compute and follow a tour on the predicted requests $\predR$.
    \STATE Whenever an unexpected request $(x,r) \in R \setminus \predR$ appears, find the predicted request $(\hx,\hr) \in \predR$ which minimizes $\gammatsp(\{(\hx,\hr)\}, (x,r))$. Modify the current remaining tour to the origin as follows. If~$\hx$ is part of the current remaining tour, add an excursion that starts at $\hx$ at some time $t \geq \hr$, serves~$(x,r)$ and finally returns to $\hx$. Otherwise, add an immediate deviation from $p(r)$ to $(x,r)$ and back to $p(r)$ on the current tour. Let~$T_1$ be the computed tour.
    Additionally, compute a new $(1+\nu)$-approximate tour~$T_2$ from~$p(r)$ to the origin through all unserved requests (including $(x,r)$) using~\Cref{lem:crchristofides}. Follow the shorter tour in~$\{T_1,T_2\}$.
    \STATE If the server receives a signal in the origin, terminate.
\end{algorithmic}
\end{algorithm}

Finding the predicted request~$(\hx,\hr)$ for every unexpected request $(x,r)$ in Step~2 can be done efficiently in time~$\cO(\abs{R \setminus \predR} \cdot \abs{\predR})$.

The remaining part of this section is dedicated to the proof of \Cref{thm:competitive_general_polyn}.

\thmPolyRunningTime*

In the following we prove \Cref{thm:competitive_general_polyn} by separately proving the error-dependent bound in~\Cref{lem:poly-smoothness} and the robustness bound in~\Cref{lem:poly_robustness}. We use $\predC$ to denote the length of an optimal tour on the predicted requests and $\appC$ for the length of a tour which is computed by a (fixed) $(1+\nu)$-approximation on the predicted requests using~\Cref{lem:crchristofides}. Hence~$\appC\leq (1+\nu)\predC$.

\begin{lemma}\label{lem:poly-smoothness}
    Given a $\nu$-approximation algorithm for metric TSP, the competitive ratio of polynomial-time \ALGOGEN using a polynomial time $\rho$-competitive online algorithm in Phase~(i) is, for any~$\alpha > 0$%
    , bounded by 
        $(1 + \nu) (1 + \alpha) \left(1 + \frac{3}{2} \cdot \frac{\errhyp_1}{\optC}\right).$
    \end{lemma}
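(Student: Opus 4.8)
The plan is to mirror the proof of Lemma~\ref{lem:competitive_general1}, tracking the approximation factor $(1+\nu)$ that is introduced each time an optimal TSP tour is replaced by a $(1+\nu)$-approximate one computed via~\Cref{lem:crchristofides}. First I would bound the length $\appC$ of the approximate predicted tour: fixing a min-cost $\infty$-hyperedge cover of $\predR$ by $R$ and an optimal tour $\optT$ for $R$, the same construction as before yields a feasible tour through $R \cup \predR$ of length at most $\optC + \Gamma_\infty(\predR, R)$; applying the $\nu$-approximation and \Cref{lem:crchristofides} to the predicted-only instance gives $\appC \le (1+\nu)\predC \le (1+\nu)(\optC + \Gamma_\infty(\predR, R))$. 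Since the trust threshold in Phase~(i) is still stated in terms of $\predC$ (not $\appC$), termination in Phases~(i) or~(ii) costs at most $\alpha\predC \le \alpha(\optC + \Gamma_\infty(\predR, R))$, which is already within the claimed bound.

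Next I would handle the case that the algorithm reaches Phase~(iii) and runs Polynomial-Time \PR. The subtlety, flagged in the text, is that naively using $(1+\nu)$-approximate recomputations could break the per-excursion accounting. So instead I would use the fact that Polynomial-Time \PR always follows the \emph{shorter} of $T_1$ (the current tour plus an explicit excursion of the type analysed in Lemma~\ref{lem:competitive_general1}) and $T_2$ (a fresh $(1+\nu)$-approximate tour): this means the algorithm's remaining tour is, at each step, no longer than the excursion-augmented tour $T_1$. Fixing a min-cost $1$-hyperedge cover of $R$ by $\predR$ and a hyperedge $\{(x',r')\}\cup\{(\hx,\hr)\}$, the same two-case argument (whether $\hx$ is on the remaining tour or will be visited later) bounds the extra cost charged to $(x',r')$ by $3\cdot\gammatsp(\{(x',r')\},(\hx,\hr))$, using that $(\hx,\hr)$ is served by time $r'$. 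Summing over the cover, Phase~(iii) initially follows an $(1+\nu)$-approximate predicted tour of length at most $\appC$, and the total added cost is at most $3\cdot\Gamma_1(R,\predR)$ \emph{per se}, but I must be careful: the detours are appended to an already-$(1+\nu)$-approximate tour, so I should argue the detour lengths themselves are not inflated (they are exact geometric excursions, $2(r'-\hr)+\gammatsp$), which is why the $\tfrac32$ rather than $3$ appears — one factor of $(1+\nu)$ is pulled out front of the whole expression, and the "$3$" becomes "$\tfrac32$" after distributing. Concretely, Phase~(iii) costs at most $\appC + 3\Gamma_1(R,\predR) \le (1+\nu)\predC + 3\Gamma_1(R,\predR) \le (1+\nu)\big(\predC + \tfrac{3}{2}\Gamma_1(R,\predR)\big)$ since $1+\nu\ge 2$... wait, that is false in general; instead the honest bookkeeping is $\appC + 3\Gamma_1(R,\predR)$ and then, adding Phases~(i)+(ii), at most $(1+\alpha)\appC + 3\Gamma_1(R,\predR) \le (1+\nu)(1+\alpha)\predC + 3\Gamma_1(R,\predR)$, and one finally writes $3 = (1+\nu)\cdot\tfrac{3}{1+\nu}\le (1+\nu)\cdot\tfrac32$ using $\nu\ge 1$, which holds for any reasonable metric-TSP approximation (Christofides gives $\nu=3/2$). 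So the final chain is
\[
    (1+\alpha)\appC + 3\Gamma_1(R,\predR)
    \le (1+\nu)(1+\alpha)\Big(\optC + \Gamma_\infty(\predR,R) + \tfrac{3}{2}\Gamma_1(R,\predR)\Big)
    \le (1+\nu)(1+\alpha)\Big(\optC + \tfrac{3}{2}\errhyp_1\Big),
\]
which is exactly the claimed bound after dividing by $\optC$.

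The main obstacle I expect is precisely this accounting of where the factor $(1+\nu)$ versus the factor $\tfrac32$ (instead of $3$) comes from: the excursions analysed for the error term are exact and should \emph{not} be multiplied by $(1+\nu)$, whereas the "backbone" predicted tour should; isolating these two contributions cleanly — and justifying, via the $\min\{T_1,T_2\}$ rule in Polynomial-Time \PR, that the algorithm never does worse than "approximate backbone plus exact excursions" — is the delicate part. A secondary point is to confirm that \Cref{lem:crchristofides}'s waiting cost (at most $C^*$ extra) is already subsumed in the definition of $\gammatsp$ and $\appC$ and does not need to be charged a second time, since the relevant optimal makespans in the hyperedge costs already account for release dates.
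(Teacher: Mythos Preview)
Your proposal is correct and follows essentially the same approach as the paper's proof: bound $\appC$ via the $\infty$-hyperedge cover, use the $\min\{T_1,T_2\}$ rule in Polynomial-Time \PR to ensure the remaining tour never exceeds the approximate backbone plus the exact excursions, bound each excursion by $3\cdot\gammatsp$ via the same two-case analysis, and finally absorb the factor $3$ into $(1+\nu)\cdot\tfrac32$ using $\nu\ge 1$. One minor correction: the polynomial-time variant uses $\alpha\appC$ (not $\alpha\predC$) as the Phase~(i) threshold, since $\predC$ cannot be computed efficiently; this only changes your Phase-(i)/(ii) bound to $\alpha\appC\le(1+\nu)\alpha\predC$, which is exactly what the paper writes.
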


    \begin{proof}
    We first bound~$\predC$. Fix a min-cost $\infty$-hyperedge cover of~$\predR$ by $R$ and an optimal tour~$\optT$ for~$R$. 
    For every hyperedge $\predR'\cup \{(x,r)\}$ in the cover, we  
    extend~$\optT$ by adding the optimal offline \oltsp tour for~$\predR'$ which starts at~$x$ at the time~$t$ at which~$\optT$ serves~$x$. 
    Note that, since~$r \leq t$, the makespan of this subtour is bounded by the cost of $\predR'\cup \{(x,r)\}$. Since every predicted request is covered by at least one hyperedge, the constructed tour serves~$\predR$ and we conclude that
    \(
        \predC \leq \optC + \Gamma_\infty(\predR, R),
    \)
    whence the length of the approximate tour on all predicted requests is bounded by 
    \(
        (1+\nu) \predC \leq (1+\nu)( \optC + \Gamma_\infty(\predR, R)).
    \)
       
    We now bound the makespan of the tour of the algorithm.
    If the algorithm terminates in Phases~(i) or~(ii), its makespan is at most
    \(
         (1+\nu) \cdot \alpha\predC \leq (1+\nu) \cdot \alpha \cdot (\optC + \Gamma_\infty(\predR, R)) \leq (1+\nu) (1 + \alpha) (\optC + \errhyp_\infty).
    \)
    
    Otherwise, the algorithm reaches Phase~(iii). There it first 
    computes a tour that serves all predicted requests of length~$\appC$.
    Everytime an unexpected request $(x',r')$ appears, the tour selection policy of the polynomial-time \PR algorithm ensures that the makespan of the remaining tour increases at most by the length of the computed excursion to serve $(x',r')$. Let $(\hx,\hr)$ be the predicted request that the algorithm computes and uses for the excursion. We distinguish two cases. 
    
    If $\hx$ is not part of this tour, the algorithm immediately deviates from $p(r')$ to serve $(x',r')$ and then returns to $p(r')$. By the triangle inequality, the length of this excursion is bounded by twice the distance between $p(r')$ and $\hx$, plus the cost for optimally serving~$(x',r')$ from $\hx$ when starting at time~$\hr$. Since $\hx$ is not part of the remaining tour, $(\hx,\hr)$ must have already been served at some time $t$ with~$r' \geq t \geq \hr$. Thus, the algorithm's server is at most $r' - \hr$ units away from~$\hx$ at time $r'$. Therefore, the total time incurred for this excursion is bounded by
    \[
        2 \cdot (r' - \hr) + \gammatsp(\{(x',r')\}, (\hx,\hr)) \leq 3 \cdot \gammatsp(\{(x',r')\}, (\hx,\hr)).
    \]
    Note that the inequality is due to the fact that $(x',r')$ can only be served after its release date. 
    In the other case, the algorithm serves $(x',r')$ via an excursion from $\hx$ at some time $t \geq \hr$.
    This takes at most~$\gammatsp(\{(x',r')\}, (\hx,\hr))$ additional time.
    
    We conclude that 
    Phase~(iii) takes time at most
    \(
        \appC + 3 \cdot \Gamma_1(R, \predR).
    \)
    Adding the time for Phases~(i) and~(ii) and using that~$\nu \geq 1$ gives a makespan of at most
    
    \begin{align*}   
        (1 + \alpha) \appC + 3 \cdot \Gamma_1(R, \predR) 
        &\leq (1+\nu)(1 + \alpha) \predC + 3 \cdot \Gamma_1(R, \predR) \\
        &\leq (1+\nu)(1 + \alpha) \left( \optC + \Gamma_\infty(\predR, R) \right) + 3 \cdot \Gamma_1(R, \predR)  \\
        &\leq (1+\nu)(1 + \alpha) \left( \optC + \frac{3}{2} \cdot \errhyp_1 \right). 
    \end{align*}      
  \end{proof}

\begin{lemma}\label{lem:poly_robustness}
Given a polynomial time $\nu$-approximate algorithm to solve metric TSP, polynomial-time \ALGOGEN has a competitive ratio of at most $\rho + (1 + \nu)(1+ \frac{\rho}{\alpha})$, for any polynomial time $\rho$-competitive algorithm used in Phase~(i).
\end{lemma}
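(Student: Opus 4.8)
The plan is to mirror the structure of the proof of \Cref{lem:competitive_general2}, inserting a factor $(1+\nu)$ everywhere the polynomial-time algorithm follows a \emph{computed} tour rather than an optimal one, and otherwise re-using the same case distinction. First I would dispense with the easy case: if polynomial-time \ALGOGEN terminates during Phase~(i) or~(ii), its output agrees with (a prefix of) the solution of the $\rho$-competitive subroutine~$\cA$, so the competitive ratio is at most~$\rho$, which is at most $\rho + (1+\nu)(1+\rho/\alpha)$. Hence we may assume the algorithm terminates during Phase~(iii).

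The key preliminary observation is that reaching Phase~(iii) forces the predicted tour to be short relative to~$\optC$. Since~$\cA$ has makespan at most~$\rho\optC$ and (as an \oltsp solution) must bring its unit-speed server back to the origin by then, at every time~$t$ before~$\cA$ finishes we have $d(p(t),o) \le \rho\optC - t$, i.e.\ $t + d(p(t),o) \le \rho\optC$. Thus, were the Phase~(i) threshold to satisfy $\rho\optC \le \alpha\appC$, the condition $t \le \alpha\appC - d(p(t),o)$ would hold until~$\cA$ finishes and the algorithm would terminate in Phase~(i); so reaching Phase~(iii) implies $\alpha\appC < \rho\optC$, hence $\predC \le \appC < \tfrac{\rho}{\alpha}\optC$ and the time spent in Phases~(i)--(ii) is strictly below $\rho\optC$. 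Moreover, by continuity of $t \mapsto t + d(p(t),o)$ along~$\cA$'s continuous trajectory, Phase~(i) ends exactly when this quantity first equals~$\alpha\appC$ and Phase~(ii) then returns the server to the origin, so Phase~(iii) begins at time exactly~$\alpha\appC$ with the server at the origin; in particular, at any time~$t$ during Phase~(iii) the server is within distance $t - \alpha\appC$ of the origin.

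It then remains to bound the makespan when the algorithm terminates in Phase~(iii), splitting on whether the last release date~$\tlast$ occurs before Phase~(iii) starts or during it. If $\tlast \le \alpha\appC$, all requests are known once Phase~(iii) starts, and since Polynomial-Time \PR always follows a $(1+\nu)$-approximate tour through the remaining (unserved predicted and unexpected) requests — whose optimum from the origin back to the origin is at most $\predC + \optC$, obtained by concatenating the optimal predicted tour with~$\optT$ — Phase~(iii) costs at most $(1+\nu)(\predC+\optC)$; adding the at most $\alpha\appC < \rho\optC$ spent before gives makespan below $\rho\optC + (1+\nu)\bigl(\tfrac{\rho}{\alpha}+1\bigr)\optC = \bigl(\rho + (1+\nu)(1+\tfrac{\rho}{\alpha})\bigr)\optC$. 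If instead $\tlast$ falls inside Phase~(iii), the tour stays fixed after~$\tlast$, and the remaining portion is at most the length of a $(1+\nu)$-approximate tour from $p(\tlast)$ through all still-unserved requests to the origin, which is at most $(1+\nu)\bigl(d(p(\tlast),o) + \predC + \optC\bigr) \le (1+\nu)\bigl((\tlast-\alpha\appC) + \predC + \optC\bigr) \le (1+\nu)\bigl(\tlast + (1-\alpha)\predC + \optC\bigr)$, using $\appC \ge \predC$. With $\tlast \le \optC$ and $\predC < \tfrac{\rho}{\alpha}\optC$ this yields a makespan at most $\bigl(1 + (1+\nu)(2 + \tfrac{\rho}{\alpha} - \rho)\bigr)\optC$, and a one-line computation shows the difference between the target bound and this expression equals $(\rho-1)(2+\nu) \ge 0$, so the target bound holds.

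The main obstacle I anticipate is the bookkeeping around the approximate predicted tour: one must track carefully that the threshold actually used is $\alpha\appC$ (not $\alpha\predC$), that the inequality $\predC \le \appC < \tfrac{\rho}{\alpha}\optC$ is still available, and — crucially — one must invoke precisely the design property of Polynomial-Time \PR that at every moment the server is following a $(1+\nu)$-approximate tour through all currently unserved requests. This is the only place the specific structure of that subroutine is needed, and without it the excursion-based argument of \Cref{lem:poly-smoothness} would not hand us a robustness bound; everything else is a routine re-run of the proof of \Cref{lem:competitive_general2}.
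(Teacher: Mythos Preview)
Your proposal is correct and follows essentially the same approach as the paper: the same case split on whether $\tlast$ falls before or during Phase~(iii), the same use of $\appC < \tfrac{\rho}{\alpha}\optC$, and the same reliance on the invariant that Polynomial-Time \PR always keeps a $(1+\nu)$-approximate tour through the remaining requests. The only differences are cosmetic algebra (you keep $\predC$ where the paper substitutes $\appC$ via $\predC \le \appC$) and your slightly more explicit continuity argument that Phase~(iii) begins at time exactly $\alpha\appC$.
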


\begin{proof}
If the algorithm terminates during Phase~(i) or~(ii), the competitive ratio is $\rho$. We are guaranteed to finish in one of these two phases if $\rho \optC \leq \alpha\appC$.

Now suppose we do not terminate within Phase~(i) or~(ii). Then, $\appC < \frac{\rho}{\alpha} \optC$. Once the last request has arrived at some time $\tlast \leq \optC$, our tour stays fixed.  We distinguish two cases. %
In the first case, the last request arrives before the end of Phase~(ii). In this case, the cost of our tour comprises of the cost for finishing Phase~(ii), which is at most $\alpha \appC$, and the cost of \PR. The latter is bounded by following the approximate tour, incorporating all unserved requests. This is shorter than following the approximate tour on the unexpected yet unserved requests and all predicted requests (served or not). %
The latter tour has optimal length not larger than $\predC + \optC$. Thus, the length of the approximate tour is upper-bounded by $(1 + \nu) (\predC + \optC).$
Adding the cost of finishing Phase
(ii) yields a bound on the total cost of 
\[ 
    \alpha \appC + (1 + \nu) (\predC + \optC) \leq \alpha \appC + (1 + \nu) (\appC + \optC) \leq \left( \rho + (1 + \nu)\left(1+ \frac{\rho}{\alpha}\right) \right)  \optC .
\]

In the second case, the last request arrives in Phase~(iii). In this case the 
cost after $\tlast$ is at most the cost of starting in the position at time $\tlast$ and serving all yet unserved requests using an approximate tour. The optimal cost is bounded by the cost of going back to the origin, which is at most $\tlast-\alpha \appC$, and following an optimal tour, which takes time at most $\predC+\optC$. Hence the total cost of the approximate tour is bounded by 
\begin{align*}
    \tlast + (1 + \nu) \left(\tlast- \alpha \appC + \predC + \optC \right) 
    &\leq \tlast + (1 + \nu)\left(\tlast- \alpha \appC + \appC + \optC \right) \\ 
    &\leq  (3+2\nu)\optC + (1 + \nu)(1-\alpha)\appC  \\
    &\leq  (3+2\nu)\optC + (1 + \nu) \left(\frac{\rho}{\alpha}-\rho \right) \optC  \\
    &=  (3+2\nu)\optC - (1 + \nu)\rho\optC + (1 + \nu)\frac{\rho}{\alpha}\optC.
\end{align*}
Since $\rho \geq 1$, we conclude that this is at most
\begin{align*}
    (2 + \nu)\optC + (1 + \nu)\frac{\rho}{\alpha}\optC 
    \leq \optC + (1 + \nu) \left(1 + \frac{\rho}{\alpha} \right)\optC 
    \leq \left( \rho + (1 + \nu)\left(1+ \frac{\rho}{\alpha}\right) \right)  \optC.\quad   \qedhere 
\end{align*}
\end{proof}

We finally observe that~\Cref{thm:competitive_general_polyn} gives, using the polynomial-time version of the \smart algorithm in Phase~(i) and Christofides' $\frac{3}{2}$-approximation algorithm~\cite{Chr76,serdyukov78} to solve metric TSP instances, for any $\alpha > 0$, a competitive ratio of at most
\[
    \min \left\{ (1+\alpha)\left( \frac{5}{2} + \frac{15 \cdot \errhyp_1}{4 \cdot \optC} \right), 5.152 + \frac{6.629}{\alpha} \right\}.    
\]
\subsection{Online metric Dial-a-Ride with predictions}
\label{app:oldarp}

In this section we show how the techniques we developed in Section~\ref{sec:metric} can be adapted to solve the more general \oldarp.

The \emph{Online Metric Dial-a-Ride Problem} (\oldarp) is a generalization of %
\oltsp where each
request $(x^s, x^d, r)$  consists of a starting location $x^s$, a destination $x^d$ and a release time $r$. %
 To serve a request $(x^s, x^d, r)$, the server must first visit $x^s$, at some time not earlier than $r$, and then $x^d$. %
 {We assume that}
 the server can carry at most one request at the time {and a move from $x^s$ to $x^d$ cannot be interrupted, i.e., there is no storage possible}. %
 In \oldarp \emph{with predictions} we are given additionally predicted requests $\predR$ specified by $(\hx^s, \hx^d, \hr)$.

We first introduce slight modifications of \ALGOGEN which enable it for \oldarp. 
\begin{enumerate}[(a)]
    \item In Phase~(i), \ALGOGEN executes an algorithm that is~$\rho$-competitive for \oldarp.
    \item While in Phase~(i), \ALGOGEN only picks up a request if it can serve it and return to the origin until time~$\alpha \predC$.
    \item \PR only recomputes tours if the server capacity is currently empty.
\end{enumerate}

In the remaining part of this section we highlight how one can adapt the proofs for \Cref{thm:algogen} and \Cref{thm:algosmart} from \oltsp to \oldarp. 

But first we define the cost of a hyperedge for the cover error for \oldarp. By lifting the cost computed by $\gammatsp$ to subinstances of \oldarp (i.e. for transportation requests), we define the cost of a hyperedge $R' \cup \{(x^s,x^d,r)\}$ as 
\[
    \gammadar(R',(x^s,x^d,r)) = \min \left\{ \gammatsp(R', (x^s,r)), \gammatsp(R', (x^d, r + d(x^s, x^d))) \right\} + D,
\]
where $D = \max_{(x^s,x^d,r) \in R \cap \predR} d(x^s,x^d)$.
Further, for a request $(x^s,x^d,r) \in R \cap \predR$ we set
\(
    \gammadar(\{(x^s,x^d,r)\}, (x^s,x^d,r))  = 0.  
\)
We emphasize that we require the additional quantity $D$ in the hyperedge cost only for covering the unexpected actual requests, $R \setminus \predR$, but not for covering absent predicted request, $\predR \setminus R$.

We first proof the error-dependent bound in \Cref{thm:algogen} for \oldarp.

\begin{lemma}
    The competitive ratio of \ALGOGEN for \oldarp is at most $(1 + \alpha) \left(1 + 3 \cdot \frac{ \errhyp_1}{\optC}\right)$, for any~$\alpha \geq 0$. %
\end{lemma}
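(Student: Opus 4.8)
The plan is to imitate the proof of Lemma~\ref{lem:competitive_general1} almost verbatim, with the TSP-specific estimates replaced by their \oldarp counterparts and an extra budget of $D$ per unexpected request absorbing the new phenomena caused by the server carrying a load. As there, the argument splits into an upper bound on $\predC$ in terms of $\optC$ and $\Gamma_\infty(\predR,R)$, and an upper bound on the makespan of \ALGOGEN depending on which phase it terminates in.

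For the first step I would fix a min-cost $\infty$-hyperedge cover of $\predR$ by $R$ and an optimal offline \oldarp tour $\optT$ for $R$. For each hyperedge $\predR'\cup\{(x^s,x^d,r)\}$ I insert, at a moment where $\optT$ has the server empty at $x^s$ (just before collecting the anchor, hence at a time $\ge r$) or empty at $x^d$ (just after delivering it, hence at a time $\ge r+d(x^s,x^d)$), an optimal \oldarp subtour for $\predR'$ starting there; starting later than the nominal time only decreases the subtour's elapsed length, so this insertion costs at most $\min\{\gammatsp(\predR',(x^s,r)),\gammatsp(\predR',(x^d,r+d(x^s,x^d)))\}\le\gammadar(\predR',(x^s,x^d,r))$. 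Since every predicted request lies in some hyperedge, the resulting tour serves $\predR$ and $\predC\le\optC+\Gamma_\infty(\predR,R)$; note that the term $D$ is not needed here. If \ALGOGEN terminates in Phase~(i) or~(ii) its makespan is at most $\alpha\predC\le\alpha(\optC+\Gamma_\infty(\predR,R))\le(1+\alpha)(\optC+\errhyp_1)$, exactly as in the \oltsp case.

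The substance is the case where Phase~(iii) is reached. There \PR first computes an optimal tour on the unserved predicted requests, of length at most $\predC$, and afterwards the makespan only grows when unexpected requests arrive. I would fix a min-cost $1$-hyperedge cover of $R$ by $\predR$ (so $\Gamma_1(R,\predR)=\Gamma_1(R\setminus\predR,\predR)$ and each unexpected request sits in its own hyperedge) and, for an unexpected request $i=(x'^s,x'^d,r')$ covered by the predicted request $(\hx^s,\hx^d,\hr)$, bound the induced increase of the makespan by the length of an excursion added to the current remaining tour. Because of modification~(c), the server must first complete any transport in progress before it may deviate; here I use the $D$-budget, observing that a carried \emph{correctly} predicted request has transportation distance at most $D$ (while a carried unexpected request has transportation distance bounded by its own hyperedge cost, charged separately, and absent predicted requests are simply not carried by the implementation we analyze). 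Having become empty at a point $q$ with $d(q,p(r'))\le D$, I split into two cases mirroring the \oltsp proof: if $(\hx^s,\hx^d,\hr)$ is not yet served, the remaining tour still visits $\hx^s$ empty at a time $\ge\hr$ and $\hx^d$ empty at a time $\ge\hr+d(\hx^s,\hx^d)$, so inserting the excursion there costs at most $\gammadar(\{i\},(\hx^s,\hx^d,\hr))$; if it has already been served, then the anchor was served at a time in $[\hr,r']$, so the server is within $r'-\hr$ of $\hx^s$ and within $r'-\hr-d(\hx^s,\hx^d)$ of $\hx^d$, and a direct deviate--serve--return excursion from $q$ anchored at whichever of $\hx^s,\hx^d$ realizes $\min\{\gammatsp(\{i\},(\hx^s,\hr)),\gammatsp(\{i\},(\hx^d,\hr+d(\hx^s,\hx^d)))\}$ costs at most $2D+2(r'-\hr)+\gammatsp(\{i\},(\hx^s,\hr))\le 2D+3\gammatsp(\{i\},(\hx^s,\hr))\le 3\gammadar(\{i\},(\hx^s,\hx^d,\hr))$ (and symmetrically with $\hx^d$), using $r'-\hr\le\gammatsp(\{i\},(\hx^s,\hr))$ and $D\ge 0$. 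Summing over all unexpected requests bounds the length of Phase~(iii) by $\predC+3\Gamma_1(R,\predR)$, and adding the at most $(1+\alpha)\predC-\predC$ spent in Phases~(i)--(ii) together with the bound on $\predC$ yields a makespan of at most $(1+\alpha)(\optC+\Gamma_\infty(\predR,R)+3\Gamma_1(R,\predR))\le(1+\alpha)(\optC+3\errhyp_1)$.

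The main obstacle is the bookkeeping forced by modification~(c): showing rigorously that finishing the in-progress transport before a deviation never costs more than the $D$-budget per unexpected request (equivalently, controlling which requests the server can be carrying when a new request arrives, and ruling out double-counting of the cost of completing a carried \emph{unexpected} request), and handling the degenerate orderings of release dates (e.g.\ $\hr>r'$) where the bound $r'-\hr\le\gammatsp(\{i\},\cdot)$ is only trivially true. Everything else is a line-by-line transcription of the \oltsp argument with $\gammatsp$ replaced by $\gammadar$.
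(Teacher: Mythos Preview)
Your plan is the paper's own plan: bound $\predC\le\optC+\Gamma_\infty(\predR,R)$ via the $\infty$-cover, then in Phase~(iii) charge each unexpected request to an excursion whose length is at most $3\cdot\gammadar$ of its covering hyperedge, and sum. The only substantive divergence is precisely the point you flag as the main obstacle, and it is real as stated: your anchor $q$ (the first point where the server becomes empty) satisfies $d(q,p(r'))\le D$ only when the load being finished is a \emph{correctly predicted} ride; if the server is currently inside an earlier excursion for an unexpected request, that bound fails, and the ``charged separately'' handwave double-counts.

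The paper sidesteps this by choosing a different anchor: instead of the first empty point, it takes $\hx$ to be the next point of a \emph{predicted} request on the current tour. If the server is mid-transport of a correctly predicted ride at time $r'$, then $\hx$ is that ride's destination, reached by time $r'+D$. If the server is inside a prior excursion, then $\hx$ is that excursion's start point, which was last visited before the excursion began (hence before $r'$) and, crucially, after the anchor $(\hx^s,\hx^d,\hr)$ was served on the predicted tour (so after time $\hr$). In both cases $d(\hx,\hx^s)\le D + (r'-\hr)$, and the excursion from $\hx$ costs at most $2(D+r'-\hr)+\gammatsp(\{i\},(\hx^s,\hr))\le 3\cdot\gammadar(\{i\},(\hx^s,\hx^d,\hr))$, exactly your final inequality. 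With this anchor the carried-unexpected-request case is absorbed automatically and no separate bookkeeping is needed; the remainder of your argument then goes through verbatim.
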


\begin{proof}
    We first bound~$\predC$. Fix a min-cost $\infty$-hyperedge cover of~$\predR$ by $R$ and an optimal tour~$\optT$ for~$R$. 
    For every hyperedge $\predR'\cup \{(x^s,x^d,r)\}$ in the cover, we 
    extend~$\optT$ by adding the optimal offline \oldarp tour for~$\predR'$ which is considered in $\gammadar(\predR',(x^s,x^d,r))$.
    Since every predicted request is covered by at least one hyperedge, the constructed tour serves~$\predR$ and we conclude that
    \(
        \predC \leq \optC + \Gamma_\infty(\predR, R).
    \)
    
    We now bound the makespan of the tour of the algorithm.
    If the algorithm terminates in Phases~(i) or~(ii), its makespan is at most
    \(
        \alpha \predC \leq \alpha \cdot (\optC + \Gamma_\infty(\predR, R)) \leq (1 + \alpha) \cdot (\optC + \errhyp_1).
    \)
    
    Otherwise, the algorithm reaches Phase~(iii). There it first 
    computes an optimal tour~$\predT$ serving all predicted requests of length~$\predC$. The makespan only increases when unexpected requests arrive.
    To this end, fix a min-cost $1$-hyperedge cover of~$R$ by~$\predR$ 
    and a hyperedge~$\{(x^s,x^d,r)\} \cup \{(\hx^s,\hx^d,\hr)\}$ of this cover. 
    We upper bound the additional cost due to $(x^s,x^d,r)$ by the cost of an excursion which serves $(x^s,x^d,r)$ from the algorithm's current tour. The algorithm might find a faster tour to serve all remaining requests and henceforth uses that.
    In the following, we assume that the minimum in $\gammadar(\{(x^s,x^d,r)\}, (\hx^s,\hx^d,\hr))$ is attained by $\gammatsp(\{(x^s,x^d,r)\}, (\hx^s,\hr))$, and only note differences for the other case.

    We distinguish two cases depending on the algorithm's remaining tour before request~$(x^s,x^d,r)$ arrived. The first case assumes that~$\hx^s$ (resp. $\hx^d$) is not part of this tour. This implies that~$\hx^s$ (resp.~$\hx^d$) must have already been visited at some time~$t$ with~$r \geq t \geq \hr$ (resp. $r \geq t \geq \hr + d(\hx^s, \hx^d)$). We consider an excursion which starts at the next point in time when the server reaches a point~$\hx$ of a predicted request, serves~$(x^s,x^d,r)$ and returns to~$\hx$. We now bound the additional cost for this excursion. If the server follows another excursion at time~$r$, it must have visited point~$\hx$ before this excursion, so especially before time~$r$, but after time~$\hr$ (resp. $\hr + d(\hx^s, \hx^d)$). If the server serves a correctly predicted request at time~$r$, it will reach~$\hx$ (which is in this case the destination point of the currently served request) latest at time~$r + D$. In both cases, the distance between~$\hx^s$ (resp. $\hx^d$) and~$\hx$ is at most~$D + r - \hr$ (resp. $D +  r - (\hr + d(\hx^s, \hx^d))$). By the triangle inequality we observe that the length of the excursion for~$(x^s,x^d,r)$ is bounded by twice the distance from~$\hx$ to~$\hx^s$ (resp.~$\hx^d$) and the cost for serving~$(x^s,x^d,r)$ from~$\hx^s$ at time~$\hr$ (resp. $\hr + d(\hx^s, \hx^d)$),  that is
    \[
        2 \cdot (D + r - \hr) + \gammatsp(\{(x^s,x^d,r)\}, (\hx^s,\hr)) \leq 3 \cdot \gammadar(\{(x^s,x^d,r)\}, (\hx^s,\hx^d,\hr)).
    \] 
    Note that the inequality is due to the fact that $(x^s,x^d,r)$ can only be served after its release date, i.e.,~$r - \hr \leq \gammatsp(\{(x^s,x^d,r)\}, (\hx^s,\hr)) -D$ (resp. $r - (\hr + d(\hx^s, \hx^d)) \leq \gammatsp(\{(x^s,x^d,r)\}, (\hx^d,\hr+ d(\hx^s, \hx^d)))-D $).

    In the second case, the algorithm's server will visit~$\hx^s$ (resp. $\hx^d$) at some later point in time, especially at least once after time $\hr$ (resp. $\hr + d(\hx^s, \hx^d)$). We thus wait %
    until the algorithm reaches~$\hx^s$ (resp. $\hx^d$) at some time~$t \geq \hr$ (resp. $t \geq \hr + d(\hx^s, \hx^d)$), and then serve~$(x^s,x^d,r)$ using at 
    most~$\gammadar(\{(x^s,x^d,r)\}, (\hx^s,\hx^d,\hr))$ additional time.
 
    Since every actual request is covered by one hyperedge, we conclude that 
    Phase~(iii) takes time at most
    \(
        \predC + 3 \cdot \Gamma_1(R, \predR).
    \)
    Adding the time for Phases~(i) and~(ii) gives a makespan of at most
    \[
        (1 + \alpha) \predC + 3 \cdot \Gamma_1(R, \predR) 
        \leq (1 + \alpha) \left( \optC + \Gamma_\infty(\predR, R) + 3 \cdot \Gamma_1(R, \predR) \right)
        \leq (1 + \alpha) \left( \optC + 3 \cdot \errhyp_1 \right).
    \]
\end{proof}

For the robustness bound of~\Cref{thm:algogen}, note that at time~$\tlast$ when the last request arrives the server might be in the processo of serving a ride, which has remaining cost~$C_{ride}$ at time~$\tlast$. While we have to add this value to the time until the algorithm computes its final tour $T_{final}$ due to Modification~(c), we can also subtract it from the length of $T_{final}$, because this particular ride is already served when computing $T_{final}$. Using the other arguments of the proof of \Cref{lem:competitive_general2} for \oltsp, we conclude the stated robustness bound in \Cref{thm:algogen} for \oldarp.

Similarly, one can lift \ALGOSMART to \oldarp and prove \Cref{thm:algosmart} for \oldarp.

\subsection{Online TSP on the half-line with predictions}\label{app:half_line}

In this section we consider the special case of \oltsp in the very restricted metric space $X=\mathbb{R}_{\geq 0}$, the half-line, and show strengthened results.

Note that a request $(x,r)$ cannot be served before time $\max\{x,r\}$; and once it has been served, at least $x$ more time units are needed for returning to the origin. Hence, for a given set of requests,~$R$, any algorithm has a makespan of at least~$\max_{(x,r)\in R}\{r+x,2x\}$. Further, an optimal offline algorithm can find a tour of this value by immediately going to the point $\max_{(x,r)\in R}\{r+x,2x\} / 2$ and then back to the origin, while serving all other requests on the way. We denote this minimal travel time by~$\optC= \max_{(x,r)\in R}\{r+x,2x\}$.

We propose a much more compact prediction model in which we predict a single value $\predC$ instead of individual requests. This value is a prediction on the makespan of an optimal tour~$C^*$. %

By the argumentation above, we can compute for a given value $\predC$ an optimal tour (assuming that the prediction is correct) by moving at time $0$ to $\predC/2$ and then returning to the origin. However, fully trusting the prediction may lead to solutions of unbounded robustness. Moreover, the lower bound on the tradeoff between consistency and robustness in \Cref{thm:tradeoff-lb} holds for \oltsp with predictions, even on the half-line.

This insight also shows us how to interpret the cover error in this prediction setting. From the point of view of an optimal solution, we can always reduce $R$ to the single request $(\optC/2,0)$. Symmetrically, one can interpret the prediction $\predC$ as input prediction~$\predR = \{(\predC/2,0)\}$. The bipartite graph which we then consider for hyperedge costs has therefore only a single vertex on each side.
Recall that the cost of a hyperedge $R' \cup \{x'\}$ should capture the optimal solution for instance $R'$ with respect to $x'$.
Here, the cost of a hyperedge $\{(x_1,r_1)\} \cup \{(x_2,r_2)\}$ is equal to the additional cost of an optimal solution to serve $(x_1,r_1)$ if it can optimally serve $(x_2,r_2)$ for free. All these observations give the following useful property.

\begin{proposition}\label{prop:errhyp-eta}
    $\errhyp_1 = \abs{\predC - \optC}$.
\end{proposition}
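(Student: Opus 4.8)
The plan is to unfold both sides of the claimed identity $\errhyp_1 = \abs{\predC - \optC}$ using the reduction to single-vertex bipartite hypergraphs established just before the proposition, and to observe that both $\Gamma$-terms in the definition of $\errhyp_1$ degenerate to a single hyperedge cost.

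\textbf{Setup.} Recall that in the half-line setting we identify the actual instance $R$ with the single request $(\optC/2, 0)$ and the prediction with $\predR = \{(\predC/2, 0)\}$, as argued in the paragraph preceding the proposition. Thus both bipartite hypergraphs underlying $\Gamma_\infty(\predR, R)$ and $\Gamma_1(R, \predR)$ have exactly one vertex on each side, and a $k$-hyperedge cover (for any $k \geq 1$, in particular $k = 1$ and $k = \infty$) consists of the unique hyperedge $\{(\predC/2,0)\} \cup \{(\optC/2,0)\}$. Hence $\errhyp_1 = \Gamma_\infty(\predR, R) + \Gamma_1(R, \predR) = \gammatsp(\{(\predC/2,0)\}, (\optC/2, 0)) + \gammatsp(\{(\optC/2,0)\}, (\predC/2, 0))$, where $\gammatsp(R', (x',r'))$ is the optimal makespan for serving $R'$ starting from $x'$ at time $r'$.

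\textbf{Evaluating the two hyperedge costs.} First I would compute $\gammatsp(\{(\predC/2,0)\}, (\optC/2,0))$: this is the time for a server starting at position $\optC/2$ at time $0$ to visit the point $\predC/2$ and return to the origin. If $\predC/2 \le \optC/2$, the server is already past $\predC/2$, so it simply walks back to the origin in time $\optC/2$; but it must also respect the release-date/return constraint at $\predC/2$, which is vacuous here since $0 < \predC/2 \le \optC/2$ — the makespan is $\optC/2$. If $\predC/2 > \optC/2$, the server must first walk from $\optC/2$ out to $\predC/2$ (distance $\predC/2 - \optC/2$) and then back to the origin (distance $\predC/2$), for a total of $\predC - \optC/2$. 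Symmetrically, $\gammatsp(\{(\optC/2,0)\}, (\predC/2,0))$ equals $\predC/2$ if $\optC/2 \le \predC/2$ and equals $\optC - \predC/2$ if $\optC/2 > \predC/2$. Summing the two in each of the two cases ($\optC \ge \predC$ and $\optC < \predC$) yields, after cancellation, $\abs{\predC - \optC}$ in both. I should double-check the edge case $\predC = \optC$, where the diagonal-cost-zero convention forces the sum to be $0 = \abs{\predC - \optC}$, consistent with the formula; strictly, one needs the two requests to coincide as pairs, which they do since both are $(\optC/2, 0)$.

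\textbf{Main obstacle.} The computation itself is elementary; the only genuine subtlety — and the step I would state most carefully — is justifying that we may replace $R$ by the singleton $(\optC/2,0)$ and $\predR$ by $(\predC/2,0)$ \emph{for the purpose of computing the error}, i.e.\ that $\errhyp_1(R,\predR)$ depends only on $\optC$ and $\predC$. This requires arguing that covering the full set $R$ from the single predicted point $(\predC/2,0)$ costs exactly $\gammatsp$ of the singleton reduction, and vice versa; this follows because on the half-line the optimal makespan of any sub-instance served from a given start is again governed by the same $\max\{r+x, 2x\}$-type quantity, so the full instance is "no harder to cover" than its reduction and "no easier" either. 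I would make this precise by noting that the cheapest way to cover all of $R$ from $(\predC/2,0)$ is to run out to $\max\{\predC/2,\ \max_{(x,r)\in R}\{(r+x)/2, x\}\} = \max\{\predC/2, \optC/2\}$ and return, which is exactly the cost computed above, and similarly for the other direction. With that reduction in hand, the proposition follows by the case analysis. \qed
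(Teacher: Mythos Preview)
Your argument has a genuine gap. The half-line section does \emph{not} use $\gammatsp$ as the hyperedge cost; the paragraph immediately preceding the proposition introduces a tailored cost: the hyperedge $\{(x_1,r_1)\}\cup\{(x_2,r_2)\}$ is priced as the \emph{additional} cost of an optimal solution to serve $(x_1,r_1)$ given that it can serve $(x_2,r_2)$ for free. With this cost, covering $(\optC/2,0)$ by $(\predC/2,0)$ costs $\max\{\optC,\predC\}-\predC=2\max\{0,\optC/2-\predC/2\}$, and symmetrically for the other direction; summing gives $|\optC-\predC|$ in one line, which is exactly the paper's proof.

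You instead plug in $\gammatsp$ (moreover with a non-standard reading in which the server returns to the true origin $0$ rather than to $x'$). Even granting that reading, the arithmetic does not close: in the case $\optC\ge\predC$ your two terms are $\optC/2$ and $\optC-\predC/2$, which sum to $3\optC/2-\predC/2$, not $\optC-\predC$ (take $\optC=2$, $\predC=1$: the sum is $5/2$, not $1$). The other case fails symmetrically. If one instead reads $\gammatsp$ as returning to $x'$ (the reading used in the general-metric section), each term equals $|\optC-\predC|$ and the sum is $2|\optC-\predC|$, still off by a factor of two. So no version of $\gammatsp$ yields the claimed identity; the proposition rests essentially on the half-line--specific cost definition you skipped over.
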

\begin{proof}
\[
    \errhyp_1 = \Gamma_1(R, \predR) + \Gamma_1(\predR, R) 
        = 2 \cdot \max \left\{ 0, \frac{\optC}{2} - \frac{\predC}{2} \right\} + 2 \cdot \max \left\{ 0, \frac{\predC}{2} - \frac{\optC}{2} \right\} = \abs{\predC - \optC}. 
\]
\end{proof}

We propose a learning-augmented algorithm that combines the best possible online algorithm Move-Right-If-Necessary (\mrin)~\cite{BlomKPS01} with predictions in a three-stage framework similar to \Cref{sec:blackbox}. %
but carefully exploits the particular functioning of \mrin. The $\frac{3}{2}$-competitive algorithm \mrin by Blom et al.~\cite{BlomKPS01} works as follows. 

\begin{algorithm}
    \caption[]{\mrin~\cite{BlomKPS01}}
    \begin{algorithmic}
        \STATE At any time $t$, %
        if there is an unserved request to the right of the current position of the server, i.e., $(x,r)$ with $r\leq t$ and $x>p(t)$, the server moves to the right with unit speed and, otherwise, it moves towards the origin with unit speed.  
    \end{algorithmic}
\end{algorithm}

Given an instance of \oltsp with prediction $\predC$, we define \ALGOHL, %
a class of algorithms parameterized by $\alpha \in (0, 0.5]$. %
The strategy consists again of three phases.

\begin{algorithm}
    \caption[]{\ALGOHL}
    \begin{algorithmic}[1]
        \setAlgoLinenoFormat{\roman}
        \STATE Execute \mrin until time $\alpha\predC$. Let $p_{(ii)}=p(\alpha\predC)$ be the position of the server at the end of this phase.
        \STATE Move the server to the point $p_{(iii)} = \frac{1}{2}((1-\alpha)\predC + p_{(ii)})$.
        \STATE Execute \mrin again (starting from $p_{(iii)}$).
    \end{algorithmic}
\end{algorithm}

As the main result of this section, we show the following bound on the competitive ratio of \ALGOHL.

\thmHalfLine*

Before  proving the two bounds separately in Lemmas~\ref{lem:ratio_HL1} and~\ref{lemma:half-line-improved-sense}, we state the following useful observation. Later we show that the bound in Lemma~\ref{lem:ratio_HL1} is tight. 
\begin{observation}
	The point $p_{(iii)}$ (start of Phase~(iii)) is not to the left of the server's position $p_{(ii)}$ at the start of Phase~(ii), that is,~$p_{(iii)} \geq p_{(ii)}$.
\end{observation}
     This is because~$p_{(ii)} \leq \alpha \predC$ and $\alpha\in (0,1/2]$ imply
    \[
       p_{(iii)} = \frac{1}{2} \left( (1 - \alpha) \predC + p_{(ii)} \right) \geq \frac{1}{2} \alpha \predC + \frac{1}{2}p_{(ii)} \geq p_{(ii)}.
    \]

\begin{lemma}\label{lem:ratio_HL1}
     \ALGOHL has a %
     competitive ratio of at most $\frac{3}{2\alpha}$, for any $\alpha\in (0,1/2]$.
\end{lemma}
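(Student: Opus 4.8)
The plan is to bound the makespan of \ALGOHL by $\frac{3}{2\alpha}\optC$ via a case analysis on whether the algorithm ever reaches Phase~(iii). Throughout I will use that $\optC=\max_{(x,r)\in R}\max\{2x,r+x\}$, so every request sits at a position at most $x_{\max}\le\optC/2$ and is released by some time $\tlast\le\optC$, and that plain \mrin started at the origin serves $R$ within time $\frac{3}{2}\optC$~\cite{BlomKPS01}. The easy case is $\alpha\predC\ge\frac{3}{2}\optC$: then \mrin completes during Phase~(i), so \ALGOHL terminates there with makespan at most $\frac32\optC\le\frac{3}{2\alpha}\optC$ (using $\alpha\le 1$). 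More generally, if \ALGOHL terminates before entering Phase~(iii), this can only happen when the embedded \mrin run reaches the origin having served all requests, so the makespan is again at most $\frac32\optC$. Hence I may assume $\predC<\frac{3\optC}{2\alpha}$ and that the algorithm reaches Phase~(iii).

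For the remaining case I would first do the phase accounting. Writing $p_{(ii)}=p(\alpha\predC)$, unit‑speed movement from the origin over Phase~(i) gives $0\le p_{(ii)}\le\alpha\predC$, and the stated Observation gives $p_{(iii)}=\frac12((1-\alpha)\predC+p_{(ii)})\ge p_{(ii)}$, so Phase~(ii) moves the server monotonically rightward and ends at time $T_2:=\alpha\predC+(p_{(iii)}-p_{(ii)})=\tfrac12((1+\alpha)\predC-p_{(ii)})$. The identities $T_2+p_{(iii)}=\predC$ and $T_2-p_{(iii)}=\alpha\predC-p_{(ii)}$, together with $T_2\ge\predC/2$ (from $p_{(ii)}\le\alpha\predC$), will carry all the arithmetic. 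The structural ingredient I need is a description of \mrin when restarted from $p_{(iii)}$: since it moves right only to reach an already‑released, still‑unserved request, and every request lies at position $\le\optC/2$, the server never goes right of $\max\{p_{(iii)},\optC/2\}$ in Phase~(iii); and from any configuration reached after $\tlast$, \mrin moves monotonically right to the rightmost unserved request and then monotonically to the origin, so its remaining running time from a point $p$ is at most $\max\{p,\,2x_{\max}-p\}\le\max\{p,\optC\}$.

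With these in hand I would split on whether the last release $\tlast$ precedes the start of Phase~(iii). If $\tlast\le T_2$, all requests are released when Phase~(iii) begins, so Phase~(iii) lasts at most $\max\{p_{(iii)},\,2x_{\max}-p_{(iii)}\}$ and the makespan is at most $T_2+\max\{p_{(iii)},2x_{\max}-p_{(iii)}\}=\max\{\predC,\ \alpha\predC-p_{(ii)}+2x_{\max}\}\le\max\{\predC,\ \alpha\predC+\optC\}$, which is below $\frac{3\optC}{2\alpha}$ since $\predC<\frac{3\optC}{2\alpha}$ and $\alpha\predC+\optC<\tfrac52\optC\le\frac{3\optC}{2\alpha}$ for $\alpha\le\tfrac12$. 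Otherwise $\tlast>T_2\ge\predC/2$, which forces $\predC<2\tlast\le 2\optC$ and hence $p_{(iii)}\le\predC/2<\optC$; since at time $\tlast\le\optC$ the server is at a position at most $\max\{p_{(iii)},x_{\max}\}$, the makespan is at most $\optC+\max\{p_{(iii)},2x_{\max}\}\le\optC+\optC=2\optC\le\frac{3\optC}{2\alpha}$. In both cases the claimed bound follows. The main obstacle is precisely the \mrin analysis in Phase~(iii): one must verify that restarting from the carefully chosen point $p_{(iii)}$ never causes an overshoot beyond $\max\{p_{(iii)},\optC/2\}$ and, crucially, that \mrin stops oscillating as soon as the last request appears — this ``monotone tail'' is what lets the end of Phase~(iii) be charged against $\optC$ rather than against the (possibly much larger) prediction $\predC$.
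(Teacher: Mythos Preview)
Your proof is correct but takes a genuinely different route from the paper. The paper argues by \emph{coupling} \ALGOHL with a full run of plain \mrin: since Phase~(i) coincides with \mrin's first $\alpha\predC$ time units, and Phase~(ii) only shifts the server rightward by $p_{(iii)}-p_{(ii)}$, Phase~(iii) can cost at most what \mrin spends after time $\alpha\predC$ plus an extra $p_{(iii)}-p_{(ii)}$ to undo that drift. This yields $\algC \le \tfrac{3}{2}\optC + 2(p_{(iii)}-p_{(ii)}) \le \tfrac{3}{2}\optC + (1-\alpha)\predC$, and then the same observation $\alpha\predC\le \tfrac{3}{2}\optC$ that you use finishes the bound in one stroke. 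By contrast, you analyze Phase~(iii) directly via structural facts about \mrin on the half-line---the position cap $\max\{p_{(iii)},x_{\max}\}$ and the monotone tail after $\tlast$---and split on whether the last release precedes the start of Phase~(iii). Your approach is more elementary and sidesteps the somewhat delicate coupling step (which the paper states rather briefly), at the price of a longer case analysis; the paper's approach is shorter and makes the role of $(1-\alpha)\predC$ as the ``overshoot budget'' of Phase~(ii) explicit.
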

\begin{proof}
    Let $\algC$ be the makespan of a tour determined by \ALGOHL. %
    If the algorithm terminates in Phase~(i), then~$\algC\leq \frac{3}{2} \cdot \optC$, because \mrin is $3/2$-competitive~\cite{BlomKPS01}. Otherwise, Phase~(i) requires~$\alpha  \predC$ time  and Phase~(ii) requires~$p_{(iii)} - p_{(ii)}$ time. By denoting the time used in Phase~(iii) by~$C_{(iii)}$, it follows
    \begin{equation}
        \algC = \alpha\predC + (p_{(iii)} - p_{(ii)}) + C_{(iii)}. \label{eq:half-line-improved-1}
    \end{equation}
    Recall that in Phases~(i) and~(iii) the algorithm follows the \mrin. %
    Consider the execution of \mrin on the same instance. Due to Phase~(ii), in Phase~(iii) \ALGOHL does not have to serve more requests than the ones served by \mrin after time~$\alpha \predC$. But, since \ALGOHL starts Phase~(iii) at point~$p_{(iii)}$ and \mrin is at point~$p_{(ii)}\leq p_{(iii)}$ at time~$\alpha \predC$, Phase~(iii) may take additional time equal to~$p_{(iii)} - p_{(ii)}$ compared to \mrin to move back if there are no requests to the right of~$p_{(ii)}$. As \mrin takes at most~$\frac{3}{2}\optC$ time for the instance, we conclude that
    \begin{equation*}
        \alpha \predC + C_{(iii)} \leq \frac{3}{2} \optC + (p_{(iii)} - p_{(ii)}).
    \end{equation*}
    By Equation~\eqref{eq:half-line-improved-1} and by the definition of $p_{(iii)}$ we obtain
    \begin{equation}
        \algC \leq 2(p_{(iii)} - p_{(ii)}) + \frac{3}{2}\optC \leq (1 - \alpha) \predC + \frac{3}{2} \optC. \label{eq:half-line-improved-2}
    \end{equation}
    Since we assumed that \ALGOHL does not terminate in Phase~(i), the time required by \mrin for the same instance is at least $\alpha  \predC$. Thus,~$\alpha  \predC \leq \frac{3}{2} \optC$, and together with~\eqref{eq:half-line-improved-2} we obtain the claimed bound:
    \[
        \algC \leq \frac{3(1-\alpha)}{2\alpha} \optC + \frac{3}{2}\optC = \frac{3}{2\alpha} \optC.\qedhere
    \]
\end{proof}

\begin{lemma}\label{lemma:half-line-improved-sense}
\ALGOHL has a competitive ratio of at most~$(1+\alpha) \cdot \left( 1 + \frac{\errhyp_1}{\optC} \right)$, for any $\alpha\in (0,1/2]$.
\end{lemma}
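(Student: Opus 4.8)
The plan is to exploit that \ALGOHL picks $p_{(iii)}$ precisely so that Phase~(iii) begins exactly $\predC$ time units before the server would return to the origin if it went straight left, and to combine this with the elementary ``floor'' facts that every actual request $(x,r)$ satisfies $r+x\le\optC$ and $2x\le\optC$ (since $\optC=\max_{(x,r)\in R}\{r+x,2x\}$). By \Cref{prop:errhyp-eta} the claimed bound reads $\algC\le(1+\alpha)(\optC+|\predC-\optC|)=(1+\alpha)\max\{\predC,\,2\optC-\predC\}$, where $\algC$ is \ALGOHL's makespan. If \ALGOHL terminates during Phase~(i), then \mrin completed the instance by time $\alpha\predC$, so $\algC\le\alpha\predC$; as $\algC\ge\optC$ for any algorithm, this forces $\optC\le\alpha\predC\le\predC$, whence $|\predC-\optC|=\predC-\optC$ and the target equals $(1+\alpha)\predC\ge\alpha\predC\ge\algC$. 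So assume \ALGOHL reaches Phase~(iii); let $t_0$ be its start time, with the server then at $p_{(iii)}$. Since Phase~(i) lasts $\alpha\predC$ and Phase~(ii) lasts $p_{(iii)}-p_{(ii)}\ge0$, a short computation gives $t_0=\tfrac{(1+\alpha)\predC-p_{(ii)}}{2}$, hence the key identity $t_0+p_{(iii)}=\predC$; using $0\le p_{(ii)}\le\alpha\predC$ this also yields $\tfrac{(1-\alpha)\predC}{2}\le p_{(iii)}\le\tfrac{\predC}{2}$.

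The core of the argument is the trajectory analysis of \mrin in Phase~(iii), run from $(p_{(iii)},t_0)$ on the set $R_{(iii)}$ of still-unserved requests, each of which is an actual request and hence satisfies $r+x\le\optC$ and $x\le\optC/2$. I would split on whether $R_{(iii)}$ contains a request with $x>p_{(iii)}$. If not, the server never moves right of $p_{(iii)}$: its leftward sweep serves every request with $r+x\le\predC$ and, absent such requests, returns to the origin at time $\predC$; a request with $r+x>\predC$ (possible only when $\predC<\optC$) triggers at most one back-and-forth detour, and tracing the trajectory shows that this request, and the ensuing return to the origin, occur by time at most $2(r+x)-\predC\le2\optC-\predC$. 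Hence $\algC\le\max\{\predC,\,2\optC-\predC\}$. If instead $R_{(iii)}$ has a request with $x>p_{(iii)}$, set $y^*=\max\{x:(x,r)\in R_{(iii)}\}\le\optC/2$; the server also sweeps right to $y^*$ and back. If it reaches $y^*$ no earlier than the release $r^*$ of the request there and no late request forces a detour on the way down, it is back at the origin by $t_0+2y^*-p_{(iii)}=\predC+2(y^*-p_{(iii)})\le\alpha\predC+\optC$, using $y^*\le\optC/2$ and $p_{(iii)}\ge\tfrac{(1-\alpha)\predC}{2}$. Otherwise a detour occurs; using $r^*+y^*\le\optC$ (resp.\ $r+x\le\optC$ for the offending late request) together with $y^*>p_{(iii)}$, the same bookkeeping shows the server finishes by $2\optC-\predC$. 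So in this case $\algC\le\max\{\alpha\predC+\optC,\,2\optC-\predC\}$.

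It remains to combine the cases: $\algC\le\max\{\predC,\,2\optC-\predC,\,\alpha\predC+\optC\}$. The first two terms are at most $\max\{\predC,2\optC-\predC\}\le(1+\alpha)\max\{\predC,2\optC-\predC\}$, the target. For the third: if $\predC\ge\optC$ then $\alpha\predC+\optC\le(1+\alpha)\predC$, and if $\predC<\optC$ then $\alpha\predC+\optC<(1+\alpha)\optC\le(1+\alpha)(2\optC-\predC)$; in both cases it is at most the target. This finishes the proof.

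The step I expect to be the main obstacle is the \mrin trajectory bookkeeping in Phase~(iii): one must show that a ``late'' request --- one whose $r+x$ exceeds the relevant baseline, namely $\predC$ in the first case and $\predC+2(y^*-p_{(iii)})$ in the second --- costs at most one extra out-and-back, and that the resulting finishing time is still governed by $2\optC-\predC$. It is here that the choice making $t_0+p_{(iii)}=\predC$ does its job, converting the per-request inequality $r+x\le\optC$ into a global makespan bound; everything else is arithmetic.
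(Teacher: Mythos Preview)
Your overall plan is sound and the identity $t_0+p_{(iii)}=\predC$ is indeed the crux; your final case combination is also correct. However, the trajectory bookkeeping in Phase~(iii) has a real gap. The claim that a late request ``triggers at most one back-and-forth detour'' is false: several requests with $r+x>\predC$ can arrive one after another, each forcing a separate rightward excursion (e.g.\ $p_{(iii)}=1$, $t_0=0$, requests $(0.9,0.2)$ and $(0.9,1.0)$ both have $r+x>1$ and cause two detours). Your formula $2(r+x)-\predC$ for the finishing time is only valid for the \emph{first} detour, because it implicitly uses that at the start of the excursion $p(t)+t=\predC$; after earlier detours this quantity has already grown, so your per-detour accounting does not compose. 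The same issue appears in your Case~B sub-cases: if the request at $y^*$ is released late, the server may first drift left (or even reach the origin and wait) before ever heading to $y^*$, so the clean formula $t_0+2y^*-p_{(iii)}$ need not describe the trajectory.

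Your bounds are nonetheless true, but they require a different argument. One fix: set $f(t)=p(t)+t$, note $f(t_0)=\predC$ and $f$ is nondecreasing; at the \emph{start} of any rightward segment a just-released request $(x,s')$ with $x>p(s')$ gives $f(s')<s'+x\le\optC$, and the request served at the \emph{top} $(q,s)$ satisfies $r_q>s'$ and $r_q+q\le\optC$, whence $f(s)=s'+2q-q'<2\optC-f(s')\le 2\optC-\predC$. This handles arbitrarily many detours at once. The paper takes a different and somewhat cleaner route: it splits on $\optC\le\predC$ versus $\optC>\predC$. In the first case one shows directly that after an initial rightward sweep the server meets every later request on the way down (since $r+x\le\optC\le\predC$), so there are no detours at all. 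In the second case one proves, by a short contradiction argument, that the total time spent moving right or waiting in Phase~(iii) is at most $B=\max\{\optC-\predC,\ \optC/2-p_{(iii)}\}$, which immediately yields $\algC\le\predC+2B\le\max\{2\optC-\predC,\ \alpha\predC+\optC\}$. This sidesteps any per-excursion bookkeeping and is the missing ingredient in your sketch.
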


\begin{proof}
We can assume that the instance consists of at least one request, as otherwise, we would 
immediately receive an end-signal in the origin and clearly achieve the stated 
competitive ratio. For the case where \ALGOHL reaches Phase~(iii), let $t_{(iii)}$
denote the time when Phase~(iii) begins, that is, by the definition of $p_{(ii)}$ and $p_{(iii)}$,~$t_{(iii)} = \alpha \predC + (p_{(iii)} - p_{(ii)}) = \frac{1}{2}((1+\alpha)\predC - p_{(ii)})$. We denote again by $\algC$ the cost given by \ALGOHL and we distinguish two cases. In each case we prove that $\algC \leq (1+\alpha) (\optC + \abs{\predC - \optC})$, and then assert the stated competitive ratio using~\Cref{prop:errhyp-eta}.

\begin{description}
\item[Case $\boldsymbol{C^* \leq \predC.}$] We start by showing that~$\algC \leq (1+\alpha) \predC$. If the algorithm does not reach Phase~(ii), clearly~$\algC \leq \alpha \predC$.
Now suppose that the algorithm reaches Phase~(ii) (and therefore Phase~(iii)) and the server is at position~$p_{(iii)}$ at time~$t_{(iii)}$.
Observe that, in this case, every request released at time~$t_{(iii)} + \delta$ cannot be to the right of~$p_{(iii)} - \delta$ for any~$0 \leq \delta \leq p_{(iii)}$. 
Indeed, the existence of such a request~$(x,t_{(iii)} + \delta)$ with~$x > p_{(iii)} - \delta$ would imply
\begin{align*}
    \optC &\geq x + t_{(iii)} + \delta \\
    & > p_{(iii)} - \delta + t_{(iii)} + \delta \\
    &= p_{(iii)} + t_{(iii)} 
    = \frac{1}{2}((1-\alpha)\predC + p_{(ii)}) + \frac{1}{2}((1+\alpha)\predC - p_{(ii)})  = \predC,
\end{align*}
a contradiction.
Therefore, in Phase~(iii) the algorithm first serves all the requests to the right of~$p_{(iii)}$ released before time $t_{(iii)}$~(if there is any) and then the server goes straight back to the origin while serving all remaining requests. Since all requests are in the interval~$[0,\optC/2] \subseteq [0,\predC/2]$, the algorithm needs at most~$\predC$ time for phases~(ii) and~(iii), giving a total time of~$(1 + \alpha)\predC$.

We thus obtain the desired bound on the algorithm's cost: 
\[
\algC \leq (1 + \alpha) \predC = (1 + \alpha)\optC + (1 + \alpha) (\predC - \optC).
\]

\item[Case $\boldsymbol{C^* > \predC.}$] In this case the algorithm must enter Phase~(ii), as otherwise~$\algC \leq \alpha \predC < \alpha \optC \leq \optC$ contradicts that~$\optC$ is the optimal makespan. 
Thus, %
the server reaches position~$p_{(iii)}$ at time~$t_{(iii)}$, at the start of Phase~(iii). 
We claim that \ALGOHL spends at most~$B := \max\{\optC - \predC, \optC/2 - p_{(iii)}\}$ time for moving \emph{rightwards or for waiting at the origin after time~$t_{(iii)}$}. 
This allows us to bound the algorithm's makespan, given by the time $t_{(iii)}$ spent on phases~(i) and~(ii), plus the time $p_{(iii)}$ needed to go back to the origin from the point reached at the start of Phase~(iii), plus twice the time spent going to the right %
(or sitting at the origin waiting) in Phase~(iii), as follows:%
\begin{align*}
    \algC &\leq t_{(iii)} + p_{(iii)} + 2 B = \frac{1}{2}((1+\alpha)\predC-p_{(ii)})+ \frac{1}{2}((1-\alpha)\predC+p_{(ii)}) + 2 B\\
    &= \predC + \max \left\{2 \optC - 2  \predC, \optC - (1 - \alpha)  \predC - p_{(ii)} \right\} \\
    &\leq \max \left\{2  \optC - \predC, \optC + \alpha  \predC \right\} \\
    &\leq (1 + \alpha) \optC + (\optC - \predC), 
\end{align*}
which proves the desired bound. 

We prove the claim by contradiction, for which we 
assume that the algorithm exceeds the bound of~$B$ when serving a request~$(x, r)$. This request must exist, as otherwise the server would not move rightwards or wait at the origin. If~$r \leq t_{(iii)}$, the server directly moves from~$p_{(iii)}$ to~$x$ at the beginning of Phase~(iii). Our assumption implies~$x - p_{(iii)} > B$, and thus
\[
    \optC \geq 2x > 2 B + 2 p_{(iii)} \geq 2 \left( \frac{\optC}{2} - p_{(iii)} \right) + 2 p_{(iii)} = \optC,
\]
a contradiction. If~$r > t_{(iii)}$, denote by~$L$ the total time spent by the server moving leftwards between time~$t_{(iii)}$ and~$r$. Thus,~$r - t_{(iii)} - L$ is equal to the time the server spent moving rightwards or waiting in the origin between time~$t_{(iii)}$ and~$r$. Let~$p(r)$ be the position of the server at time~$r$. Note that~$p(r)\leq x$. Due to our assumption,~$x - p(r) + (r - t_{(iii)} - L) > B$. Since the server moved~$L$ units to the left after time~$t_{(iii)}$, it cannot be to the left of point $p_{(iii)} - L$ at time $r$, that is,~$p(r) \geq p_{(iii)} - L$. We conclude
\begin{align*}
    \optC &\geq x + r \\
    &> B - (r - t_{(iii)} - L) + p(r) + r \\
    &\geq (\optC - \predC) - (r - t_{(iii)} - L) + (p_{(iii)} - L) + r \\
    & = \optC - \predC + t_{(iii)} + p_{(iii)} = \optC,
\end{align*}
again a contradiction. \qedhere
\end{description}
\end{proof}

We next show that the robustness factor $\frac{3}{2\alpha}$ is tight for our algorithm. 
\begin{lemma}
    \ALGOHL has a robustness factor at least $\frac{3}{2\alpha}$, for any $\alpha\in (0,1/2]$.
    \end{lemma}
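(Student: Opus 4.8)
The plan is to certify that the robustness bound of Lemma~\ref{lem:ratio_HL1} cannot be improved, by running \ALGOHL on the classical worst-case instance of \mrin and choosing the scalar prediction $\predC$ so that the embedded \mrin run in Phase~(i) is cut off just before it would return to the origin. Concretely, on the half-line fix $d>0$ and let the actual request set be the single request $R=\{(d,d)\}$; then $\optC=2d$, and an undisturbed run of \mrin idles at the origin until time $d$, walks right to $d$ (arriving at time $2d$), and walks back, reaching the origin only at time $3d$. I would choose $\predC$ with $2d<\alpha\predC<3d$ — e.g.\ $\predC=(3d-\epsilon)/\alpha$ for small $\epsilon>0$, which is legitimate for every $\alpha\in(0,1/2]$.

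The first step is to trace \ALGOHL on this input. In Phase~(i) it executes \mrin until time $\alpha\predC$; since $2d<\alpha\predC<3d$, at that instant the request has already been served but the server sits strictly between the origin and $d$, at $p_{(ii)}=3d-\alpha\predC\in(0,d)$, so the end-signal has not been received and \ALGOHL passes to Phase~(ii) instead of terminating. Since $\alpha\le 1/2$ and $p_{(ii)}\le\alpha\predC$, the observation stated just before Lemma~\ref{lem:ratio_HL1} gives $p_{(iii)}\ge p_{(ii)}$, so Phase~(ii) moves the server rightwards to $p_{(iii)}=\tfrac12((1-\alpha)\predC+p_{(ii)})$ in time $p_{(iii)}-p_{(ii)}$, and Phase~(iii) begins at $t_{(iii)}=\alpha\predC+p_{(iii)}-p_{(ii)}$. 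In Phase~(iii) nothing remains to be served and no request lies to the right of the server, so \mrin simply returns to the origin, where the end-signal finally triggers; hence $\algC=t_{(iii)}+p_{(iii)}=\alpha\predC+2p_{(iii)}-p_{(ii)}=\predC$, the dependence on $p_{(ii)}$ cancelling exactly.

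The second step is the arithmetic that closes the argument: from $p_{(ii)}=3d-\alpha\predC$ one gets $\optC=2d=\tfrac23(\alpha\predC+p_{(ii)})$, so the competitive ratio on this instance is $\algC/\optC=\tfrac{3\predC}{2(\alpha\predC+p_{(ii)})}$, and taking $\epsilon\to 0$ (so $p_{(ii)}\to 0$ and $\alpha\predC\to 3d$) sends this to $\tfrac{3}{2\alpha}$, which is the claimed lower bound on the robustness factor (matching Lemma~\ref{lem:ratio_HL1}). I expect the construction itself to be routine; the one delicate point is the boundary behaviour at time $\alpha\predC$ — one must verify that \mrin's server is genuinely still in transit there, so that the end-signal does not fire and \ALGOHL truly reaches Phases~(ii) and~(iii); this is exactly why $\predC$ is taken with $\alpha\predC$ strictly below $3d$ rather than equal to it, and why the bound is attained only in the limit $\epsilon\to 0$.
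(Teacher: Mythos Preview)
Your proposal is correct and follows essentially the same approach as the paper: a single request whose \mrin service finishes just barely after time $\alpha\predC$, so that $p_{(ii)}\approx 0$ and Phase~(ii) sends the server on a useless excursion, giving $\algC=\predC$ while $\optC\approx\tfrac{2}{3}\alpha\predC$. The only cosmetic difference is the parameterization---the paper fixes $\predC=1$ and chooses the request $(\alpha/3,\alpha/3+\epsilon)$, whereas you fix the request $(d,d)$ and choose $\predC=(3d-\epsilon)/\alpha$---but these are the same construction up to rescaling.
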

    
\begin{proof}
    Consider an instance consisting of a single 
    request~$\sigma = (\alpha/3, \alpha/3 + \epsilon)$ for a 
    small~$\epsilon > 0$, and suppose that we give \ALGOHL the 
    prediction~$\predC = 1$. The algorithm serves~$\sigma$ at 
    time~$2/3 \cdot \alpha + \epsilon$ and the server is at position~$\epsilon$ at the start of Phase~(ii), 
    i.e., at time~$\alpha$. Thus, it does not receive 
    an end-signal but the server reaches point~$p_{(iii)} = \frac{1}{2}(1-\alpha + \epsilon)$ at time~$t_{(ii)} = \frac{1}{2}(1+\alpha - \epsilon)$. As there are no further 
    requests, it moves back to the origin. 
    Since an optimal solution serves~$\sigma$ at time~$\alpha/3 + \epsilon$ and 
    is back at the origin at time~$2/3 \cdot \alpha + \epsilon$, we conclude 
    that the prediction is not perfect and
    that the robustness ratio of the algorithm is at least
    \[
        \frac{\frac{1}{2}(1+\alpha - \epsilon) + \frac{1}{2}(1-\alpha + \epsilon)}{\frac{2}{3}\alpha + \epsilon} =
    \frac{1}{\frac{2}{3}\alpha + \epsilon} \xrightarrow{\epsilon \to 0} \frac{3}{2\alpha}. 
    \]
\end{proof}

\section{Online network design problems with predictions}
\label{app:networkdesign}

This section covers our results for the following online-list graph problems: online Steiner tree, online Steiner Forest and (capacitated) facility location. We prove new and improved error-dependent performance bounds with respect to the cover error for algorithms based on the general framework introduced by Azar et al.~\cite{AzarPT22}.

In the following subsections we precisely define the problems mentioned above and specify problem-related details for the definition of the cover error. Before that, we give a short overview over the algorithmic framework of Azar et al.~\cite{AzarPT22}, introduce required notation and terminology which strictly follows~\cite{AzarPT22}, and prove an auxiliary lemma which helps to abstract parts of the connection between their algorithm and the cover error.

The framework of Azar et al.\cite{AzarPT22} combines in an iterative manner the execution of a plain online algorithm $\on$ on the actual request set with partial solutions for the predicted input computed using a $\nu$-approximation algorithm for the price-collection variant of the corresponding offline problem. %
The execution of the framework for the $i$th request is called the $i$th iteration. Whenever in an iteration the total cost paid by the online algorithm so far doubles w.r.t. to the last iteration it doubled, 
an offline solution for some part of $\predR$ is added using the subroutine \subr. Such an iteration is called \emph{major} iteration, and we call the sequence of following iterations until the next major iteration (including this) a \emph{phase}. Let~$m$ be the total number of phases for a fixed input.
We denote by~$Q_j$ the set of requests served by the online algorithm in phase $j$, and for any $Q'_j \subseteq Q_j$, $\on_j(Q'_j)$ is the total cost of the online algorithm incurred to serve the requests in $Q'_j$ in phase $j$. We write $\on_j$ for $\on_j(Q_j)$. Note that the online algorithm is always allowed to use the augmented solutions by the prediction with zero cost. Further, $\hB_i$ %
denotes the total cost of the online algorithm until the latest previous major iteration before iteration $i$,
and $\OPT(R')$ denotes the value of an optimal solution for input~$R'$.

\begin{restatable}{lemma}{azarHelperLemma}\label{lemma:azar-bounds}
Suppose that the following two conditions hold for some function $\mu: \mathbb{N} \to \mathbb{R}^+$ and some $k \geq 1$:
\begin{enumerate}[(a)]
    \item $\OPT(\predR) \leq \OPT + \Gamma_\infty(\predR, R)$
    \item If there is a major iteration~$i$ in which all predicted requests become satisfied, we require $\on_j(H) \leq \cO(\mu(k)) \cdot \gamma(H , \hx)$ 
    for any 
    $H \subseteq Q_j$ such that $\abs{H} \leq k$, $j \in \{m-1,m\}$ and $\hx \in \predR$.
\end{enumerate}
Then, the total cost of the framework in~\cite{AzarPT22} is at most
\(
    \cO(1) \cdot \OPT + \cO(\mu(k)) \cdot \errhyp_k.
\)
\end{restatable}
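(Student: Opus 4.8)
The plan is to run the cost analysis of the framework of~\cite{AzarPT22} through the two hypotheses, letting~(a) absorb everything up to the point where all predicted requests are satisfied and letting~(b) absorb the rest. I would begin by invoking the structural sub-result of Azar et al.~\cite{AzarPT22} that splits the execution into a \emph{first part}, running until all predicted requests have become satisfied, and a \emph{second part}, the remainder: that sub-result bounds the cost of the first part by $\cO(1)\cdot\OPT(R)+\cO(1)\cdot\OPT(\predR)$, and, whenever such a moment occurs at all, bounds the cost of the second part by $\cO(1)\cdot\bigl(\on_{m-1}+\on_m\bigr)$; if no major iteration ever satisfies all predicted requests, the whole execution is a first part and only the first bound is needed.

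For the first part I would apply hypothesis~(a) directly:
\[
   \cO(1)\cdot\OPT(R)+\cO(1)\cdot\OPT(\predR)
   \;\le\; \cO(1)\cdot\OPT + \cO(1)\cdot\Gamma_\infty(\predR,R)
   \;\le\; \cO(1)\cdot\OPT + \cO(\mu(k))\cdot\errhyp_k,
\]
using $\errhyp_k = \Gamma_\infty(\predR,R)+\Gamma_k(R,\predR)\ge\Gamma_\infty(\predR,R)$ and absorbing the leading constant into $\cO(\mu(k))$, which is harmless in all the concrete applications. If no major iteration satisfies all predicted requests this already proves the claim, so henceforth I assume such an iteration exists, which is exactly the situation in which~(b) is available.

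For the second part I would bound $\on_j = \on_j(Q_j)$ for each $j\in\{m-1,m\}$ using~(b). First, for any request $q\in Q_j\cap\predR$, applying~(b) with $H=\{q\}$ and $\hx=q$ and using that the diagonal hyperedge $\{q\}\cup\{q\}$ has cost $\gamma(\{q\},q)=0$ shows that $q$ is served at zero online cost; hence it suffices to account for $Q_j\setminus\predR\subseteq R\setminus\predR$. I would then fix a min-cost $k$-hyperedge cover $\cH'$ of $R\setminus\predR$ by $\predR$, which has value $\Gamma_k(R\setminus\predR,\predR)=\Gamma_k(R,\predR)$, and assign every $q\in Q_j\setminus\predR$ to one hyperedge $H(q)\cup\{\hx_{H(q)}\}\in\cH'$ with $q\in H(q)$. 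The blocks $S_H := \{\,q\in Q_j\setminus\predR : H(q)=H\,\}$ partition $Q_j\setminus\predR$, each satisfies $S_H\subseteq Q_j$ and $\abs{S_H}\le\abs{H}\le k$, and since the online algorithm's per-request cost is additive we get $\on_j = \sum_H \on_j(S_H)$. Hypothesis~(b) (applied with $\hx=\hx_H$) gives $\on_j(S_H)\le\cO(\mu(k))\cdot\gamma(S_H,\hx_H)$, and monotonicity of $\gamma$ --- dropping requests from a subinstance never raises its optimal value, which holds for all problems considered --- gives $\gamma(S_H,\hx_H)\le\gamma(H,\hx_H)$. Summing over~$H$ yields $\on_j\le\cO(\mu(k))\cdot\Gamma_k(R,\predR)$; adding this for $j\in\{m-1,m\}$ together with the first-part bound gives the claimed $\cO(1)\cdot\OPT + \cO(\mu(k))\cdot\errhyp_k$.

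I expect the main difficulty to lie in the interface with~\cite{AzarPT22} rather than in any single inequality: one has to pull out of their analysis precisely the two pieces above --- the first-part cost charged to $\OPT(R)+\OPT(\predR)$ with only constant factors, and the second-part cost charged to the online cost of the last two phases --- and keep the notions of ``iteration'', ``phase'', ``major iteration'' and ``satisfied request'' consistent with theirs. By contrast, the remaining steps --- refining an arbitrary min-cost cover into a partition in order to use additivity of $\on_j$, the zero-cost-diagonal observation that removes the contribution of $\predR$, and the monotonicity of $\gamma$ --- are routine.
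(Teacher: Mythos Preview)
Your proposal is correct and follows essentially the same route as the paper: split the cost via the structural results of~\cite{AzarPT22}, charge the first part to $\OPT+\OPT(\predR)$ and apply~(a), then decompose $Q_j$ along a min-cost $k$-hyperedge cover and apply~(b) hyperedge by hyperedge. The paper makes the first-part bound slightly more explicit---it derives $\hB_{i-1}\le\OPT(\predR)$ by combining Lemmas~2.1 and~2.2 of~\cite{AzarPT22} rather than citing a single black-box sub-result---and applies~(b) directly with $H_j=Q_j\cap R'$ (a cover, with monotonicity of~$\gamma$ used implicitly) instead of first removing $Q_j\cap\predR$ and refining to a partition, but these are cosmetic differences.
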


\begin{proof}
    If there exists a major iteration in which all predicted requests become satisfied, let this be the $i$th iteration. Otherwise, let $i$ be the last iteration.
    Iteration~$i$ satisfies the following bounds~\cite[Lemma~2.2]{AzarPT22}:
    \begin{enumerate}
        \item The cost of the first~$i$ iterations is at 
        most~$\cO(1) \cdot \OPT + \cO(\hB_{i-1})$.
        \item If iteration $i$ is not the last iteration, the total cost of the iterations after iteration $i$ is at most~$\cO(1) \cdot \max \{ \on_{m-1}, \on_{m}\}$.
    \end{enumerate}
    
    We first bound~$\hB_{i-1}$. Let~$i' < i$ be the iteration in which~$\hB_{i-1}$ was set.
    Thus, some predicted requests were not satisfied by \subr in iteration $i'$. By the definition of the algorithm, we conclude that the total cost \subr must pay for satisfying all predicted requests in iteration $i'$, that is $c(\subr(\predR, 0))$, is strictly larger than $3 \nu\hB_{i'} = 3 \nu \hB_{i-1}$.
    By \cite[Lemma 2.1]{AzarPT22}, $\subr(\predR, 0)$ approximates the least 
    expensive solution that satisfies all predicted requests within a factor of $3\nu$. 
    This implies that the optimal solution for 
    $\predR$, which is such a solution, has cost of at 
    least $\hB_{i-1}$, i.e. $\OPT(\predR) \geq \hB_{i-1}$.
    Condition~(a) therefore implies that the cost of the first~$i$ iterations is at most
    \[
    \cO(1) \cdot \OPT + \cO(\hB_{i-1}) \leq \cO(1) \cdot \OPT + \cO(1) \cdot \OPT(\predR) \leq \cO(1) \cdot \OPT + \cO(1) \cdot \Gamma_\infty(\predR, R).
    \]
    
    Second, we bound~$ \max\{ \on_{m-1}, \on_{m} \}$ and assume that all predicted requests $\predR$ are satisfied in iteration~$i$. 
    Let~$j \in \{m-1,m\}$ and 
    let $Q_j \subseteq R$ be the subset
    of requests considered by the online algorithm in phase~$j$. 
    Fix a min-cost $k$-hyperedge cover of $R$ by $\predR$, a hyperedge~$R' \cup \{\hx\}$ of the cover and let~$H_j = Q_j \cap R'$ be the subset of 
    requests of the hyperedge which $\on_j$ serves. Recall that~$|R'| \leq k$. 
    Condition~(b) implies
    \[
        \on_j(H_j) \leq \cO(\mu(k)) \cdot \gamma(H_j, \hx),
    \]  
    By the choice of the hyperedge cover, every request in~$Q_j$ is in at least one hyperedge. 
    Summing over all hyperedges of the cover gives
    \[
        \on_j(Q_j) \leq \cO(\mu(k)) \cdot \Gamma_k(R, \predR),
    \] 
    and 
    \[
        \max \{ \on_{m-1}, \on_{m} \} \leq \cO(\mu(k)) \cdot \Gamma_k(R, \predR).
    \]
    
    We conclude the statement by adding up both bounds, that is
    \[
         \cO(1) \cdot \OPT + \cO(\Gamma_\infty(\predR, R)) +\cO(\mu(k)) \cdot \Gamma_k(R, \predR) 
        \leq \cO(1) \cdot \OPT + \cO(\mu(k)) \cdot \errhyp_k. \qedhere
    \]
    \end{proof}

\subsection{Online Steiner tree}

In the online (rooted) Steiner tree problem, a sequence of vertices $R \subseteq V$ (called terminals) of a graph~$G = (V,E)$ with a distinct root vertex $\rho$ is revealed one-by-one. An online algorithm maintains a solution $S$ of selected edges and adds edges to $S$ such that every arriving terminal is connected to $\rho$ via edges in $S$. Every edge $e \in E$ has an associated cost $c_e \in \mathbb{R}^+$, and the objective of the algorithm is to minimize the total cost of selected edges $\sum_{e \in S} c_e$.

We define the cost $\gammast(R', x')$ of a hyperedge~$R' \cup \{x'\}$ as 
the value of the optimal offline Steiner Tree for terminals $R'$ with root~$x'$.

Azar et al.~\cite{AzarPT22} use the Greedy algorithm~\cite{ImaseW91} as online algorithm in their framework, which connects an arriving terminal to the root using the shortest path to the current solution. This algorithm is $\cO(\log \abs{R})$-competitive in the online setting~\cite{ImaseW91}. We denote it by~$\onst$. %

\begin{restatable}{theorem}{hyperedgeSteinerTree}
    The algorithm in~\cite{AzarPT22} for the (undirected) online Steiner tree problem incurs, for every~$k \geq 1$, cost of at most
    \(
        \cO(1) \cdot \OPT + \cO(\log k) \cdot \errhyp_k.
    \)
\end{restatable}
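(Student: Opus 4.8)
The plan is to apply \Cref{lemma:azar-bounds} with $\mu(k)=\log k$, the Greedy algorithm of~\cite{ImaseW91} as the online subroutine $\onst$, and $\gammast$ as the hyperedge cost function; it then remains only to verify the two hypotheses~(a) and~(b) of that lemma for a fixed but arbitrary $k\ge 1$, after which the bound $\cO(1)\cdot\OPT+\cO(\log k)\cdot\errhyp_k$ is immediate.

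Condition~(a) is the routine part. I would fix an optimal Steiner tree $T^*$ for the actual terminals $R$ with root $\rho$, of cost $\OPT$, together with a min-cost $\infty$-hyperedge cover of $\predR$ by $R$. For each hyperedge $\predR'\cup\{x\}$ of this cover (with $\predR'\subseteq\predR$ and $x\in R$) I would attach to $T^*$ an optimal Steiner tree for the terminal set $\predR'$ rooted at $x$; this costs exactly $\gammast(\predR',x)$, and since $x\in R$ is already connected to $\rho$ in $T^*$, it connects every terminal of $\predR'$ to $\rho$. As every predicted terminal lies in at least one hyperedge of the cover, the resulting subgraph is feasible for the predicted instance, so $\OPT(\predR)\le\OPT+\Gamma_\infty(\predR,R)$, which is~(a).

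Condition~(b) is the heart of the argument. Here I would use that, by the chosen split of the execution (as in the proof of \Cref{lemma:azar-bounds}: all predicted requests are satisfied by the end of the ``first part''), in each of the last two phases $j\in\{m-1,m\}$ every predicted terminal, and in particular the given $\hx\in\predR$, is already a vertex of the current Greedy solution. Fix $H\subseteq Q_j$ with $\abs{H}\le k$, write $c_v$ for the cost Greedy pays to connect $v\in H$ during phase $j$ (so $\onst_j(H)=\sum_{v\in H}c_v$), and order $H=\{v_1,\dots,v_\ell\}$ so that $c_{v_1}\ge\cdots\ge c_{v_\ell}$. Two observations: for any $v,v'\in H$ the one that arrives later is at distance at least its own Greedy cost from the other, so $d(v,v')\ge\min(c_v,c_{v'})$; and $d(v,\hx)\ge c_v$ for every $v\in H$ since $\hx$ is already in the solution. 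Consequently, for every index $t$ the $t+1$ points $\{v_1,\dots,v_t,\hx\}$ are pairwise at distance at least $c_{v_t}$, and since the optimal Steiner tree realizing $\gammast(H,\hx)$ spans all of them, doubling its edges and reading off the cyclic order of these points along an Euler tour gives $2\,\gammast(H,\hx)\ge (t+1)\,c_{v_t}$. Summing the resulting bound $c_{v_t}\le 2\,\gammast(H,\hx)/t$ over $t=1,\dots,\ell$ and using $\ell\le k$ yields
\[
    \onst_j(H)=\sum_{t=1}^{\ell}c_{v_t}\le 2\,\gammast(H,\hx)\sum_{t=1}^{\ell}\frac1t=\cO(\log k)\cdot\gammast(H,\hx),
\]
which is~(b) with $\mu(k)=\log k$.

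The step I expect to be the main obstacle is condition~(b): one must argue carefully that $\hx$ is genuinely present in the Greedy solution throughout phases $m-1$ and $m$ — this is exactly why \Cref{lemma:azar-bounds} singles out the last two phases and relies on the partition into a part before and after all of $\predR$ is satisfied — and one must set up the Euler-tour counting so that the pairwise-far point set includes the root $\hx$ of the subinstance, all of which are spanned by the tree witnessing $\gammast(H,\hx)$. Condition~(a), by contrast, follows directly from the hyperedge-cover structure, mirroring the bound on $\predC$ used in the proof of \Cref{lem:competitive_general1}.
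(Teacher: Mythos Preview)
Your proposal is correct and follows the same overall plan as the paper: both invoke \Cref{lemma:azar-bounds} and verify its two hypotheses, with condition~(a) handled identically by augmenting an optimal tree for~$R$ via the hyperedges of a min-cost $\infty$-cover. The only difference is in condition~(b): the paper simply cites a lemma of Azar et al.\ asserting $\onst_j(H_j)\le\cO(\log k)\cdot\OPT_j(H_j)$ (where $\OPT_j$ treats previously bought edges as free) and then observes $\OPT_j(H_j)\le\gammast(H_j,\hx)$ because $\hx$ is already in the solution, whereas you unpack this into a direct Imase--Waxman style argument, including $\hx$ among the pairwise-far points and comparing to the tree realizing $\gammast(H,\hx)$ via an Euler tour. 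Your route is more self-contained and makes explicit exactly where the presence of $\hx$ in the solution is used; the paper's route is shorter but relies on the reader accepting the black-box inequality from~\cite{AzarPT22}.
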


\begin{proof}
    We show the bound on the competitive ratio using \Cref{lemma:azar-bounds}. It remains to 
    prove the two required properties of \Cref{lemma:azar-bounds}.
    
    First, consider a min-cost $\infty$-hyperedge cover of $\predR$ by $R$. 
    For every hyperedge $\predR' \cup \{x\}$ in the cover, we connect~$x \in R$ with
    all predicted requests in~$\predR'$ using the Steiner tree considered in~$\gammast(\predR', x)$. 
    Adding an optimal Steiner Tree for the terminal set~$R$ then also satisfies all requests in~$\predR$ (see \Cref{fig:st-serve-predicted-requests}), and since 
    the augmentation cost are at most the hyperedge costs, we conclude
    \[
        \OPT(\predR) \leq \OPT + \Gamma_\infty(\predR, R).
    \]

\begin{figure}
        \begin{subfigure}[t]{0.63\textwidth}
            \begin{center}           
            \begin{tikzpicture}[scale=0.9]
                \node[origin] (o) at (8,7) {$\rho$};

		\node[corr] (1) at (4,5) {1}; 
		\node[pred] (2) at (3,5) {2}; 
		\node[pred] (3) at (3,7) {3}; 
		\node[pred] (4) at (5,8) {4}; 
		\node[pred] (5) at (6,7.5) {5}; 

		\node[req] (6) at (4,6) {6};
		\node[req] (7) at (8,6) {7};
		\node[req] (9) at (8,8) {9};
		\node[req] (8) at (10,6) {8};
	   
	\draw[reqgreen, line width = 2pt, dashed] (o) -- (9)
	(o) -- (8)
	(o) -- (7) -- (6) -- (1);

	\draw[absent] 
		(9) -- (5) -- (4) 
		(6) -- (3)
		(1) -- (2);
    
            \end{tikzpicture}
            \end{center}
    
            \caption{Augmentation of the optimal Steiner tree for $R$ (dashed) to a Steiner tree which serves $\predR$ using solutions of subinstances (solid) induced by hyperedges of the cover.}
        \end{subfigure}
        \hfill
        \begin{subfigure}[t]{0.35\textwidth}
            \begin{center}     
            \begin{tikzpicture}[scale=0.9]
                \node[pred] (p1) at (0,1 * 0.75) {1}; 
                \node[pred] (2) at (0,2 * 0.75) {2}; 
                \node[pred] (3) at (0,3 * 0.75) {3}; 
                \node[pred] (4) at (0,4 * 0.75) {4}; 
                \node[pred] (5) at (0,5 * 0.75) {5}; 
    
                \node[req] (r1) at (2,1 * 0.75) {1};
                \node[req] (6) at (2,2 * 0.75) {6};
                \node[req] (7) at (2,3 * 0.75) {7};
                \node[req] (8) at (2,4* 0.75) {8};
                \node[req] (9) at (2,5* 0.75) {9};
        
                \node at (2, 0) {$R$};
                \node at (0, 0) {$\predR$};
        
            \begin{pgfonlayer}{background}
                \draw[absent, fill=absentorange!10!,fill opacity=0.5] \convexpath{9,4,5}{9pt};
                \draw[absent, fill=absentorange!10!,fill opacity=0.5] \convexpath{6,3}{9pt};
                \draw[absent, fill=absentorange!10!,fill opacity=0.5] \convexpath{r1,2}{9pt};
            \end{pgfonlayer}
              
            \draw[absent, dotted, line width = 1pt] (r1) -- (p1);
    
            \end{tikzpicture}
            \end{center}
            \caption{Min-cost $k$-hyperedge cover of $\predR$}
        \end{subfigure}
        \caption{}
        \label{fig:st-serve-predicted-requests}
\end{figure}

    Second, 
    assume that there is a major iteration $i$ in which all predicted requests become satisfied. 
    Let~$j \in \{m-1, m\}$, $H_j \subseteq Q_j$ such that~$|H_j| \leq k$ and $\hx \in \predR$.
    In the proof of~\cite[Lemma~3.2]{AzarPT22} it is shown that
    \[
        \onst_j(H_j) \leq \cO(\log k) \cdot \OPT_j(H_j),
    \]  
    where $\OPT_j(H_j)$ is the value of the optimal Steiner Tree for~$H_j$ in which edges 
    that were bought before phase~$j$ have zero cost. Since we assume that $\hx$ has already been served in or before iteration~$i$, a feasible solution for $H_j$ in phase $j$ is given by connecting $H_j$ optimally to $\hx$. Therefore, $\OPT_j(H_j) \leq \gammast(H_j, \hx)$, and thus
    \[    
        \onst_j(H_j) \leq \cO(\log k) \cdot \OPT_j(H_j) \leq \cO(\log k) \cdot \gammast(H_j, \hx). \qedhere
    \]
\end{proof}

\subsection{Online Steiner forest}

In the online Steiner forest problem, a sequence of vertex pairs $R \subseteq V \times V$ (called terminal pairs) of a graph~$G = (V,E)$ is revealed one-by-one. An online algorithm maintains a solution $S$ of selected edges and adds edges to $S$ such that, of every arriving terminal pair $(s,t)$, the vertices $s$ and $t$ are connected only via edges in $S$. Every edge $e \in E$ has an associated cost $c_e \in \mathbb{R}^+$, and the objective of the algorithm is to minimize the total cost of selected edges $\sum_{e \in S} c_e$.

We define the cost $\gammasf(R', (s',t'))$ of a hyperedge~$R' \cup \{(s',t')\}$ as 
the value of the optimal offline Steiner forest for terminal pairs $R'$ where the distance between $s'$ and $t'$ has zero cost.

Azar et al. use a variant of an $\cO(\log \abs{R})$-competitive online algorithm of Berman and Coulston~\cite{BermanC97} in their framework, which we denote by~$\onsf$.

\begin{restatable}{theorem}{hyperedgeSteinerForest}
    The algorithm in~\cite{AzarPT22} for the online Steiner forest problem incurs, for every~$k \geq 1$, cost of at most
    \(
        \cO(1) \cdot \OPT + \cO(k) \cdot \errhyp_k.
    \)
\end{restatable}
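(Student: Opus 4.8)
The plan is to apply \Cref{lemma:azar-bounds} with $\mu(k) = k$, so that it suffices to verify its two conditions for the online Steiner forest framework of~\cite{AzarPT22}. Condition~(a) I would establish exactly as in the Steiner tree case: fix a min-cost $\infty$-hyperedge cover of $\predR$ by $R$, take an optimal Steiner forest for the terminal pairs $R$, and for each hyperedge $\predR' \cup \{(s,t)\}$ of the cover (with $(s,t) \in R$) add the forest realizing $\gammasf(\predR', (s,t))$. Since $(s,t)$ is a terminal pair of $R$, the optimal forest for $R$ already connects $s$ and $t$; hence the added forest, whose definition assumes the $s$--$t$ connection to be free, does in fact connect every pair in $\predR'$. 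As every predicted pair lies in some hyperedge, the resulting forest is feasible for $\predR$, and summing the augmentation costs yields $\OPT(\predR) \leq \OPT + \Gamma_\infty(\predR, R)$.

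For condition~(b), suppose there is a major iteration $i$ in which all predicted pairs become satisfied, and fix $j \in \{m-1, m\}$, a subset $H \subseteq Q_j$ with $\abs{H} \leq k$, and a predicted pair $\hx = (\hs,\htt) \in \predR$. Here I would invoke the internal analysis of the Berman--Coulston-based online algorithm $\onsf$ from~\cite{AzarPT22}, which gives a bound of the form $\onsf_j(H) \leq \cO(k) \cdot \OPT_j(H)$, where $\OPT_j(H)$ denotes the cost of an optimal Steiner forest for the pairs $H$ in which edges purchased before phase $j$ cost zero. This is the Steiner forest counterpart of the $\cO(\log k)$ bound used for Greedy in the Steiner tree proof; the weaker linear factor is what the Berman--Coulston-style charging delivers for Steiner forest. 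Since $\hx$ was satisfied no later than iteration $i$, the edges connecting $\hs$ to $\htt$ have already been bought by the start of phase $j$ and are therefore free, so the forest realizing $\gammasf(H, \hx)$ is a feasible phase-$j$ solution for $H$; hence $\OPT_j(H) \leq \gammasf(H, \hx)$. Chaining the two inequalities gives $\onsf_j(H) \leq \cO(k) \cdot \gammasf(H, \hx)$, which is precisely condition~(b) of \Cref{lemma:azar-bounds} with $\mu(k) = k$.

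With both conditions verified, \Cref{lemma:azar-bounds} immediately yields a total cost of at most $\cO(1) \cdot \OPT + \cO(k) \cdot \errhyp_k$, completing the proof.

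I expect the main obstacle to be condition~(b): one must pin down the exact statement in~\cite{AzarPT22} that bounds the within-phase cost of $\onsf$ on a $k$-element subset of requests (treating pre-phase edges as free), and confirm that their Berman--Coulston-type dual-fitting argument genuinely gives the factor $\cO(k)$ rather than the $\cO(\log k)$ enjoyed by Greedy for Steiner tree. Everything else — the augmentation argument for condition~(a) and the reduction $\OPT_j(H) \leq \gammasf(H,\hx)$ using that $\hx$ is already connected — is a direct adaptation of the Steiner tree proof.
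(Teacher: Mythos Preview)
Your argument for condition~(a) of \Cref{lemma:azar-bounds} matches the paper's proof exactly.

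For condition~(b), however, the paper does \emph{not} go through an intermediate bound of the form $\onsf_j(H) \leq \cO(k)\cdot\OPT_j(H)$, and indeed no such statement is invoked from~\cite{AzarPT22}. Instead the paper uses a much simpler per-request property of the Berman--Coulston-based online algorithm: for every $(s,t)\in H_j$, $\onsf$ pays at most twice the distance from $(s,t)$ to \emph{any} previously satisfied request. Since $(\hs,\htt)$ is already connected before phase~$j$, this gives $\onsf_j(\{(s,t)\}) \leq 2\,\gammasf(\{(s,t)\},(\hs,\htt))$ for each of the at most~$k$ pairs in $H_j$, and summing yields
\[
  \onsf_j(H_j) \;\leq\; 2k \cdot \max_{(s,t)\in H_j}\gammasf(\{(s,t)\},(\hs,\htt)) \;\leq\; 2k \cdot \gammasf(H_j,(\hs,\htt)).
\]
So the linear factor $k$ arises trivially from summing $k$ individually-bounded terms, not from any competitive-ratio-style analysis of $\onsf$ on a $k$-element subinstance.

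This means the obstacle you flagged---locating a ``$\onsf_j(H)\leq \cO(k)\cdot\OPT_j(H)$'' lemma in~\cite{AzarPT22}---does not actually need to be overcome; the paper sidesteps it entirely. Your route through $\OPT_j(H)$ would also succeed if that bound were available, but it is the harder path, and the per-request bound is both what~\cite{AzarPT22} actually provides and all that is required.
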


\begin{proof}
    We show the bound on the competitive ratio using \Cref{lemma:azar-bounds}. It remains to 
    prove the two required properties of \Cref{lemma:azar-bounds}.
    
    First, consider an optimal solution for $R$ and fix
    a min-cost $\infty$-hyperedge cover of $\predR$ by $R$. 
    For every hyperedge $\predR' \cup \{(s,t)\}$ in the cover, 
    we augment the solution for $R$ with the solution of the subinstance 
    which is considered in $\gammasf(\predR', (s,t))$.
    Note that any terminal pair $(\hs,\htt) \in \predR$ which is connected 
    via connecting to $s$ and $t$ in this solution, remains connected,
    because $s$ and $t$ are connected in the solution of $R$.
    Since every client in $\predR$ is contained in at least one hyperedge,
    the final solution satisfies $\predR$ and we conclude 
    \[
        \OPT(\predR) \leq \OPT + \Gamma_\infty(\predR, R). 
    \]
        
    Second, assume that there is a major iteration $i$ in which all predicted requests become satisfied. Let $j \in \{m-1,m\}$, $H_j \subseteq Q_j$ 
    such that~$|H_j| \leq k$ and $(\hs,\htt) \in \predR$.
    Azar et al.~\cite{AzarPT22} note that for every~$(s,t) \in H_j$, $\onsf$ pays at most twice the distance to any previous request.
    Since we are assuming that~$(\hs,\htt)$ has already been connected before phase~$j$, we conclude that
    $\onsf_j(\{(s,t)\}) \leq 2 \cdot \gammasf(\{(s,t)\}, (\hs,\htt))$. 
    Summing over all requests in $H_j$ gives
    \[
     \onsf_j(H_j) \leq 2k \cdot \max_{(s,t) \in H_j} \gammasf(\{(s,t)\}, (\hs,\htt)) 
    \leq 2k \cdot \gammasf(H_j, (\hs,\htt)) = \cO(k) \cdot \gammasf(H_j, (\hs,\htt)). 
     \]
    \end{proof}

\subsection{Online facility location}

In the online facility location problem, a sequence of vertices $R \subseteq V$ (called clients) of a graph $G = (V,E)$ with edge costs $c_e$ arrive one-by-one. An online algorithm has to connect an arriving client $x$ immediately to the closest open facility $v \in V$ with cost equal to the shortest path between $v$ and $x$. The algorithm is also always allowed to open a (closed) facility $v$ by paying opening cost $f_v$. The objective is to minimize the total connection and opening costs.

For a fixed solution of an instance, we denote for a client $x$ the connected facility by $v_x$.

We define the cost $\gammafl(R', x')$ of a hyperedge~$R' \cup \{x'\}$ as 
the cost for opening facility $x'$ and assigning clients $R'$ to it, that is 
\[
    \gammafl(R', x') = f_{x'} + \sum_{r' \in R'} d(x',r').
\]

We further slightly refine the cost of a $k$-hyperedge cover.
For a $k$-hyperedge cover $\cH'$ of $R_1$ by $R_2$ and for $x_2 \in R_2$, let $\cH'(x_2)$ denote the subset 
of $R_1$ covered by $x_2$ in $\cH'$.
We define the cost of $\cH'$ as follows:
\[
    \sum_{h \in \cH'} \gammafl(h) - \sum_{x_2 \in R_2 : \abs{\cH'(x_2)} = 1} f_{x_2}.
\] %
Intuitively, this modification ensures that the error does not pay opening costs for facilities which cover only a single client in $R_1$, and therefore makes it more sensitive.
We still write $\Gamma_k(R_1, R_2)$ to denote the minimal cost of any such cover. 
Azar et al.~\cite{AzarPT22} use in their framework a $\cO(\log \abs{R})$-competitive online algorithm due to Fotakis~\cite{Fotakis07}, which we denote by $\onfl$.

\begin{restatable}{theorem}{hyperedgeFacilityLocation}
    The algorithm in~\cite{AzarPT22} for the online facility location problem incurs, for every~$k \geq 1$, cost of at most
    \(
        \cO(1) \cdot \OPT + \cO(\log k) \cdot \errhyp_k.
    \)
\end{restatable}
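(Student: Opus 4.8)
The plan is to derive this bound as a direct instantiation of \Cref{lemma:azar-bounds} with $\mu(k)=\log k$, using the hyperedge cost $\gammafl$ and the refined facility-location cover cost $\Gamma_k$ defined above. It then suffices to verify the two hypotheses of that lemma for the online facility location setting.

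For condition~(a), I would fix a min-cost $\infty$-hyperedge cover of $\predR$ by $R$ together with an optimal facility-location solution for $R$, and assemble a feasible solution for the predicted instance. For every non-singleton hyperedge $\predR'\cup\{x\}$ open the facility at $x$ and connect all clients of $\predR'$ to it, paying precisely $\gammafl(\predR',x)$. For a singleton hyperedge $\{\hx\}\cup\{x\}$ instead connect $\hx$ to the facility $v_x$ serving $x$ in the optimal solution for $R$, paying $d(v_x,\hx)\le d(v_x,x)+d(x,\hx)$, where $d(x,\hx)$ is exactly the refined hyperedge cost and $d(v_x,x)$ is a connection cost already charged to $\OPT$. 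Since every predicted client is covered, the union of these partial solutions serves $\predR$, and summing the bounds gives $\OPT(\predR)\le \cO(1)\cdot\OPT+\Gamma_\infty(\predR,R)$ --- the form actually used in the proof of \Cref{lemma:azar-bounds}, which tolerates the constant in front of $\OPT$. This is exactly the point where the opening-cost subtraction in the refined definition of $\Gamma_k$ for facility location is needed.

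For condition~(b), I would invoke the facility-location counterpart of the Greedy estimate that Azar et al.\ use for Steiner tree (the analogue of their \cite[Lemma~3.2]{AzarPT22} for $\onfl$, Fotakis's algorithm): namely $\onfl_j(H_j)\le \cO(\log\abs{H_j})\cdot\OPT_j(H_j)$, where $\OPT_j(H_j)$ is the optimal cost of serving $H_j$ in phase $j$ when all facilities opened before phase $j$ have zero cost. In the case treated by \Cref{lemma:azar-bounds}, some major iteration satisfies all of $\predR$, so for $j\in\{m-1,m\}$ the facility that \subr opened for $\hx$ is already present and free; connecting all of $H_j$ to (the location of) $\hx$ is therefore feasible for $\OPT_j(H_j)$, whence $\OPT_j(H_j)\le\gammafl(H_j,\hx)$. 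Using $\abs{H_j}\le k$ we obtain $\onfl_j(H_j)\le\cO(\log k)\cdot\gammafl(H_j,\hx)$, and \Cref{lemma:azar-bounds} then yields the stated $\cO(1)\cdot\OPT+\cO(\log k)\cdot\errhyp_k$.

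The main obstacle I expect is condition~(b): extracting from Fotakis's online algorithm a clean ``$k$ clients with a free anchor facility'' logarithmic guarantee, and making sure that the facility \subr opened for $\hx$ really lets one charge against $\gammafl(H_j,\hx)=f_{\hx}+\sum_{r\in H_j}d(\hx,r)$ rather than paying an extra $\abs{H_j}$ copies of $\hx$'s own connection cost, which would spoil the bound. Any residual slack from the gap between ``a facility near $\hx$'' and ``a facility at $\hx$'' has to be absorbed into the global $\cO(1)\cdot\OPT+\cO(1)\cdot\Gamma_\infty(\predR,R)$ term, so one must track it carefully through the accounting. Verifying condition~(a) is comparatively routine once the singleton hyperedges are handled as above.
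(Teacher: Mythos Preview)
Your treatment of condition~(a) is essentially the paper's argument: split into singleton and non-singleton hyperedges, for singletons reroute $\hx$ to the facility $v_x$ that served $x$ in $\OPT(R)$, and use the refined $\Gamma_\infty$ to avoid paying $f_x$. That part is fine.

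The real issue is condition~(b), and here your plan does not go through as stated. You want $\onfl_j(H_j)\le \cO(\log k)\cdot \OPT_j(H_j)$ and then $\OPT_j(H_j)\le \gammafl(H_j,\hx)$. The second inequality would require a \emph{free} facility at $\hx$, but $\hx$ is a predicted \emph{client}: being ``satisfied'' in iteration~$i$ only means $\hx$ was connected to some open facility in $F_0$, not that a facility was opened at $\hx$. Hence $\OPT_j(H_j)$ is controlled by $d(\hx,F_0)+\sum_{r\in H_j}d(\hx,r)$, not by $\gammafl(H_j,\hx)=f_{\hx}+\sum_{r\in H_j}d(\hx,r)$, and in general $d(\hx,F_0)$ need not be bounded by $f_{\hx}$. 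You flag this gap in your last paragraph, but you do not close it, and it cannot be closed inside the hypotheses of \Cref{lemma:azar-bounds}: the lemma demands the clean per-hyperedge bound $\onfl_j(H)\le \cO(\mu(k))\cdot\gamma(H,\hx)$, with no global additive slack.

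The paper therefore does \emph{not} invoke \Cref{lemma:azar-bounds} for facility location. It reproves the lemma's skeleton inline and, in the phase-$j$ analysis, works directly with Fotakis's amortized cost bound $\onfl_j(x)\le 2\,d(x,F_{i-1})$ together with Fotakis's potential argument (his Corollary~1) applied to the anchor $\hx$ and the set $H_j$; summing harmonically gives $\onfl_j(H_j)\le \cO(\log k)\cdot\gammafl(H_j,\hx)$ \emph{plus} a residual $2\,d(\hx,F_0)$ per hyperedge. Summing these residuals over all hyperedges yields $\sum_{\hx\in\predR} d(\hx,F_0)$, which the paper then bounds by the connection cost of the \subr solution at iteration~$i$, hence by $\cO(1)\cdot\OPT+\cO(\hB_{i-1})\le \cO(1)\cdot\OPT+\cO(1)\cdot\Gamma_\infty(\predR,R)$ via~\eqref{eq:fl-eq1} and~\eqref{eq:fl-eq2}. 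So the ``absorption into the global term'' you anticipate is exactly what happens, but it forces you off the black-box lemma and into an explicit two-term accounting for $\onfl_j$; your proposed route through $\OPT_j(H_j)$ alone does not deliver this.
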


\begin{proof}
    If there exists a major iteration in which all predicted requests become satisfied, let this be the $i$th iteration. Otherwise, let $i$ be the last iteration.
    Iteration~$i$ satisfies the following bounds~\cite[Lemma~2.2]{AzarPT22}:
    \begin{enumerate}
        \item The cost of the first~$i$ iterations is at 
        most~$\cO(1) \cdot \OPT + \cO(\hB_{i-1})$.
        \item If iteration $i$ is not the last iteration, the total cost of the iterations after iteration $i$ is at most~$\cO(1) \cdot \max \{ \on_{m-1}, \on_{m}\}$.
    \end{enumerate}
    
    First, we bound~$\hB_{i-1}$. Let~$i' < i$ be the iteration in which~$\hB_{i-1}$ was set.
    Thus, some predicted requests were not satisfied by \subr in iteration $i'$. By the definition of the algorithm, we conclude that the total cost \subr must pay for satisfying all predicted requests in iteration $i'$, that is $c(\subr(\predR, 0))$, is strictly larger than $3 \nu\hB_{i'} = 3 \nu \hB_{i-1}$.
    By \cite[Lemma 2.1]{AzarPT22}, $\subr(\predR, 0)$ approximates the least 
    expensive solution that satisfies all predicted requests within a factor of $3\nu$. This
    implies that the optimal solution for 
    $\predR$, denoted by $\OPT(\predR)$, which is such a solution, satisfies
    \begin{equation}
        \hB_{i-1} \leq \OPT(\predR).\label{eq:fl-eq1}
    \end{equation}

    We now bound the additional cost in $\OPT(\predR)$ compared to $\OPT(R)$ using the cover error. To this end, 
    consider an optimal solution for instance $R$ and fix a min-cost $\infty$-hyperedge cover~$\cH$
    of $\predR$ by $R$. 
    In the following, we modify the optimal solution for $R$ by both disconnecting some actual requests from their facilities and assigning all requests $\predR$ to existing facilities or new facilities. Note that the latter assignment might not be feasible because clients are not connected to their clostest open facility, but in this case we can clearly compute a feasible assignment without increasing the assignment costs.
    For every hyperedge $\predR' \cup \{x\}$ in $\cH$ we distinguish two cases depending on $\cH(x)$.
    If $\abs{\cH(x)} = 1$, let~$\hx$ be the single request covered by $x$, i.e. $\predR' = \{\hx\}$. In the optimal solution for $R$, we disconnect $x$ from its closest facility $v_x$ and connect
    $\hx$ to $v_x$. Compared to $\OPT(R)$, this increases the cost by at most
    \[
        d(\hx, v_x) - d(x, v_x) \leq d(x, \hx) = \gammafl(\{\hx\}, x) - f_x.
    \]

    If $\abs{\cH(x)} > 1$, we augment the optimal solution for $R$ by the optimal solution considered in 
    the cost function of the hyperedge, i.e. (possibly) open a facility at $x$ and connect all clients in $\predR'$ to $x$. This increases $\OPT(R)$ by at most $\gammafl(\predR', x)$.
    
    Since every request in $\predR$ is covered by at least one
    hyperedge, every client in $\predR$ is assigned to a facility in the final solution and therefore
    \begin{equation}
        \OPT(\predR) \leq \OPT + \Gamma_\infty(\predR, R). \label{eq:fl-eq2}
    \end{equation}
    
    Second, we assume that all predicted requests $\predR$ become satisfied in iteration $i$. We will now bound $\max\{ \onfl_{m-1}, \onfl_{m} \}$. Let $j \in \{m-1,m\}$ and 
    let $Q_j \subseteq R$ be the subset
    of requests considered by the online algorithm in phase~$j$. 
    Fix a min-cost $k$-hyperedge cover $\cH$ of $R$ by $\predR$. 
    Let $F_0$ be the set of open facilities in the beginning of phase~$j$.
    Fix a hyperedge~$R' \cup \{\hx\} \in \cH$ such that $Q_j \cap R' \neq \emptyset$ and let $H_j = Q_j \cap R'$ be the subset of 
    requests of the hyperedge which $\on_j$ serves in phase $j$. Recall that~$|R'| \leq k$. 
    We distinguish two cases. If $\abs{\cH(\hx)} = 1$, let $\{x\} = H_j = R'$ be the request covered by~$\hx \in \predR$ and observe
    that the definition of the amortized costs~\cite[Definition~5.2]{AzarPT22} gives
    \begin{equation*}
        \onfl(x) \leq 2 \cdot d(x, F_0) \leq 2 d(x, \hx) + 2 d(\hx, F_0) = 2 (\gammafl(\{x\}, \hx) - f_{\hx}) + 2 (\hx, F_0). 
    \end{equation*}
    
    If $\abs{\cH(\hx)} > 1$, let $H_j = \{x_{i_1},\ldots,x_{i_{k'}} \}$ 
    be the covered requests indexed
    by their arrival in $R$. Note that $k' \leq k$. 
    For the earliest request in $H_j$ holds $\onfl_j(x_{i_1}) \leq 2 \cdot f_\hx + 2 \cdot d(\hx, x_{i_1})$~\cite[Definition~5.2]{AzarPT22}. 
    For every $2 \leq \ell \leq k'$, %
    let $L_{i_\ell}$ be the set of served requests by $\onfl_j$ until (including) iteration $i_\ell$. Using again~\cite[Definition~5.2]{AzarPT22} gives
    \begin{align*}
    \onfl_j(x_{i_\ell}) &\leq 2 \cdot d(x_{i_\ell}, F_{i_\ell - 1}) \\
    &\leq 2 \cdot d(\hx, F_{i_\ell-1}) + 2 \cdot d(\hx, x_{i_\ell}) \\
    &\leq \frac{2}{\abs{L_{i_\ell-1} \cap H_j}} \left(
    f_\hx + 2 \cdot \sum_{v \in H_j} d(v, \hx) \right) + 2 \cdot d(\hx, x_{i_\ell}) \\
    &\leq \frac{4}{\ell - 1} \cdot \gammafl(H_j,\hx) + 2 \cdot d(\hx, x_{i_\ell}),
    \end{align*}
    where the first inequality is due to the definition of the amortized costs~\cite[Definition~5.2]{AzarPT22}, the second due to the triangle inequality,
    and the third due to an argument by Fotakis~\cite[Corollary~1]{Fotakis07} applied to $\hx$ and $H_j$. 
    Using $k' \leq k$ and summing over all requests in $H_j$ gives
    \begin{align*}
        \onfl_j(H_j) & = \sum_{\ell=1}^{k'} \onfl_j(x_{i_\ell}) \\
        &\leq 2 \cdot f_\hx + \sum_{\ell=2}^{k'} \left( \frac{4}{\ell - 1} \cdot \gammafl(H_j,\hx) \right) + \sum_{\ell=1}^{k'} 2 \cdot d(\hx, x_{i_\ell}) \\
        &\leq \sum_{\ell=2}^{k'} \left( \frac{4}{\ell - 1} \cdot \gammafl(H_j,\hx) \right) + 2 \cdot \gammafl(H_j,\hx) \\
        &\leq \cO(\log k) \cdot \gammafl(H_j,\hx). 
    \end{align*}
    
    Since every request in $Q_j$ is covered by some hyperedge, summing over all hyperedges in $\cH$ gives
    \begin{equation}
        \onfl_j(Q_j) \leq \cO(\log k) \cdot \Gamma_k(R, \predR) 
        + \sum_{\hx \in \predR} d(\hx, F_0). \label{eq:fl-online-cover}
    \end{equation}
    
    Since $F_0$ contains all facilities opened by a solution of \subr in iteration~$i$,
    $\sum_{\hx \in \predR} d(\hx, F_0)$ is at most the connection cost of this solution.
    Using~\cite[Lemma~2.2]{AzarPT22} concludes that the cost of this solution is bounded by
    \[
    \cO(1) \cdot \OPT + \cO(\hB_{i-1}).
    \]
    Using~\eqref{eq:fl-eq1} and~\eqref{eq:fl-eq2} therefore implies
    \[
        \sum_{\hx \in \predR} d(\hx, F_0) \leq \cO(1) \cdot \OPT + \cO(1) \cdot \Gamma_\infty(\predR, R).
    \]

    Thus, the bound in \eqref{eq:fl-online-cover} is at most
    \begin{equation*}        
        \onfl_j(Q_j) \leq \cO(\log k) \cdot \Gamma_k(R, \predR) 
        + \cO(\OPT) + \cO(1) \cdot \Gamma_\infty(\predR, R) \leq \cO(\OPT) + \cO(\log k) \cdot \errhyp_k. 
    \end{equation*}

    We finally conclude the statement by adding up both bounds, that is
    \[
         \cO(1) \cdot \OPT + \cO(1) \cdot \Gamma_\infty(\predR, R) +\cO(1) \cdot \OPT + \cO(\log k) \cdot \errhyp_k
        = \cO(1) \cdot \OPT + \cO(\log k ) \cdot \errhyp_k. \qedhere
    \]

\end{proof}

\subsection{Soft-capacitated facility location}

In the soft-capacitated facility location problems, every facility on a vertex~$v$ has a given capacity $\beta_v$ on the number of clients that can be connected to it. Additionally, a solution is allowed to open multiple facilities, each with cost $f_v$, at a vertex~$v$.

There exists a folklore reduction from the soft-capacitated variant to the standard facility location problem. For details we refer to~\cite{AzarPT22}.

This reduction also implies a reduction between costs of a hyperedge in the capacitated and non-capacitated case, as they are defined by values of solutions of subinstances.
We therefore conclude that our result for the non-capacitated facility location problem naturally extends to the soft-capacitated problem.

\begin{theorem}
    The algorithm in~\cite{AzarPT22} for the online soft-capacitated facility location problem incurs, for every~$k \geq 1$, cost of at most
    \(
        \cO(1) \cdot \OPT + \cO(\log k) \cdot \errhyp_k.
    \)
\end{theorem}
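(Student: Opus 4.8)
The plan is to reduce the soft-capacitated problem to the (uncapacitated) online facility location problem via the folklore reduction used by Azar et al.~\cite{AzarPT22}, and then to transport our already-proved facility location bound along this reduction. First I would recall the reduction in enough detail to see what it does to costs: a vertex $v$ with opening cost $f_v$ and capacity $\beta_v$ is replaced by a small gadget that lets a solution open any number of ``copies'' of the facility at $v$, each of cost $\Theta(f_v)$ and each serving at most $\beta_v$ clients, by attaching auxiliary vertices and edges whose weights encode the per-copy opening cost. The key structural property is that this map $\phi$ is cost-preserving up to constant factors in both directions: a feasible solution of the capacitated instance of cost $c$ yields a feasible solution of $\phi$ of cost $\Theta(c)$ and vice versa. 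In particular $\OPT$ of the capacitated instance and $\OPT$ of $\phi$ agree up to a constant factor, and --- this is how the algorithm is defined --- running the Azar et al.\ framework on the capacitated problem \emph{means} running it on $\phi$.

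Second, I would argue that the cover error is preserved by the reduction up to constant factors. This is essentially automatic, because the cover error is built solely from the values of optimal solutions of subinstances: the cost $\gammafl(R',x')$ of a hyperedge $R'\cup\{x'\}$ is the cost of an optimal solution for the subinstance induced by $R'$ with respect to $x'$, and $\phi$ maps this subinstance to the corresponding uncapacitated subinstance while preserving its optimal value up to a constant factor. Hence a min-cost $k$-hyperedge cover in the capacitated problem maps to one of comparable cost in $\phi$, and conversely, so $\Gamma_k$ changes only by a constant factor under the reduction, for both finite $k$ and $k=\infty$; consequently the cover error $\errhyp_k$ of the capacitated instance and of its image $\phi$ agree up to a constant factor.

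Third, I would simply chain these two equivalences with the facility location bound already established (the restated facility location theorem / \Cref{thm:steiner-tree}): running the framework on $\phi$ costs at most $\cO(1)\cdot\OPT + \cO(\log k)\cdot \errhyp_k$ measured on $\phi$, and applying cost-preservation in the reverse direction turns this into $\cO(1)\cdot\OPT + \cO(\log k)\cdot\errhyp_k$ for the original soft-capacitated instance. Since these estimates hold simultaneously for every $k \geq 1$, the theorem follows.

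The only point that genuinely needs care --- and the place I expect the argument to be most delicate --- is the interaction of the reduction with the \emph{refined} cover error for facility location, namely the variant that subtracts $f_{x'}$ from the cost of a hyperedge whenever $x'$ covers a single request. One has to check that under $\phi$ the opening cost ``saved'' in the uncapacitated refined error really corresponds to a genuine saving in the capacitated setting, i.e.\ that a facility copy in $\phi$ serving exactly one of the covered clients carries opening cost $\Theta(f_v)$ and that no hidden gadget cost is lost or double-counted. I would resolve this by tracking exactly which gadget edges are paid for such a singleton-covering copy and confirming the saved quantity is $\Theta(f_v)$ on both sides; beyond that, the argument is a routine composition of the reduction with the uncapacitated theorem, and I would keep the write-up short, deferring the mechanics of the reduction itself to~\cite{AzarPT22}.
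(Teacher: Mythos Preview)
Your proposal is correct and follows essentially the same route as the paper: reduce to the uncapacitated case via the folklore reduction from~\cite{AzarPT22}, observe that hyperedge costs---being optimal values of subinstances---are preserved up to constants, and conclude that the uncapacitated bound transfers. The paper's own proof is in fact only two sentences to this effect; your write-up is considerably more detailed, and your flagged concern about the refined (singleton-saving) error is a subtlety the paper does not explicitly address.
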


\section{Experiments}\label{app:experiments}

We performed various empirical experiments on real-world instances that demonstrate the benefits of using our algorithm over classic online algorithms in relevant real-world scenarios.
The source code includes instructions to reproduce them. We ran the experiments on an AMD Ryzen 9 3900X processor under Ubuntu~20.04.

\paragraph{Setup}

As metric space we use the shortest path completion of the road network of Manhattan. This network is composed of 4583 nodes, representing intersections and impasses, and 8130 edges, representing street segments. We construct this network with OSMnx~\cite{boeing17osmnx} from OpenStreetMap data~\cite{osm}. 
We generate real-world \oltsp instances by reconstructing taxi requests in Manhattan. The NYC Taxi \& Limousine Commission offers public datasets of taxi trips online~\footnote{\url{https://www1.nyc.gov/site/tlc/about/tlc-trip-record-data.page}}. Specifically, we use the Yellow Taxi Trip Records from January 2021. %
We extract instances of a fixed length and use the pick-up location as request position. Since the dataset only determines a certain taxi zone as location, we randomly select a node of the city road network that is contained in this area.
For the release date of a request, use the relative pick-up time-stamp and scale it to simulate the unit-speed behavior. We assume an average speed of 100 meters per minute.
To generate synthetic predictions, we use independently sampled Gaussian noise of a given standard deviation~$\sigma$. To compute predicted release dates, we add such noise to the actual release dates, while for predicted locations we randomly search a node that has a distance to the actual requested node of approximately the given noise.
We consider three different prediction settings: (i) noise in release dates and locations, (ii) noise only in locations, and (iii) having a randomly selected fraction of a certain size of the actual instance as prediction.

\paragraph{Algorithms} 
We compare \ALGOSMART\ with the classic online algorithms \replan, \ignore and \smart. 
We use the polynomial implementation of \smart as described in~\cite{AscheuerKR00} (this variant uses a different waiting parameter $\theta = \frac{1 + \sqrt{13}}{2} \approx 2.303$ internally).
\replan greedily recomputes and follows a shortest tour from the current server position to the origin through all unserved requests whenever a new request arrives. \ignore computes and then follows an optimal tour through all unserved requests whenever the server is in the origin, but ignores arriving requests while it is following a tour. 
The implementation of \PR in \ALGOSMART ignores predicted requests which are already known to be absent.
All algorithms compute repeatedly optimal or approximate TSP tours, which is known to be NP-hard~\cite{LawlerLRS1985-TSPbook}. In our implementation, 
Christofides' algorithm~\cite{Chr76,serdyukov78} computes a heuristic for TSP tours in all algorithms, which itself uses Blossom V~\cite{kolmogorov09} to compute a min-cost perfect matching. Christofides' algorithm does not respect release dates. To approximate the optimal solution of a (predicted) \oltsp instance, we compute an approximate tour through all requests of the instance using Christofides' algorithm and then follow this tour greedily, i.e.\ wait at every point until the latest request at this point appeared. %

\paragraph{Results}
For every prediction setting, we simulate all algorithms on every instance for every prediction parameter. We compute in every run its empirical competitive ratio, i.e.\ the average ratio between the algorithm's makespan %
and the approximated value of the optimal makespan. %
Further, we report error bars that denote the 95\% confidence interval 
over all instances.

The results for the prediction setting (i) with noise in the release dates and in the locations are visualized in \Cref{fig:res1}. Although \replan performs really well on these real-world instances, \ALGOSMART with $\alpha = 0.0$ and~$\alpha=0.1$ still improves upon \replan while noise is small. For large noise, \ALGOSMART with $\alpha = 0.0$ has an unbounded competitive ratio as expected, as it does not ensure robustness by executing a competitive online algorithm in Phase~(i).
For a noise parameter $\sigma \approx 10^5$, \ALGOSMART achieves its worst performance. Larger noise ($\sigma \geq 10^6$) mainly results in late predicted release dates, since the largest path between two nodes in Manhattan is only $\approx 25 \cdot 10^5$ meters long. Therefore, $\predC$ still grows, and at some point \ALGOSMART simply serves the whole instance in the first phase, which lets its performance coincide with the performance of \smart.

\begin{figure}    
    \begin{subfigure}{0.5\textwidth}
        \includegraphics[width=\textwidth]{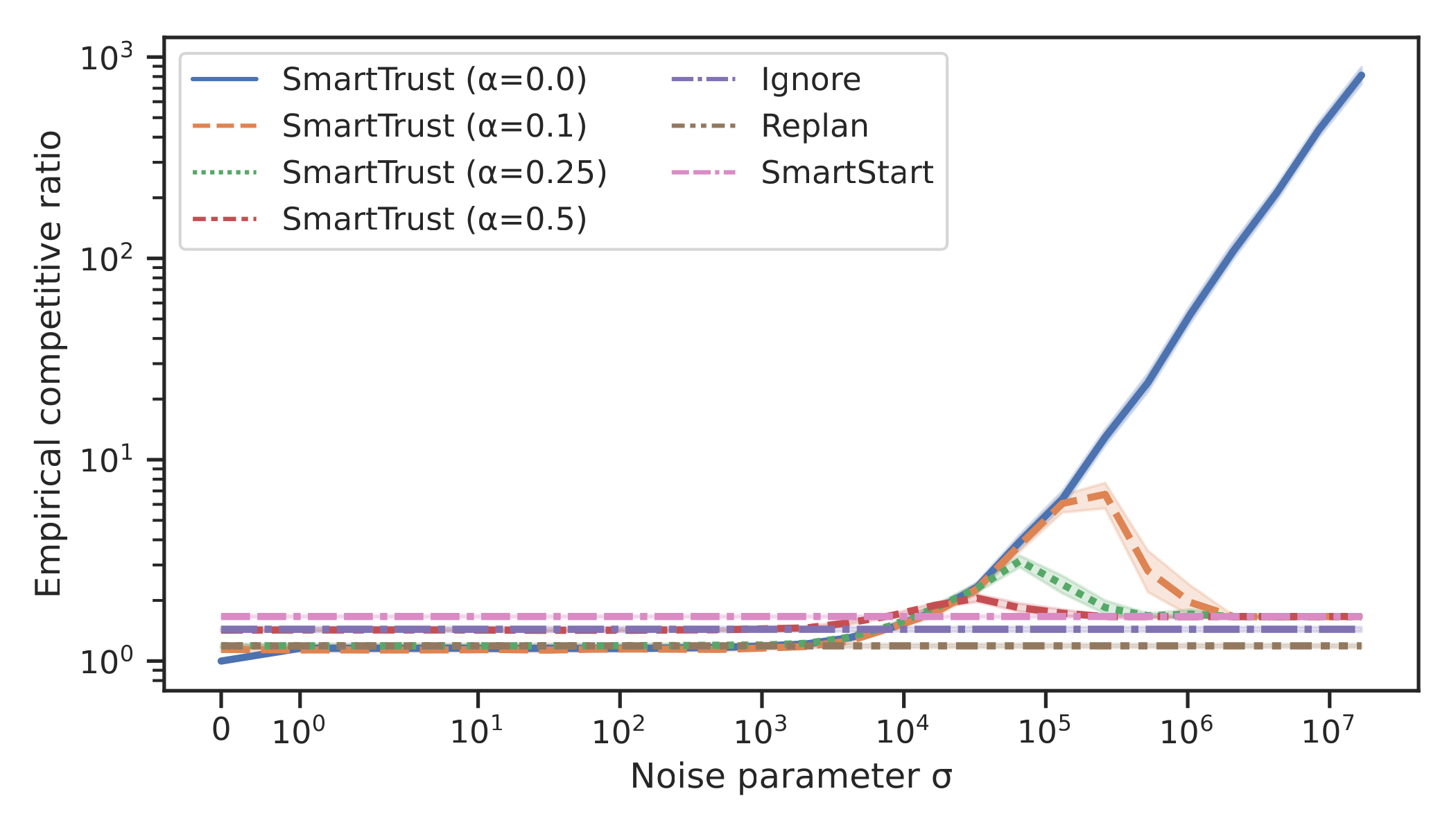}
        \caption{Full picture}
    \end{subfigure}
    \begin{subfigure}{0.5\textwidth}
        \includegraphics[width=\textwidth]{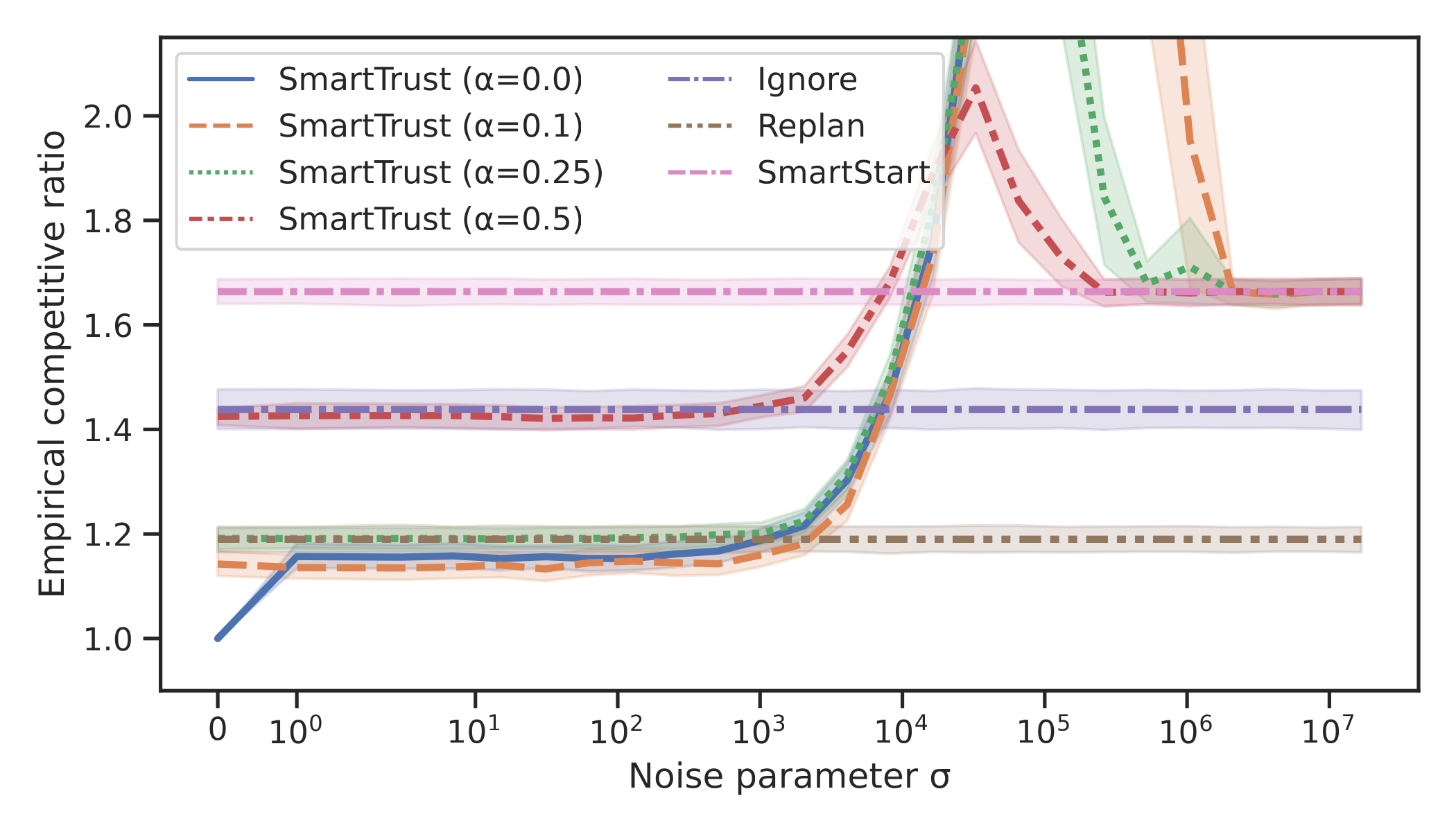}
        \caption{Zoomed-in}
    \end{subfigure}
        \caption{Results when adding noise to the request locations and release dates. (100 instances with 10 requests each)}
        \label{fig:res1}
    \end{figure}

It turns out that mainly erroneous release dates cause this performance spike. Indeed, in the prediction setting (ii) with correct release dates, the results (\Cref{fig:noise-in-locations}) indicate a better performance of \ALGOSMART. Specifically, for $\alpha = 0.1$ it completely dominates \replan over the whole range of reasonable expected prediction quality (by observing that due to the diameter of our network, worse noise parameters would have no significant impact).
Intuitively, in this setting \ALGOSMART with $\alpha > 0$ only slightly pre-moves to predicted locations, but notices absent requests early enough and does not wait until they are predicted to arrive.

The third prediction setting (iii) is a scenario where a part of the actual instance is predicted correctly, and no further erroneous requests are given. One can think of this setting as a taxi company that knows some fixed pick-up requests of the coming day upfront, but still has uncertainty about short-term requests. On the one side, we can observe in \Cref{fig:partial-instance} that for having no predicted requests, the performance of \ALGOSMART equals the performance of \replan, as expected. On the other side, if $\alpha$ is large, we serve a large part of the actual requests in Phase~(i), and achieve a performance close to \smart. For small values of $\alpha$ and some known requests, \ALGOSMART clearly outperforms \replan.

\begin{figure}    
    \begin{subfigure}{0.5\textwidth}
        \includegraphics[width=\textwidth]{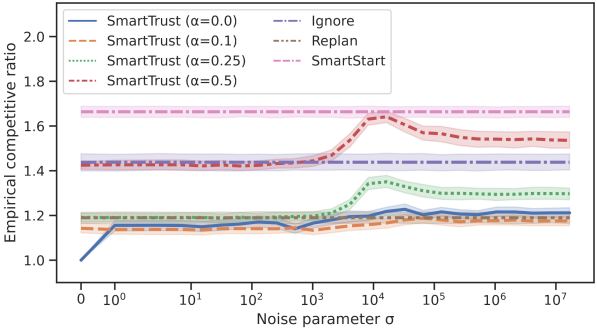}
        \caption{Noise only in request locations}
        \label{fig:noise-in-locations}
    \end{subfigure}
    \begin{subfigure}{0.5\textwidth}
        \includegraphics[width=\textwidth]{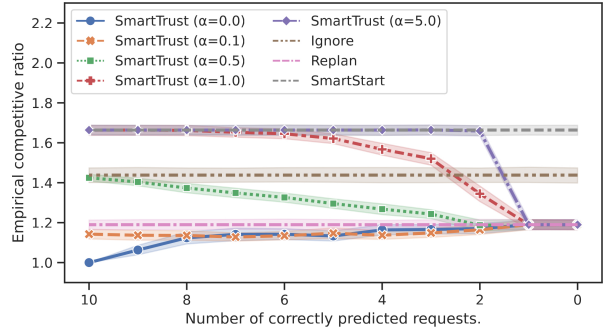}
        \caption{Partial instance predicted correctly}
        \label{fig:partial-instance}
    \end{subfigure}
\caption{Results for prediction settings where parts of the actual instance are predicted correctly. (100 instances with 10 requests each)}
\end{figure}

\end{document}